\documentclass[12pt, draftclsnofoot, onecolumn]{IEEEtran}
\usepackage{lipsum}
\usepackage{amsmath,epsfig,comment}
\usepackage{amsthm}
\usepackage{tcolorbox}
\usepackage{amssymb}
\usepackage{multirow}
\usepackage{mathrsfs}
\usepackage{amsmath}
\usepackage{pifont}
\usepackage{cases} %onecolumn,
\usepackage{amssymb,amsmath,cite}
\usepackage{epsfig}
\usepackage{color}
\usepackage{bm}
\usepackage{graphicx,subfigure}
\usepackage{algorithm}
\usepackage{algorithmic}
\usepackage{multirow}
\usepackage{soul}
\usepackage{multicol}
\usepackage{cuted}
\usepackage{extarrows}% http://ctan.org/pkg/extarrows
\usepackage{hyperref}
\hypersetup{
	colorlinks=true,
	linkcolor=black,
	filecolor=magenta,
	urlcolor=cyan,
	citecolor=blue
}

\newcommand\rsx[1]{\left.{#1}\vphantom{\Big|}\right|}

\def\red{\color{black}}

\def\s{\mathbf{s}}

\def\I{\mathcal{I}}
\def\RR{\mathcal{R}}

\def\bU{\mathbf{U}}
\def\bA{\mathbf{A}}
\def\bT{\mathbf{T}}
\def\bX{\mathbf{X}}
\def\bV{\mathbf{V}}
\def\bQ{\mathbf{Q}}

\def\n{\mathbf{n}}
\def\sg{\mathbf{g}}
\def\bD{\mathbf{D}}
\def\bR{\mathbf{R}}
\def\sz{\mathbf{z}}
\def\bs{\mathbf{s}}

\def\I{\mathcal{I}}
\def\R{\mathbb{R}}
\def\C{\mathbb{C}}
\def\H{\mathbf{H}}

\def\bU{\mathbf{U}}
\def\bP{\mathbf{P}}

\def\y{\mathbf{y}}

\def\x{\mathbf{x}}

\def\bB{\mathbf{B}}

\def\bM{\mathbf{M}}

\def\v{\mathbf{v}}

\def\sz{\mathsf{z}}
\def\sZ{\mathsf{Z}}
\def\sg{\mathsf{g}}

\def\qce{q_{\text{\tiny CE}}}
\newtheorem{theorem}{Theorem}

\newtheorem{lemma}{Lemma}
\newtheorem{remark}{Remark}
\newtheorem{definition}{Definition}

\newtheorem{corollary}{Corollary}
\newtheorem{assumption}{Assumption}

\newtheoremstyle{noparens}%
  {}{}%
  {\itshape}{}%
  {\bfseries}{.}%
  { }%
  {\thmname{#1}\thmnumber{ #2}\mdseries\thmnote{ #3}}

\theoremstyle{noparens}

\date{today} 
%\author{Zheyu Wu}
\begin{document}
%
% paper title
% Titles are generally capitalized except for words such as a, an, and, as,
% at, but, by, for, in, nor, of, on, or, the, to and up, which are usually
% not capitalized unless they are the first or last word of the title.
% Linebreaks \\ can be used within to get better formatting as desired.
% Do not put math or special symbols in the title.
\title{Asymptotic SEP Analysis and Optimization of Linear-Quantized Precoding in Massive MIMO Systems}
%
%
% author names and IEEE memberships
% note positions of commas and nonbreaking spaces ( ~ ) LaTeX will not break
% a structure at a ~ so this keeps an author's name from being broken across
% two lines.
% use \thanks{} to gain access to the first footnote area
% a separate \thanks must be used for each paragraph as LaTeX2e's \thanks
% was not built to handle multiple paragraphs
%

%\author{Michael~Shell,~\IEEEmembership{Member,~IEEE,}
%    John~Doe,~\IEEEmembership{Fellow,~OSA,}
  %  and~Jane~Doe,~\IEEEmembership{Life~Fellow,~IEEE}}% <-this % stops a space
 \author{\IEEEauthorblockN{Zheyu Wu, Junjie Ma,  Ya-Feng Liu, and A. Lee Swindlehurst}
  	\thanks{
		Z. Wu, J. Ma, {and Y.-F. Liu} are with the State Key Laboratory of Scientific and Engineering Computing, Institute of Computational Mathematics and Scientific/Engineering Computing, Academy of Mathematics and Systems Science, Chinese Academy of Sciences, Beijing 100190, China (e-mail: \{wuzy, majunjie, yafliu\}@lsec.cc.ac.cn). A. L. Swindlehurst is with the Center for Pervasive Communications and Computing, University of California, Irvine, CA 92697, USA (e-mail: swindle@uci.edu).}		
  }
\maketitle

\begin{abstract}
%Quantized constant envelope (QCE) precoding, where each antenna is restricted to transmit signals with constant magnitude and discrete phases, has shown great potential in reducing hardware cost and energy consumption of a massive MIMO system. 
A promising approach to deal with the high hardware cost and energy consumption of massive MIMO {\red{transmitters}} is to use low-resolution digital-to-analog converters (DACs) at each antenna element. This leads to a {\red transmission scheme where} the transmitted signals are restricted to {\color{black}a finite set of voltage levels.} 
This paper is concerned with the analysis and optimization of a low-cost quantized precoding strategy, {\red referred to as}  linear-quantized precoding, for a downlink massive MIMO system under Rayleigh fading. In linear-quantized precoding, the signals are first processed by a linear precoding matrix and subsequently quantized component-wise {\red by the DAC}. In this paper, we analyze both the signal-to-interference-plus-noise ratio (SINR) and the symbol error probability (SEP) performances of such linear-quantized precoding schemes in an asymptotic framework where the number of transmit antennas and the number of users {\color{black} grow large} with a fixed ratio. Our results provide a rigorous justification for the heuristic {\color{black} arguments based on the Bussgang decomposition that are} commonly used in prior works. Based on the asymptotic analysis, we further derive the optimal precoder within a class of linear-quantized precoders that includes several popular precoders as special cases. Our numerical results demonstrate the excellent accuracy of the asymptotic analysis for finite systems and the optimality of the derived precoder.%significant performance gain of the optimal precoder over existing precoders.
\end{abstract}

\begin{IEEEkeywords}
Massive MIMO, {\red quantized precoding},  linear precoding, asymptotic analysis, Haar random matrix, random matrix theory.
\end{IEEEkeywords}

\section{Introduction}
\IEEEPARstart{M}{assive} multiple-input multiple-output (MIMO)  is a key technology for 5G wireless communication systems. By equipping the base station (BS) with {\color{black}many} antennas, massive MIMO can significantly improve the channel capacity, energy efficiency, and spectral efficiency of wireless communication systems \cite{massivemimo2,massivemimo1,massivemimo3}. Despite the great potential of massive MIMO systems,
 high power consumption and hardware cost are serious practical challenges for their commercial deployment.

 One of the main power-hungry components in {\red a} massive MIMO {\red system} is the digital-to-analog-converter (DAC) \cite{book:DAC}, {\red the number of which} scales {\red linearly} with the number of antennas at the BS. {\color{black}  To reduce circuit complexity and power consumption, low-resolution DACs have been considered for massive MIMO systems \cite{SQUID,WFQ,WSR,frequency,scalar1,scalar2,scalar3}. } Unlike conventional precoding schemes where {\color{black} at the symbol sampling points} the transmitted signals can be freely chosen from a continuous set, only a small finite set of signals can be transmitted to convey information when low-resolution DACs are employed.  The analysis and design of quantized precoding with the use of low-resolution DACs has become an active research topics in recent years \cite{SQUID,WFQ,WSR,frequency,scalar1,scalar2,scalar3}. 

{\red Power ampifiers (PAs) are} another main source of power consumption in massive MIMO systems. To achieve {\red the} highest power efficiency, the PAs need to operate close to saturation, but {\red for continuous-valued signals} this incurs nonlinear distortion and causes difficulties for signals with high peak-to-average power ratio (PAPR) \cite{PAbook}. A popular way to handle such difficulty is to restrict the transmitted signal from each antenna to have the same amplitude \cite{CE2,CE,CE3}, which minimizes the PAPR and enables the employment of the most efficient and cheapest PAs.   Combining such {\red a} constant envelope (CE) constraint with the use of low-resolution DACs motivates a new quantized precoding scheme,  quantized constant envelope (QCE) precoding, where at the {\color{black}symbol sampling points the transmitted signals} are restricted to have a fixed amplitude and their phases are limited to finite values. %can only be selected from a finite set. %where the transmitted signal from each antenna can only be selected from a finite set of symbols with constant magnitude. 
The QCE precoding scheme   has attracted {\red significant research interest} \cite{diversity,QCEMSE1,QCEMSE2,QCEMSE3,QCECI,GEMM,QCE_conference} as it combines the advantage of using low-resolution DACs and energy-efficient PAs. 
%hence significantly reduce the hardware cost and power consumption of massive MIMO systems.   
In particular, as an extreme case of both QCE and traditional quantized precoding, one-bit precoding  (where one-bit DACs are employed) has been widely and extensively studied \cite{MMSE2,reconsider,MFrate,ZF,ZF2,ZF3,C3PO2, CImodel, PBB,sep2,journal}.  {\red The power efficiency gain can be several dB, and in many cases is sufficient to overcome the loss in fidelity due to the coarse quantization.}

We note here that traditional quantized precoding (without the CE constraint) and QCE precoding are both realized by the use of low-resolution DACs and have a common feature that the transmitted signals are only allowed to be selected from a finite  set.  In the following {\red discussion}, we will refer to them (and possibly other precoding schemes with the finite transmission set feature) collectively as quantized precoding. 
 Existing quantized precoding schemes can be broadly categorized into two classes: \textit{linear-quantized} precoding and \textit{nonlinear} precoding. A linear-quantized precoding scheme\footnote{Note that the overall operation of a linear-quantized scheme is not linear due to the presence of the quantization step, but for convenience we will use this term throughout this paper.} simply quantizes the output of a  linear precoder  \cite{SQUID,WFQ,WSR,frequency,MMSE2,reconsider,ZF,ZF2,ZF3,MFrate}. In contrast, nonlinear precoders do not have this simple structure and are typically obtained by solving appropriate optimization problems \cite{QCEMSE1,QCEMSE2,QCEMSE3,QCECI,GEMM,QCE_conference,C3PO2,CImodel,PBB,sep2,journal,scalar1,scalar2,scalar3,CGMT}.

In this paper, we focus on the analysis and optimization of linear-quantized precoding schemes, which are arguably more practical than the computationally expensive nonlinear schemes. Unlike existing works that focus on either traditional quantized precoding or QCE precoding,  this paper deals with these two types of quantized precoding in a unified framework. 
In what follows, we first give a brief review of related works and then present the main contributions of this paper.

 \subsection{Related Work}\label{section1}

\subsubsection{Linear-quantized precoding}
A direct approach to obtain linear-quantized precoders is to quantize the output of classical linear precoders such as the matched filter (MF) and zero-forcing (ZF) precoders \cite{SQUID}. However, this approach does not take into account the effect of quantization and thus yields precoders that, although simple, are suboptimal in the context of quantized  precoding. Noting this, the authors in \cite{WFQ} characterized the mean square error (MSE) between the desired symbol and the received signal with the presence of low-resolution DACs and proposed {\red the} quantized transmit Wiener filter (TxWFQ) {\red precoder} that minimizes the MSE. 
%Noting this, the authors in \cite{WFQ} studied the impact of low-resolution DACs and proposed a minimum mean squared error (MMSE) precoder called quanA minimum mean squared error (MMSE) based precoder called quantized transmit Wiener Filter (TxWFQ) was proposed in \cite{WFQ} for the special one-bit precoding case. 
Under the same setup as \cite{WFQ}, the authors of \cite{WSR} proposed a gradient-based approach to maximize the weighted sum rate of the system. For the one-bit case, the authors  in \cite{MMSE2} proposed a minimum mean square error (MMSE) based precoder. Later, a  higher-rank linear precoder was designed in \cite{reconsider} for a downlink one-bit massive MIMO system, showing superior performance to traditional linear-quantized precoders {\red of} channel rank. We remark here that all existing works on the design of linear-quantized precoding focus on traditional quantized precoding  or the special one-bit case.   To the best of our knowledge, no existing work considers the design of linear-quantized {\red precoding} in the QCE context. 

There are also some works focusing on the performance analysis of linear-quantized schemes, e.g., \cite{SQUID,frequency} for traditional quantized precoding, \cite{MFrate,ZF, ZF2, ZF3} for one-bit precoding, and \cite{CEQ,diversity} for QCE precoding. Specifically,  %technique was applied to deal with the nonlinearity introduced by low-resolution DACs. 
%Specifically, the Bussgang decomposition transforms the output signal after low-resolution DACs into the sum of a  linear signal term and an uncorrelated distortion term. 
 the authors in \cite{SQUID} and \cite{frequency}  derived  lower-bounds on the downlink achievable rates of linear-quantized precoding for a flat-fading and a frequency-selective channel, respectively.   For a one-bit massive MIMO system, \cite{MFrate}  derived  a lower bound on the  achievable rate for MF precoding with estimated channel state information (CSI).  The performance of {\red the} one-bit ZF precoder was investigated in \cite{ZF}, in which a closed-form expression of the symbol error probability (SEP) was derived {\red in} the asymptotic setting where the numbers of transmit antennas and users both tend to infinity with a fixed ratio. The same problem was considered in \cite{ZF2} and \cite{ZF3}, where the input-output correlation relationship was expanded up to third-order instead of first-order as in \cite{ZF}, and the derived SEP expression shows better accuracy than that in \cite{ZF} when the number of users is small  relative to the number of transmit antennas. 

{\color{black} A widely used technique for analyzing the performances of linear-quantized precoding schemes is the \textit{Bussgang decomposition} \cite{Bussgang},  which decomposes a non-linear function of a Gaussian signal as the sum of a linear signal term and an uncorrelated distortion term. We remark that although the Bussgang decomposition is per se rigorous, it is often used in conjunction with various {heuristics} to analyze the performance of linear-quantized precoding. For instance, the distortion term is often treated as a random variable that is  independent of all other random variables in the system. %First, it is sometimes applied to input signals that are not Gaussian. Second, the uncorrelated distortion term is usually treated as an effective independent Gaussian noise. 
Although there is strong numerical evidence that the heuristic treatments can yield accurate predictions (e.g., SEP performance) for large systems \cite{ZF,ZF2,ZF3}, a rigorous analysis of such heuristics in the context of linear-quantized precoding is still lacking. Please refer to Section II-C for a detailed discussion of the Bussgang decomposition technique. }

%The technique applied in the above-mentioned analyses of traditional quantized precoding and one-bit precoding is the  \textit{Bussgang decomposition} \cite{Bussgang}, with which the output signal after {\red the} low-resolution DACs can be expressed as the sum of a linear signal term and an uncorrelated distortion term. We remark that analyses {\red of linear-quantized precoding} based on Bussgang decomposition arguments are usually \textit{heuristic}. The main intuition of the Bussgang {\red based analyses} is to treat the uncorrelated distortion term as an effective independent Gaussian noise. Although there {\red is} strong numerical evidence that this heuristic treatment can yield accurate predictions (e.g., SEP performance) for large systems \cite{ZF,ZF2,ZF3}, a rigorous analysis of such heuristics is still lacking.   (Please refer to Section \ref{bussgangsection} for a detailed discussion {\red of} the Bussgang decomposition technique.) 
 
 Beyond the analyses of traditional quantized precoding and one-bit precoding, there are also some preliminary attempts to analyze the performance of QCE precoding. Specifically, \cite{CEQ} studied the statistical properties of the CE quantizer (which models the overall operation of low-resolution DACs and {\red the} CE constraint and generates signals satisfying the QCE constraint) and derived closed-form expressions of the cross-correlation factors between the input and output signals of the CE quantizer. Very recently, the authors in \cite{diversity} considered QCE precoding for a multiple-input single-output (MISO) system and derived the diversity order of {\red the }MF precoder, which characterized how fast the system SEP tends to zero as the signal-to-noise ratio (SNR) grows \cite{diversity2}.

\subsubsection{Nonlinear precoding}
 Besides linear-quantized precoding schemes, various nonlinear precoders based on different criteria have been proposed in recent years, see, e.g., \cite{scalar1,scalar2,scalar3} for traditional quantized precoding and \cite{QCEMSE1, QCEMSE2, QCEMSE3, QCECI, GEMM,QCE_conference} for QCE precoding. There are also  many algorithms designed specifically for one-bit precoding, see, e.g., \cite{C3PO2, CImodel, PBB,sep2,journal}. Nonlinear precoding schemes (especially symbol-level nonlinear precoders) usually have better symbol error rate (SER) {\red performance} than their linear counterparts but {\red their computational complexity is} much higher.
 
  Although there have been {\red substantial progress} in the design of nonlinear precoding schemes,  {\red the performance analysis of nonlinear precoding remains an open problem.} This is because nonlinear precoders are typically solutions to complicated optimization problems {\red without} closed-form expressions. In addition, the discrete nature of the transmitted signals in quantized precoding {\red leads to} discrete constraints in the corresponding optimization problem, which further complicates the analysis. %poses great challenge to the analysis. 
 Analyzing and designing nonlinear precoding schemes are beyond the scope of this paper and can be considered as a future work.
 
%  Furthermore, performance analysis (in a similar fashion as this paper) can be more challenging. A remarkable exception is \cite{Bereyhi2019} which analyzes the performances of generalized leastsquare-error (GLSE) precoders using the the non-rigorous replica method. 

% In a short summary, the research on general QCE precoding is not sufficient. In particular, there is a research gap in the performance analysis and linear precoder design for QCE transmission.
%To conclude, there is a theoretical gap for the analysis of QCE precoding. In addition, few works consider precoding design in the general QCE context, and the only existing works all consider nonlinear precoding schemes.
\subsection{Our Contributions}
In this paper, we analyze the performance of a broad class of linear-quantized precoding schemes for a downlink massive MIMO system. Our results rigorously justify and substantially generalize existing results for MF and ZF based schemes derived using heuristic Bussgang decomposition arguments. %{\color{black}In fact,  the proposed analytical framework can deal with general quantization (which includes CE quantization as a special case). To give a wider discussion, we will start from the most general case and land on our considered QCE precoding.} %
The main contributions are summarized as follows.
\begin{enumerate}
\item \emph{Statistically equivalent model}:
By exploiting a recursive characterization of the Haar random matrix \cite{HD}, we derive a model that is statistically equivalent to the original system model. The statistically equivalent model is close to a ``signal plus independent Gaussian noise'' form and is more amenable to analysis. %, which visualizes the impact of phase quantization. 
This step is non-asymptotic and the technique we use {\red may} be applicable to other problems as well.

\item \emph{Asymptotic analysis}:
We further consider the large system limit as in \cite{ZF} and show that in the asymptotic regime, the statistically equivalent model is exactly in a ``signal plus independent Gaussian noise'' form.  This provides a rigorous justification for the heuristic analyzes based on {\red the} Bussgang decomposition. %To simplify the statistically equivalent model, we further consider large system limit as in \cite{ZF} and  propose an asymptotic model. 
We also prove that the signal-to-interference-plus-noise ratio (SINR) and SEP of the original model converge to those of the asymptotic model. %Therefore, the asymptotic SINR and SEP, which have closed-form expressions, can serve as approximations to those of the original model.
 Simulations show that the asymptotic results  are  accurate %and the precoding factors are optimal
  for realistic systems with finite dimensions.
\item \emph{Optimal linear-quantized precoder}: Based on the asymptotic analysis, we derive the optimal linear-quantized precoder that optimizes both the asymptotic SINR and the asymptotic SEP {\red performance}. We show that the optimal linear-quantized precoder is {\red a} regularized ZF (RZF) precoder, whose regularization parameter is determined by the quantization type/level as well as  the system parameters. %Based on the asymptotic analysis, we further show that the optimal linear-quantized precoder is the regularized ZF (RZF) precoder, and the  regularization parameter is related to the quantization type and the system parameters. %In particular, we give the optimal regularization parameter for the QCE case.
To the best of our knowledge, the optimal RZF precoder derived in this paper is the first linear-quantized precoder {\red applicable to general forms of quantization.}
\end{enumerate}

\subsection{Organization and Notations}
The remaining parts of the paper are organized as follows.
Section \ref{section2} describes the system model and the problem formulation. Some preliminaries for analysis are introduced in Section \ref{section3}.  Section \ref{section4} derives the statistically equivalent model and gives the asymptotic analysis. The optimal linear-quantized precoder is then given in Section \ref{section5}. %The results in Section \ref{section4} are then applied to several classical linear precoders in Section \ref{section5}, in which the optimality of RZF precoder is also discussed.
  Simulation results are shown in Section \ref{section6} and the paper is concluded in Section \ref{section7}.

\textit{{\red Notation}}: Throughout the paper, we use {\red the typefaces} $x$, $\x$, $\mathbf{X}$, and $\mathcal{X}$ to
denote scalar, vector, matrix, and set, respectively.  For a vector $\x\in\C^n$, $\x[i_1:i_2]$ denotes a sub-vector of $\x$ consisting of its $i_1$-th to $i_2$-th elements, where $1\leq i_1\leq i_2\leq n$; in particular, $\x[i]$ is the $i$-th entry of $\x$, and  $x_i$ is also used if it does not cause any ambiguity. For a matrix $\mathbf{X}$, $\mathbf{X}[i_1,i_2]$ is the $(i_1,i_2)$-th entry of $\mathbf{X}$.  The operators $\arg(\cdot)$, $\mathcal{R}(\cdot)$, $\mathcal{I}(\cdot)$, $(\cdot)^\dagger$, $(\cdot)^\mathsf{T}$, $(\cdot)^\mathsf{H}$, and  $(\cdot)^{-1}$ return the angle, the real part, the imaginary part, the conjugate, the transpose, the conjugate transpose, and the inverse of their corresponding arguments, respectively. {\color{black}We use $\|\cdot\|$ to denote} the $\ell_2$ norm of the corresponding vector or {\red the spectral norm of the corresponding matrix.  {\color{black}For $m,n\in\mathbb{N}$, we denote the $m\times m$ identity matrix by $\mathbf{I}_m$ and the $m\times n$ matrix of all zero entries by $\mathbf{0}_{m\times n}$. We use  $\text{diag}(x_1,x_2,\dots,x_n)$ to refer to} a diagonal matrix with $\{x_i\}_{i=1}^n$ {\red as} its diagonal entries. {\color{black}We use} $\mathcal{U}(n)$  to denote the set of $n\times n$ unitary matrices over $\C$.  {\color{black} The operators} $\mathbb{E}[\cdot]$, $\text{var}(\cdot)$, and $\mathbb{P}(\cdot)$ return the expectation, the variance, and the probability of their corresponding argument, respectively. {\color{black}For two random variables $X$ and $Y$}, $X\overset{d}{=}Y$ means that they have the same distribution.  {\color{black}We denote almost sure convergence by $\xrightarrow{a.s.}$. We use $\mathcal{C}\mathcal{N}(\mathbf{0},\sigma^2\mathbf{I})$ to denote} the zero-mean circularly symmetric complex Gaussian distribution with covariance matrix $\sigma^2\mathbf{I}$, and $\text{Unif}(\mathcal{S})$ to denote uniform distribution on set $\mathcal{S}$. We reserve the sans serif font (e.g., $\sg$) for vectors with i.i.d. standard Gaussian random variables. Finally, $j$ is the imaginary unit satisfying $j^2=-1$.
%===================================
\section{System Model and Problem Formulation}\label{section2}

\subsection{Linear-Quantized Precoding}\label{section2A}
Consider the downlink of a multiuser massive MIMO system in which an $N$-antenna BS {\red simultaneously} serves $K$ single-antenna users, {\red where }$K<N$.
%, sending symbols $\{s_k, k=1,2,\dots, K\}$ that are drawn from M-PSK constellation, i.e., $s_k\in\{e^{j\frac{2\pi m}{M}},~m=0,\dots,M-1\}$. 
The received signals at the users can be modeled as
\begin{equation*}\label{sysmodel}
\y=\H\mathbf{x}+\n,
\end{equation*}
where $\y\in\mathbb{C}^K$ is the received signal vector of the users;  $\mathbf{x}\in\mathbb{C}^N$ is the transmitted signal vector from the BS; $\H\in\C^{K\times N}$ models the channel matrix between the BS and the users, and $\n\in\C^K$ is the additive noise. We assume that the analog-to-digital converters (ADCs) equipped at the user side {\color{black} are ideal and }have infinite resolution and {\red that perfect CSI is available at the BS}. \textcolor{black}{We model the DAC as a quantization function and ignore various practical effects such as glitches, element mismatch, slewing, thermal noise, clipping, etc \cite{DACref1,DACref2,DACref3}}.

\begin{figure}[t]
\includegraphics[scale=0.33]{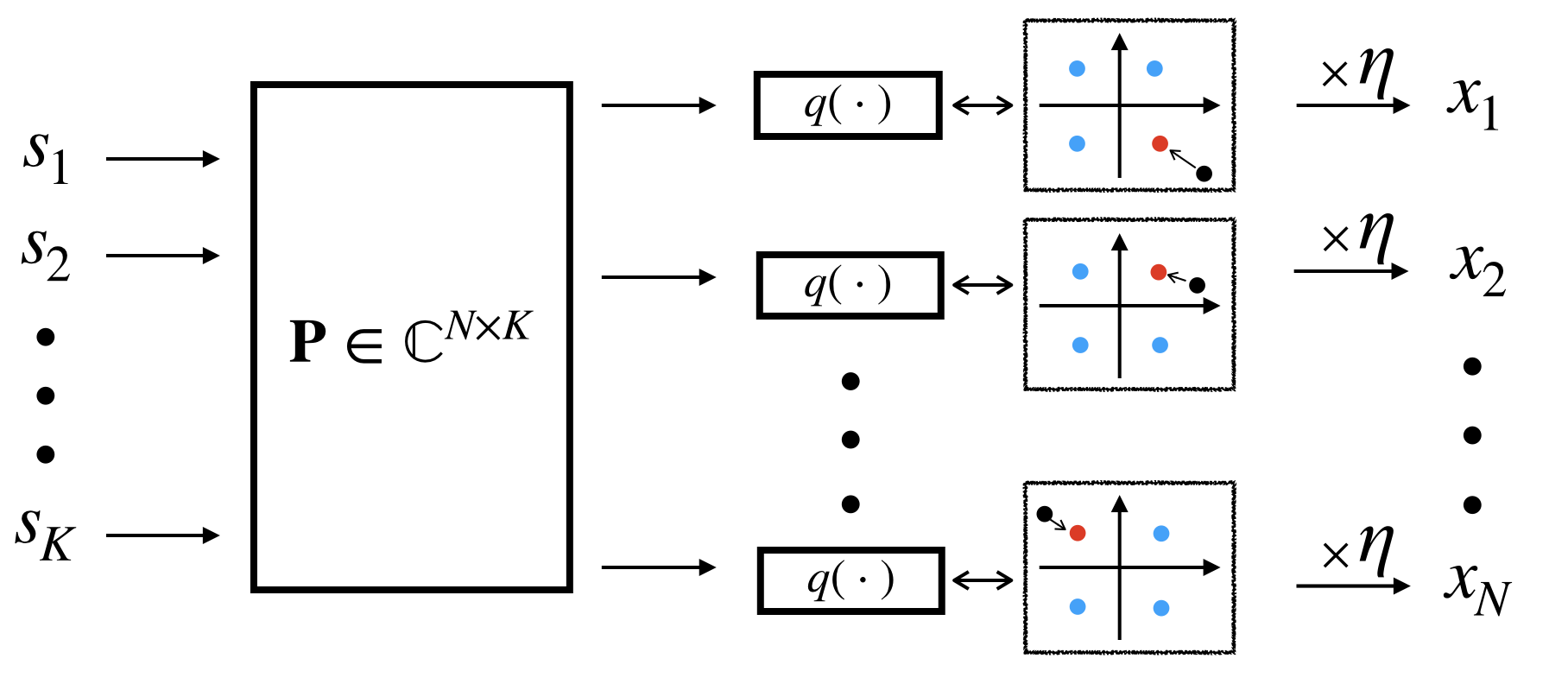}
\centering
\caption{An illustration of the linear-quantized precoding scheme.}
\label{x_figure}
\end{figure}

In this paper, we consider the linear-quantized precoding scheme, where the signal vector to be transmitted at the BS has the following form
\begin{equation}\label{xform}
\x=\eta\, q(\mathbf{Ps}).
\end{equation}
In the above expression, $\mathbf{s}\in\mathbb{C}^K$ is the desired data vector; $\bP\in\C^{N\times K}$ is a precoding matrix;  $q(\cdot): \C\rightarrow \mathcal{X}_L$ is a quantization function that acts component-wise on its input vector, where $\mathcal{X}_L$ is a finite set with $L$ elements and $L$ is referred to as the quantization level; $\eta$ is a scaling factor to ensure that the following average transmit power constraint is satisfied:
 \begin{equation}\label{powerconstraint}
  \frac{1}{N}\mathbb{E}[\|\x\|_2^2]\leq P_T,
  \end{equation} where $P_T>0$ is the maximum average transmit power. {\red S}ee Fig. \ref{x_figure} for an illustration of the linear-quantized precoding scheme. 
  
We now introduce the quantization function corresponding to traditional quantized precoding and QCE precoding, which are most relevant for applications. 
    \begin{itemize}
  \item\emph{Traditional quantized precoding}:   In this case, the real and imaginary parts of the input signal are quantized independently with a pair of low-resolution DACs and $\mathcal{X}_L$ can be expressed as 
  %$$
  %\begin{aligned}
  %\mathcal{X}_L\hspace{-0.07cm}=\hspace{-0.09cm}\left\{\hspace{-0.024cm}x\hspace{-0.02cm}\mid\hspace{-0.02cm}\RR(x),\I(x)\hspace{-0.05cm}\in\hspace{-0.05cm}\left\{\hspace{-0.05cm}\frac{\Delta}{2}\hspace{-0.05cm}\left(2i-1-\sqrt{L}\right)\hspace{-0.08cm},\right.\right.\\
 %\left.\left. \hspace{0.05cm}i=1,\dots,\hspace{-0.05cm} \sqrt{L}\right\}\hspace{-0.05cm}\right\},
  %\end{aligned}$$
    \begin{equation}\label{eq:xl}
  \begin{aligned}
  \mathcal{X}_L&=\left\{x\mid\RR(x),\I(x)\in\left\{\frac{\Delta}{2}\left(2\ell-1-\sqrt{L}\right),~\ell=1,\dots, \sqrt{L}\right\}\right\},
  \end{aligned}
  \end{equation}
  where %we have assumed uniform quantization\footnote{nonuniform quantization can be defined similarly, see \cite{SQUID}.} and 
  $\Delta$ is the quantization interval.  {\red The corresponding} quantization function maps its input to the nearest point in \eqref{eq:xl}. {\red In the following,  we call it independent quantizer since the quantizer acts independently on the real and imaginary parts of its input, and  denote it by $q_{\text{\tiny I}}(\cdot)$}. For an $L$-level independent quantizer, the resolution of {\red the} DACs is  $\frac{1}{2}\log_2 L$ bits, where $L{\color{black}\geq 4}$ and is a power of $2$.
  \item \emph{QCE precoding}: In this case, the CE constraint is combined with the use of low-resolution DACs and $\mathcal{X}_L$ has the following expression:
    \begin{equation}\label{eq:xl2}
  \mathcal{X}_L=\left\{e^{j\frac{(2\ell-1)\pi}{L}}\mid~\ell=1,2,\dots,L\right\}. 
  \end{equation}
  The {\red corresponding} quantization function maps its input to the nearest point in \eqref{eq:xl2}. {\red In the following, we call it CE quantizer and denote it by}  $q_{\text{\tiny CE}}(\cdot)$. The resolution of {\red the} DACs is  $\log_2\frac{L}{2}$ bits for an $L$-level CE quantizer, where $L{\color{black}\geq 4}$ and is a power of $2$.  \end{itemize}
  Note that when $L=4$ and $\Delta=2$, the independent quantizer and the CE quantizer are the same, both reducing to the one-bit precoding case.
%The quantization functions corresponding to quantized precoding and QCE precoding map their input to the nearest points in the corresponding $\mathcal{X}_L$ and we refer to them as scalar quantizer and CE quantizer, denoted by $q_{\text{\tiny S}}$ and $q_{\text{\tiny CE}}$, respectively. Note that the resolution of DACs are  $\frac{1}{2}\log_2 L$ bits and $\log_2 \frac{L}{2}$ bits  respectively for an $L$-level scalar quantizer and an $L$-level CE quantizer.%In this paper, we make the following distributional assumptions on $\bm{H}$, $\bm{n}$ and $\bm{s}$.

% refers to the precoding scheme that directly quantizes the output of a standard linear precoder.  Such simple approach is favorable for its simplicity and efficiency. In many cases, it can also achieve satisfactory performance.  Let  $\bP\in\C^{N\times K}$ be a precoding matrix and let $\x_P=\bP\s$, then the transmitted signal under linear QCE precoding  can be expressed as 
% \begin{equation*}\label{xform}
% \x=q(\x_P),
% \end{equation*}
% where $q(\cdot)$ represents the quantization function that maps each of its argument to the nearest point in $\mathcal{X}_L$; see Fig. \ref{x_figure}.% for an illustration of linear one-bit CE precoding scheme.

Let $\H=\bU\bD\bV^\mathsf{H}$ be the singular value decomposition (SVD) of $\H$, where $\bU\in\mathcal{U}(K), \bV\in\mathcal{U}(N)$, and $\bD=\left(\begin{matrix}\text{diag}{\red (}d_1,d_2,\dots,d_K{\red)}&\mathbf{0}_{{\color{black}K\times (N-K)}}\end{matrix}\right)\in\R^{K\times N}$ {\color{black}with $d_1,d_2,\dots, d_K$ representing the non-zero singular values\footnote{\red We assume throughout the paper that $\H$ is of full row rank. This holds with probability one if Assumption \ref{Ass:distribution} further ahead is satisfied, i.e., if the entries of $\H$ are i.i.d. following $\mathcal{CN}(0,\frac{1}{N})$.} of $\H$.}
In this paper, we focus on precoding matrices with the following structure:  
\begin{equation}\label{Eqn:P}
\bP=\bV f(\bD)^\mathsf{T}\bU^\mathsf{H},
\end{equation}
where $f(\cdot)$ acts independently on  the nonzero singular values of $\H$, i.e., 
\begin{equation*}
f(\bD)=\left(\begin{matrix}\text{diag}{\red(}f(d_1),f(d_2),\dots,f(d_K){\red )}& \mathbf{0}_{{\color{black}K\times (N-K)}}\end{matrix}\right).
\end{equation*}
See Fig. \ref{P} for an illustration of the structure of $\bP$.  
\begin{figure}[t]
\includegraphics[scale=0.35]{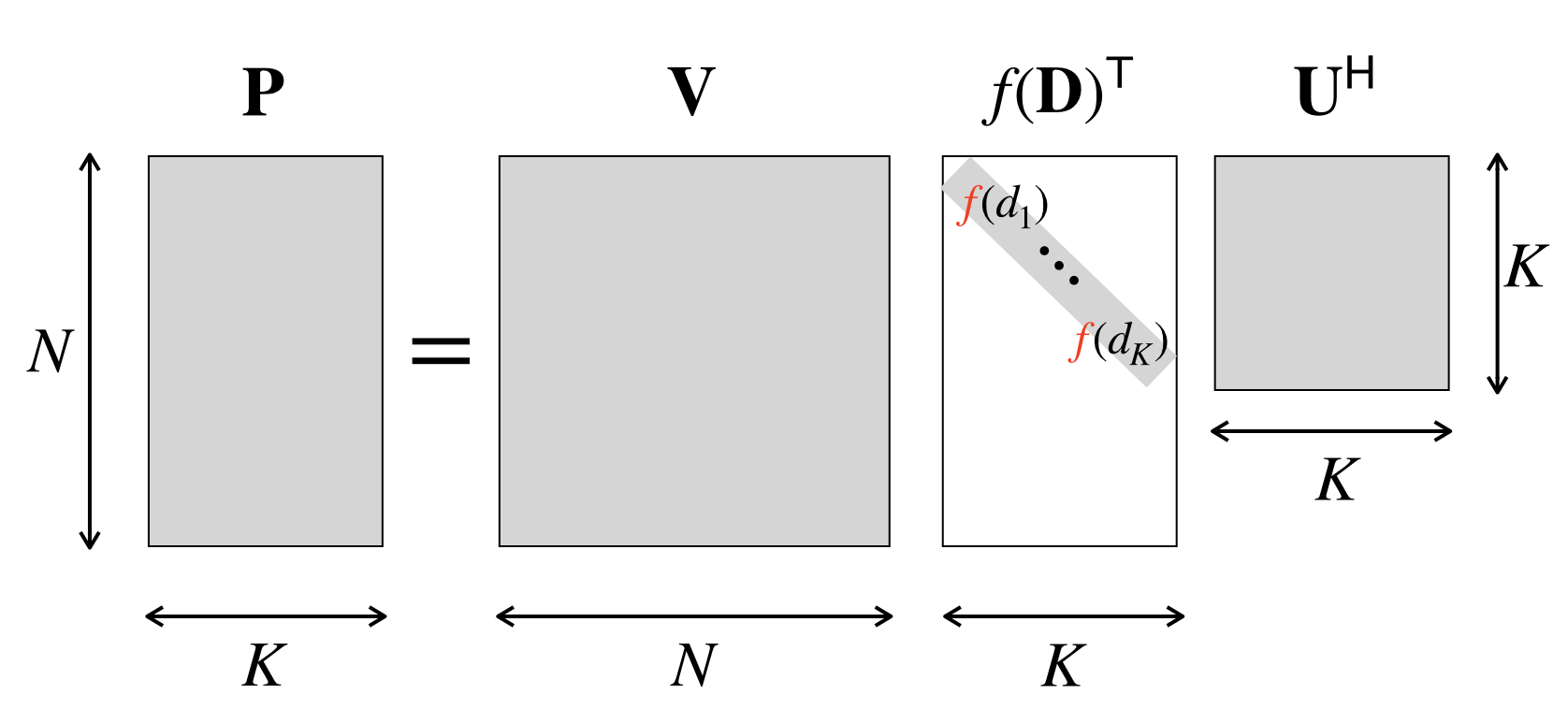}
\centering
\caption{The structure of the precoding matrix in consideration.}
\label{P}
\end{figure}
The motivations to consider the special class of the precoding matrix in \eqref{Eqn:P} are twofold. First, as will be shown in Section \ref{section3}, the structure of $\mathbf{P}$ in \eqref{Eqn:P} enables us to apply existing results in random matrix theory for performance analysis. Second, the structure of $\mathbf{P}$ is fairly general and includes the following popular precoders as special cases:
\vspace{5pt}
\begin{itemize}
\item \textit{MF}: $\bP=\H^\mathsf{H}$, which corresponds to $f(x)=x$ in \eqref{Eqn:P};
\item \textit{ZF}: $\bP=\H^\mathsf{H}(\H\H^\mathsf{H})^{-1}$, which corresponds to $f(x)=x^{-1}$ in \eqref{Eqn:P};
\item  \textit{RZF}: $\bP=\H^\mathsf{H}(\H\H^\mathsf{H}+\rho\mathbf{I}_{\color{black}K})^{-1}$, which corresponds to $f(x)=\frac{x}{x^2+\rho}$ in \eqref{Eqn:P}.
\end{itemize}

\vspace{5pt}

With the above linear-quantized precoding scheme, the received signals at the user side read
\begin{equation}\label{sysmodel1}
\y=\eta\H q(\bP \s)+\n=\eta \bU\bD\bV^\mathsf{H} q(\bV f(\bD)^\mathsf{T}\bU^\mathsf{H}\s)+\n.
\end{equation}
 As in \cite{QCEMSE2, QCEMSE3, QCECI, GEMM}, we assume that each user is able to rescale the received signal $y_k$ by a factor $\beta_k\in\C$, i.e., $r_k=\beta_k y_k$. %; see \eqref{factor}. 
(This corresponds to removing the effective channel gain.) After the rescaling step, the users employ symbol-wise nearest-neighbor decoding, i.e., each user $k$ maps $r_k$ to the nearest constellation point.%The precoding factors  need to be jointly optimized  with $\x$. 

\vspace{3pt}

As will be shown {\red below}, the nonlinear function $f(\cdot)$ in $\mathbf{P}$ has a major impact on the performance of the overall system. In this paper, we will first analyze the performance of the linear-quantized scheme in the asymptotic regime where $N$ and $K$ tend to infinity simultaneously, and then optimize $f(\cdot)$ based on the asymptotic analysis.

%\begin{example} \label{example1}
%\begin{enumerate}
%\item  \textbf{MF}: $\bP=\H^\mathsf{H}$, is obtained if $f(x)=x${\normalfont{;}}
%\item \textbf{ZF}: $\bP=\H^\mathsf{H}(\H\H^\mathsf{H})^{-1}$, is obtained if $f(x)=\frac{1}{x}${\normalfont{;}}
%\item  \textbf{RZF}: $\bP=\H^\mathsf{H}(\H\H^\mathsf{H}+\rho\mathbf{I})^{-1}$, is obtained if $f(x)=\frac{x}{x^2+\rho}$.
%\end{enumerate}
%\end{example}

%============================================================
\subsection{Assumptions}

%%We first present the assumptions for performance analysis.

%%\begin{assumption}\label{assumptions}
%%We impose the following assumptions:
%%\begin{itemize}
%%\item[A.1] The entries of $\mathbf{H}$ and $\mathbf{n}$ are independently drawn from $\mathcal{C}\mathcal{N}\left(0,\frac{1}{N}\right)$ and $\mathcal{C}\mathcal{N}\left(0,\sigma^2\right)$ respectively. The entries of $\mathbf{s}$ are independent and uniformly drawn from a finite set $\mathcal{S}_M$, and $\mathbb{E}[|s_1|^2]=\sigma_s^2$. Furthermore, $\mathbf{H},\mathbf{s},\mathbf{n}$ are mutually independent;
%%\item[A.2] $f(\cdot)$ is positive and continuous on $(0,\infty)$;
%%\item[A.3] $q:\C\rightarrow\C$ is continuous and bounded almost everywhere (a.e.);
%%\end{itemize}
%%\end{assumption} 

%%------------- below: revised Aug.27 by Junjie ---------

In this subsection, we specify our assumptions on the system model in \eqref{sysmodel1}. We first make a few standard assumptions on $\mathbf{H}$, $\mathbf{n}$, and $\mathbf{s}$.

\begin{assumption}\label{Ass:distribution}
The entries of $\mathbf{H}$ and $\mathbf{n}$ are independently drawn from $\mathcal{C}\mathcal{N}\left(0,\frac{1}{N}\right)$ and $\mathcal{C}\mathcal{N}\left(0,\sigma^2\right)$, respectively. The entries of $\mathbf{s}$ are independently and uniformly drawn from a finite set $\mathcal{S}_M$ with nonzero elements (i.e.,  $0\notin \mathcal{S}_M$), and $\mathbb{E}[|s_1|^2]=\sigma_s^2$. Furthermore, $\mathbf{H},\mathbf{s},$ and $\mathbf{n}$ are mutually independent.
\end{assumption}

{\color{black} The i.i.d. Gaussian assumption on the channel $\H$ is widely adopted in the massive MIMO literature for ease of analysis, see, e.g., \cite{ZF,ZF2,ZF3}. This assumption is reasonable in a rich scattering environment where the number of scattered components is large and independent. Such a scenario arises when the antennas are widely spaced or when the physical environment exhibits scattering in all directions \cite{book:Tse}.  }
Note that we have assumed ${H}_{ij}\sim\mathcal{CN}(0,\frac{1}{N})$ instead of $H_{ij}\sim\mathcal{CN}(0,1)$. This normalization is introduced as in \cite{asymp1,asymp2,assumption1} to {\red ensure that the received power of the users does not grow with $N$}. {\color{black}We would like to emphasize that the i.i.d. Gaussian assumption on the channel $\H$ is not essential to our analysis. Our results can be extended to a broader class of channel models, as discussed in Remark \ref{rem:otherd} below Theorem \ref{asy}. }
 The assumption on $\s$ is quite general and is satisfied by common constellation schemes including phase shift keying (PSK) and quadrature amplitude modulation (QAM).

For technical reasons, we impose the following assumption on the nonlinear function $f(\cdot)$ in {\red $\mathbf{P}$ in \eqref{Eqn:P}}.

\begin{assumption}\label{Ass:f}
{\color{black}The function $f(\cdot)$ is positive, continuous almost everywhere (a.e.), and bounded on any compact set of $(0,\infty)$.}
\end{assumption}
{\color{black}Notice that the $f$ functions corresponding to the MF, ZF, and RZF precoders discussed in the previous subsection all satisfy Assumption \ref{Ass:f}.}
%The above requirements on $f$ are satisfied by {\red the  MF, ZF, and RZF} precoders discussed in the previous subsection. 
We emphasize that the positivity assumption on $f$ is not essential and can be relaxed to $\mathbb{P}\left(f(\bD)=\mathbf{0}\right)=0$. Further, as will be shown in Section \ref{section5}, there {\red always} exists an optimal precoder satisfying $f>0$, implying that the positivity  assumption does not impose any restriction in terms of the best achievable performance. 

Finally, we assume that the quantization function $q$ in \eqref{xform} satisfies some regularity conditions, as stated in Assumption \ref{Ass:q} below. 

\begin{assumption}\label{Ass:q}
{\color{black} The quantization function} $q: \C\rightarrow \mathcal{X}_L$ is continuous  a.e. and bounded.
\end{assumption}

It is straightforward to verify that the independent quantizer $q_{\text{\tiny I}}(\cdot)$ and the CE quantizer $\qce(\cdot)$ both satisfy  Assumption \ref{Ass:q} (as they are piecewise constant). We emphasize that some of our results can be simplified for QCE precoding, i.e., when $q(\cdot)=\qce(\cdot)$. In the following, we will first present our results in the most general form and then discuss the case of QCE precoding {\red separately}.
\subsection{Heuristic Analysis Via Bussgang Decomposition}\label{bussgangsection}

The nonlinear quantization function $q(\cdot)$ causes some difficulties for performance analysis. A popular technique to deal with it is {\red the} Busggang decomposition \cite{MFrate, SQUID, frequency, ZF, ZF2, ZF3}, %\footnote{To the best of our  knowledge, existing analyses  on linear-quantized precoding are all based on the Bussgang decomposition technique.}, 
which decomposes a nonlinear function of Gaussian random variables into a linear signal term and an uncorrelated nonlinear distortion term. We now outline a \textit{heuristic} analysis of the problem using the Bussgang decomposition technique.

%Using the Bussgang decomposition technique, one can write $\x=q(\x_P)$ as
%\begin{equation}\label{bussgangmodel}
%\x\overset{d}{=}\mathbf{F}\x_P+\mathbf{d},
%\end{equation}
%where  $\mathbf{F}=\mathbf{C}_{\x\x_P}\mathbf{C}_{\x_P\x_P}^{-1}$, and $\mathbf{d}$ is a distortion term uncorrelated with $\s$, whose second moment information is known, i.e., 
%$$\bC_{\d\d}=\bC_{\x\x}-\bF\bC_{\x_P\x}-\bC_{\x\x_P}\bF^{\mathsf{H}}+\bF\bC_{\x_P\x_P}\bF^\mathsf{H}.$$
%With the above decomposition and noting $\mathbf{x}_P:=\mathbf{Ps}$, the original model \eqref{sysmodel1} can be transformed into a linear system:
%\[
%\y\overset{d}{=}\H\bF\bP\s+\H\d+\n.
%\]
%where the first term contains the signal vector $\mathbf{s}$ and the other two terms are the distortion and noise terms respectively. 

Consider the nonlinear quantization process $q(\bP\s)$,  where $\mathbf{P}=\mathbf{V}f(\mathbf{D})^\mathsf{T}\mathbf{U}^{\mathsf{H}}$. Under Assumption \ref{Ass:distribution},  $\mathbf{V}$ and $\mathbf{U}$ are Haar distributed (see Definition 1 and Lemma \ref{Lem:SVD_Haar} further ahead) and it can be shown that $\mathbf{Ps}$ is approximately distributed as
\[
\mathbf{Ps}\overset{d}{\approx}\mathcal{CN}(\mathbf{0},\bar{\alpha}^2\mathbf{I}_{\color{black}N}),
\]
where
\[
\begin{split}
\bar{\alpha}^2 &= \frac{1}{N}\mathbb{E}[\|\mathbf{Ps}\|^2].%\approx \sigma_s^2\cdot\frac{K}{N}\cdot \mathbb{E}[f^2(d)],
\end{split}
\]
%In the above formula, $\sigma_s^2:=\mathbb{E}[|s_k|^2]$ and the expectation $\mathbb{E}[f^2(d)]$ is over the asymptotic empirical singular value distribution of $\mathbf{H}$, namely, the Marchenko-Pastur distribution \cite{}. (This heuristic can be justified by the results in Section \ref{section4}.) 
Based on the Busggang decomposition technique \cite{Bussgang}, we can write $q(\mathbf{Ps})$ as
\[
q(\mathbf{Ps}) = {\red \overline{C}_1 (\mathbf{Ps})}+\mathbf{q}_\perp,
\]
where $ \overline{C}_1=\mathbb{E}[\sZ^\dagger q(\bar{\alpha}\sZ)]/\bar{\alpha}$, $\sZ\sim\mathcal{CN}(0,1)$,  and $\mathbf{q}_\perp$ is the residual nonlinear distortion which is approximately orthogonal to $\mathbf{Ps}$. Substituting this decomposition into \eqref{sysmodel1} gives
\begin{equation}\label{Eqn:Baussgang}
\begin{split}
\y &=\eta\,\H q(\bP \s)+\n\\
&= \eta\,\overline{C}_1 \H \mathbf{Ps}+\eta\,\H\mathbf{q}_\perp+\mathbf{n}\\
&= \eta\frac{ \overline{C}_1}{K}\text{tr}(\mathbf{HP})\mathbf{s} +\eta\,\overline{C}_1\left(\mathbf{HP}-\frac{1}{K}\text{tr}(\mathbf{HP})\mathbf{I}\right)\mathbf{s}+\eta\H\mathbf{q}_\perp+\mathbf{n}.\\
\end{split}
\end{equation}
In the above decomposition, the first term is a signal term and the last three terms are effective noise terms. This demonstrates the advantage of the Bussgang decomposition: it can transform a nonlinear system into a linear one so that the useful signal and the effective noise can be distinguished. However, the problem is that the distribution of $\mathbf{q}_\perp$ and its correlation with $(\H,\s)$ are hard to characterize, which makes it still highly non-trivial to analyze the performance (e.g., SEP performance) of the system with \eqref{Eqn:Baussgang}. Heuristically, one may treat $\mathbf{q}_{\perp}$ as if it is independent of both $\mathbf{H}$ and $\mathbf{s}$, so that $\mathbf{Hq}_\perp$ can be approximated {\red as} independent Gaussian noise. It turns out that this treatment, though heuristic, leads to very accurate predictions \cite{ZF}. Developing a new analytical framework that can rigorously justify the above heuristics is a main motivation behind this work.

%\textcolor{red}{Stopped here.}
%
%Despite the benefit of transforming a nonlinear system into a linear one, the Bussgang decomposition technique fails to provide all the statistical information of the original system: only second-order moment information of the distortion term is available, while its distribution is unknown. Due to this inherent limitation, the Bussgang decomposition alone is not sufficient to provide SEP analysis. (It can give rigorous lower bound on the achievable rate \cite{MFrate, SQUID, frequency}, though.) In \cite{ZF,ZF2,ZF3}, the authors treat $\mathbf{d}$ as if it is Gaussian and independent of $\mathbf{H}$ (so that the term $\mathbf{Hd}$ can be analyzed) and derived SEP approximations for one-bit ZF precoding. It turns out that this treatment, though heuristic, leads to very accurate predictions. Developing a new analytical framework that can rigorously resolve the limitation of Bussgang decomposition is a main motivation behind this work. 
\vspace{-0.5cm}
\section{{\color{black}Preliminaries}}\label{section3}

Our analysis is based on  Householder dice \cite{HD}, a technique for recursively generating Haar random matrices. Before presenting our main results, {\red in this section} we first give some preliminaries on the Haar random matrix and the Householder dice technique.

We begin with the definition of the Haar measure and the Haar random matrix.

\begin{definition}[Haar measure \cite{Haarbook}]
{\red The Haar measure on $\mathcal{U}(N)$ is defined as the unique probability measure $\mu$ on $\mathcal{U}(N)$ that satisfies the following translation invariant property: for any measurable subset $\mathcal{A}\subset\mathcal{U}(N)$ and any fixed $\bM\in\mathcal{U}(N)$, 
$$\mu(\bM\mathcal{A})=\mu(\mathcal{A}\bM)=\mu(\mathcal{A}),$$
where $\bM\mathcal{A}$ denotes the set obtained by taking all the elements of $\mathcal{A}$ and multiplying them by $\bM$.}

%The Haar measure on $\mathcal{U}(n)$ is defined as the unique translation invariant probability measure $\mu$ on $\mathcal{U}(n)$, where translation invariant means for any measurable subset $\mathcal{A}\subset\mathcal{U}(n)$ and any fixed $\bM\in\mathcal{U}(n)$, 

\end{definition}
\hspace{-0.35cm}In the following, we denote by $\text{Haar}(N)$ the ensemble of random unitary matrices drawn from the Haar measure on $\mathcal{U}(N).$

Lemma \ref{Lem:SVD_Haar} below is a well-known fact in random matrix theory \cite{tulino2004random} and suggests the crucial role of the Haar random matrix {\red plays} in our analysis. 

\begin{lemma}\label{Lem:SVD_Haar}
Let $\mathbf{H}=\mathbf{UDV}^\mathsf{H}$ be the SVD of $\mathbf{H}$. Under Assumption \ref{Ass:distribution}, $\mathbf{U},\mathbf{D},\mathbf{V}$ are mutually independent and $\mathbf{U},\mathbf{V}$ are Haar distributed random matrices.
\end{lemma}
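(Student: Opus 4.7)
The plan is to prove the lemma by exploiting the bi-unitary invariance of the i.i.d. complex Gaussian ensemble. Under Assumption~\ref{Ass:distribution}, $\mathbf{H}$ has joint density proportional to $\exp\bigl(-N\,\mathrm{tr}(\mathbf{H}\mathbf{H}^{\mathsf{H}})\bigr)$ with respect to Lebesgue measure on $\mathbb{C}^{K\times N}$. Since $\mathrm{tr}(\mathbf{H}\mathbf{H}^{\mathsf{H}})=\|\mathbf{H}\|_F^2$ is preserved under $\mathbf{H}\mapsto\mathbf{A}\mathbf{H}\mathbf{B}$ for any fixed $\mathbf{A}\in\mathcal{U}(K)$ and $\mathbf{B}\in\mathcal{U}(N)$, and since Lebesgue measure is also invariant under such a transformation, we obtain $\mathbf{A}\mathbf{H}\mathbf{B}\stackrel{d}{=}\mathbf{H}$. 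This is the single property I will leverage throughout.

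Next, I would translate this invariance to the SVD components. Given the SVD $\mathbf{H}=\mathbf{U}\mathbf{D}\mathbf{V}^{\mathsf{H}}$, the identity $\mathbf{A}\mathbf{H}\mathbf{B}=(\mathbf{A}\mathbf{U})\,\mathbf{D}\,(\mathbf{B}^{\mathsf{H}}\mathbf{V})^{\mathsf{H}}$ is an SVD of $\mathbf{A}\mathbf{H}\mathbf{B}$ (the singular values are unchanged). Using the distributional equality above together with the a.s.\ uniqueness of the singular values (which holds because, under Assumption~\ref{Ass:distribution}, the joint density of the singular values is absolutely continuous so ties occur with probability zero), one concludes $(\mathbf{A}\mathbf{U},\mathbf{D},\mathbf{B}^{\mathsf{H}}\mathbf{V})\stackrel{d}{=}(\mathbf{U},\mathbf{D},\mathbf{V})$ for every fixed $\mathbf{A}\in\mathcal{U}(K)$, $\mathbf{B}\in\mathcal{U}(N)$, modulo a phase convention discussed below.

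From this joint invariance two conclusions follow. First, marginalizing over $\mathbf{V}$ and $\mathbf{D}$ yields $\mathbf{A}\mathbf{U}\stackrel{d}{=}\mathbf{U}$ for all $\mathbf{A}\in\mathcal{U}(K)$, and the left-translation invariance of the distribution of $\mathbf{U}$ characterizes the Haar measure on $\mathcal{U}(K)$; the analogous argument gives $\mathbf{V}\sim\mathrm{Haar}(N)$. Second, the conditional distribution of $(\mathbf{U},\mathbf{V})$ given $\mathbf{D}$ is invariant under the independent left action of $(\mathbf{A},\mathbf{B})$, and uniqueness of the bi-invariant probability measure on $\mathcal{U}(K)\times\mathcal{U}(N)$ forces this conditional law to be the product of Haar measures, hence independent of $\mathbf{D}$. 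Combined with the Haar marginals, this gives the mutual independence of $\mathbf{U}$, $\mathbf{D}$, and $\mathbf{V}$.

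The main subtlety — and the only step that is not mechanical — is the phase ambiguity in the SVD: the columns of $\mathbf{U}$ and $\mathbf{V}$ are only defined up to a common unit-modulus phase, so ``the'' SVD is a measurable selection problem. I would handle this in the standard way by fixing a convention (for instance, requiring the first nonzero entry of each column of $\mathbf{U}$ to be real and positive and then letting this determine the corresponding column of $\mathbf{V}$), and then noting that the preceding invariance argument, after averaging uniformly over the torus of admissible phases, yields the true Haar measure on $\mathcal{U}(K)$ and $\mathcal{U}(N)$ independent of the choice of convention. Once the phase issue is settled, the rest of the argument is purely a uniqueness statement for Haar measure and requires no further computation.
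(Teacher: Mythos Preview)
The paper does not actually prove this lemma: it simply states it as ``a well-known fact in random matrix theory'' and cites \cite{tulino2004random}. Your bi-unitary invariance argument is exactly the standard route to this fact, and the overall structure (invariance of the Gaussian density under $\mathbf{H}\mapsto\mathbf{A}\mathbf{H}\mathbf{B}$, transfer to the SVD components, uniqueness of Haar measure) is correct.

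There is, however, one gap in your handling of the SVD non-uniqueness that you should patch. You correctly flag the column-phase ambiguity and propose to average over the torus of admissible phases. That is enough for $\mathbf{U}\in\mathcal{U}(K)$ and for the first $K$ columns of $\mathbf{V}$, but recall that here $\mathbf{D}\in\mathbb{R}^{K\times N}$ with $K<N$, so the last $N-K$ columns of $\mathbf{V}$ span $\ker\mathbf{H}$ and are undetermined up to a full right action of $\mathcal{U}(N-K)$, not merely a phase torus. Averaging only over phases therefore cannot produce Haar measure on all of $\mathcal{U}(N)$. The fix is to either (i) state explicitly that one completes the first $K$ right singular vectors to an element of $\mathcal{U}(N)$ by drawing the remaining $N-K$ columns uniformly (Haar) on the orthogonal complement, independently of everything else, or (ii) equivalently, average over the full ambiguity group $\mathrm{diag}(\Theta,\mathbf{W})$ with $\Theta$ a phase torus in $\mathcal{U}(K)$ and $\mathbf{W}\sim\mathrm{Haar}(N-K)$. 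With this correction your invariance argument goes through verbatim, and the conditional-law uniqueness step then gives the claimed mutual independence.
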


%Haar random matrix For a random matrix $\mathbf{H}$ with i.i.d. $\mathcal{CN}(0,\frac{1}{N})$ entries, the unitary matrices $\mathbf{U}$ and $\mathbf{V}$ in its SVD  $\mathcal{C}\mathcal{N}\left(0,\frac{1}{N}\right)$: let $\H=\bU\bD\bV^\mathsf{H}$ be the SVD  of $\H$, then $\bU$, $\bD$, and $\bV$ are mutually independent with $\bU\sim\text{Haar}(K)$ and  $\bV\sim\text{Haar}(N)$.

%First, it is invariant under multiplications by unitary matrices, i.e., 
%\begin{equation}\label{invariant}
%\H\overset{d}{=}\bM_1\H\bM_2,
%\end{equation} where $\bM_1\in\mathcal{U}(K)$ and $\bM_2\in\mathcal{U}(N)$ are any two deterministic or random unitary matrices independent of $\H$. This implies that $\bM_1\H\bM_2$ is actually independent of $\bM_1$ and $\bM_2$. Second, 

{\color{black}We now introduce the Householder dice (HD) technique proposed in \cite{HD}, which deals with an iterative process involving Haar matrices as follows:
\begin{equation}\label{HDiter}
\x_{t+1}=f_t(\bQ\x_t),~0\leq t\leq T-1,
\end{equation}
where $\bQ\sim\text{Haar}(N)$ and $\x_0\in\C^N$ are independent.  HD was originally proposed as an efficient numerical method for simulating iterations like \eqref{HDiter} of large dimension.  In our paper, we employ it as a tool for performance analysis, which is a novel application of this technique.  
Specifically, using HD, one can show that the sequence $\{\x_0,\x_1,\x_2,\dots,\x_T\}$ generated by \eqref{HDiter} is statistically equivalent to another sequence that is fully determined by the initial vector $\x_0$ and a sequence of independent standard Gaussian random vectors. Compared to  the original sequence that exhibits complicated correlation through the Haar matrix $\bQ$, the new sequence is more amenable to analysis, particularly in the high-dimensional case, which enables us to study the statistical properties of the original sequence with greater ease. 
 
To get some insight on how HD facilitates analysis, we consider the following simple example that contains only two iterations:
\begin{equation}\label{example}
\left\{
\begin{aligned}
\x_1=f_0(\bQ\x_0),\\
\x_2=f_1(\bQ\x_1),
\end{aligned}
\right.
\end{equation}
where $\bQ\sim\text{Haar}(N)$ and $\x_0\in\C^N$ are independent. Using HD, one can show that $(\x_0,\x_1,\x_2)$ is statistically equivalent to $(\x_0,\tilde{\x}_1,\tilde{\x}_2)$ given below:
\begin{equation}\label{example2}
\left\{
\begin{aligned}
\tilde{\x}_1&=f_0(a_0^1\sg_1),\\
\tilde{\x}_2&=f_1(a_1^1\sg_1+a_1^2\sg_2),
\end{aligned}
\right.
\end{equation}
where $\sg_1\sim\mathcal{CN}\left(\mathbf{0},\mathbf{I}_N\right)$ and $\sg_2\sim\mathcal{CN}\left(\mathbf{0},\mathbf{I}_N\right)$ are independent and both are independent of $\x_0$, and the random variables $\{a_0^1, a_1^1, a_1^2\}$ are defined by 
\begin{equation*}
\begin{aligned}
a_0^1&=\frac{\|\x_0\|}{\|\sg_1\|},\\
a_1^1&=\frac{\x_0^\mathsf{H}\tilde{\x}_1}{\|\sg_1\|}-\frac{\sg_1^\mathsf{H}\sg_2}{\|\sg_1\|\|\x_0\|}\frac{\sqrt{\|\tilde{\x}_1\|^2\|\x_0\|^2-\left|{\x_0^\mathsf{H}\tilde{\x}_1}\right|^2}}{\sqrt{\|\sg_1\|^2\|\sg_2\|^2-|\sg_1^\mathsf{H}\sg_2|^2}},\\
a_1^2&=\frac{\|\sg_1\|\sqrt{\|\tilde{\x}_1\|^2\|\x_0\|^2-\left|{\x_0^\mathsf{H}\tilde{\x}_1}\right|^2}}{\|\x_0\|\sqrt{\|\sg_1\|^2\|\sg_2\|^2-|\sg_1^\mathsf{H}\sg_2|^2}}.
\end{aligned}
\end{equation*}
The statistical equivalence between \eqref{example} and \eqref{example2}  can be proved using a  technique similar to that in \cite[Section 3.3]{HD} and thus we omit the details here. Clearly, $\tilde{\x}_1$ and $\tilde{\x}_2$ are fully specified by the initial vector $\x_0$ and the two Gaussian vectors $\sg_1$ and $\sg_2$. The scaling factors $\{a_0^1,a_1^1, a_1^2\}$, though correlated with $\{\x_0, \sg_1, \sg_2\}$ in a complicated way, converge in many cases to deterministic values as the matrix dimension $N$ tends to infinity. For instance, when $f_0(\cdot)$ is separable and satisfies some mild regularity conditions and the entries of $\x_0$ are i.i.d., the convergence of $\{a_0^1,a_1^1,a_1^2\}$ can be easily proved via the law of large numbers. 

The above example illustrates the strength of the HD technique: it transforms the original sequence, which is specified by the $N\times N$ Haar matrix $\bQ$, into another sequence that is determined by only a few Gaussian vectors (e.g., two Gaussian vectors of dimension $N$ for the above example)  with an explicit form. The new sequence is usually more convenient for analysis. %, which is usually more convenient for further (asymptotic) analysis.    
The interested reader is referred to  \cite{HD} for a detailed description of the HD technique. We will provide a comprehensive description of the HD technique for handling our specific problem in Appendix \ref{Appendix:HD} and Appendix \ref{derivy}. 
}

\section{Statistically Equivalent Model And Asymptotic Analysis}\label{section4}
In this section, we first use the HD technique to derive a statistically equivalent model for \eqref{sysmodel1}, which is close to a ``signal plus independent Gaussian noise'' form. {\color{black}This step is non-asymptotic and the equivalence holds for any finite dimension when $N,K\geq 3$.} The ``signal plus Gaussian noise'' insight is made precise by further considering the large system limit where $N$ and $K$ tend to infinity at a fixed ratio. We will derive sharp asymptotic expressions for the SINR and SEP performances of the linear-quantized precoding scheme. 

%\textcolor{red}{Our results work for any distribution of $d$, not just the Machenko-Pasture law. We should make a remark about it. Also, point out the precise condition on $d$.}

%and give asymptotic analysis   based on the statistically equivalent model (including providing the optimal precoding factors and deriving the asymptotic SINR and SEP expressions). Then, we will give asymptotic analysis on the considered class of linear QCE precoding; in particular, we will also give analytical results for quantized MF and ZF precoding, which are two classical linear QCE precoding schemes that belong to the considered class.
\vspace{-0.2cm}
\subsection{Statistically Equivalent Model}
Recall that our system model is 
$$\y=\eta\H q(\bP\s)+\n=\eta\bU\bD\bV^\mathsf{H}q(\bV f(\bD)^\mathsf{T} \bU^\mathsf{H}\s)+\n, $$
where $\bU\sim \text{Haar}(K), \bV\sim\text{Haar} (N),$ and $\{\bU,\bV,\bD,\s,\n\}$ are mutually independent. The received signal $\y$ can be seen as {\red being} obtained by performing the following iterations:
\begin{equation}\label{Eqn:model_s1s3}
\left\{
\begin{aligned}
\s_1&=f(\bD)^\mathsf{T}\bU^\mathsf{H}\s,\\
\s_2&=q(\bV \s_1),\\
\s_3&=\bD\bV^\mathsf{H}\s_2,\\
\y&=\eta\bU\s_3+\n,
\end{aligned}\right.
\end{equation}
 {\color{black}The above iterative process has a form similar to \eqref{HDiter}. At each iteration, it involves one multiplication of a Haar random matrix and a random vector, while the other operations can be modeled as $f_t(\cdot)$ in \eqref{HDiter}, since $\{\bD,\n\}$ are independent of $\{\bU,\bV,\bs\}$. Specifically,  $f_0(\x)=f(\bD)^\mathsf{T}\x, f_1(\x)=q(\x), f_2(\x)=\bD\x, f_3(\x)=\eta \x+\n$, where $\x$ is a vector of an appropriate dimension. The only minor difference with \eqref{HDiter} is that  two different Haar matrices and their conjugate transposes are included in the above iterations. However, this difference is not important and the HD technique can still be applied. }With the help of the HD technique, we can obtain the following statistically equivalent model, which is more convenient for analysis.

\begin{theorem}[Statistically Equivalent Model]\label{Equimodel} {\color{black}When $N\geq3, K\geq3$},
the distribution of $(\y,\s)$ in the original model \eqref{sysmodel1} is the same as {\color{black}that of} $(\hat{\y},\s)$ specified by the following model:
%Let $\tilde{\y}=\bR_1(\s)\bR_2(\sg_2)\bR_2(\v_2)^\mathsf{H}\v_2+\n$ with  (\textcolor{red}{Better to use notations exactly correspond to \eqref{Eqn:model_s1s3}})
%\begin{equation}\label{v2v1s1}
%\left\{\begin{aligned}
%\v_2&=\bR_1(\sg_1)^\mathsf{H}\bD\bR_1(\tilde{\s}_1)\bR_2(\sz_2)\bR_2(\v_1)^\mathsf{H}\v_1,\\
%\v_1&=\bR_1(\sz_1)^\mathsf{H}q(\bR_1(\sz_1)\bR_1(\tilde{\s}_1)^\mathsf{H}\tilde{\s}_1),\\
%\tilde{\s}_1&=f(\bD)^{\mathsf{T}}\bR_1(\sg_1)\bR_1(\s)^\mathsf{H}\s,
%\end{aligned}
%\right.
%\end{equation} 
%where 
%$\sg_1,\sg_2,\sz_1,\sz_2,\s,\n,\bD$ are mutually independent, whose definitions are given before. Then $\{\tilde{\y},\s\}$ and $\{\y,\s\}$ are statistically equivalent. 
 \begin{equation}\label{Equiy}
\begin{aligned}
 \hat{\y}=\eta T_\s\,\s+\eta T_\sg\, \sg_2+\n,
%&\left(\frac{\sg_1^\mathsf{H}\{C_1\cdot\bD\hat{\bs}_1+C_2\cdot\bD \bB(\hat{\bs}_1)\sz_2[2:N]\}}{\|\sg_1\|\|\s\|}-\frac{(\bR(\bs)^{-1}\sg_2)[1]}{\|\bs\|}\right) \cdot \s+\\
%&\left(\frac{\|\bB(\sg_1)^\mathsf{H}\{C_1\cdot\bD \hat{\bs}_1+C_2\cdot\bD \bB(\hat{\bs}_1)\sz_2[2:N]\}\|}{\|(\bR(\bs)^{-1}\sg_2)[2:K]\|}\right)\cdot \sg_2+\n,
\end{aligned}
\end{equation}
where
\begin{equation}\label{Equiy_2}
\begin{aligned}
T_\s=&\frac{\sg_1^\mathsf{H}\{C_1\bD\hat{\bs}_1+C_2\bD \bB(\hat{\bs}_1)\sz_2[2:N]\}}{\|\sg_1\|\|\s\|}-T_\sg\,\frac{(\bR(\bs)^{-1}\sg_2)[1]}{\|\bs\|},\\
T_\sg=&\frac{\|\bB(\sg_1)^\mathsf{H}\{C_1\bD \hat{\bs}_1+C_2\bD \bB(\hat{\bs}_1)\sz_2[2:N]\}\|}{\|(\bR(\bs)^{-1}\sg_2)[2:K]\|},\\
C_1\hspace{-0.05cm}=&\frac{\sz_1^\mathsf{H}q\left(\frac{\|\hat{\s}_1\|}{\|\sz_1\|}\,\sz_1\right)}{\|\hat{\bs}_1\|\|\sz_1\|},~ C_2=\frac{\left\|\bB(\sz_1)^\mathsf{H}q\left(\frac{\|\hat{\s}_1\|}{\|\sz_1\|}\,\sz_1\right)\right\|}{\|\sz_2[2:N]\|},\\
 \hat{\s}_1=&\frac{\|\s\|}{\|\sg_1\|} f(\bD)^{\mathsf{T}}\sg_1.
 \end{aligned}
 \end{equation} 
 %\begin{equation}\label{Equiy_2}
%\begin{aligned}
%T_\s=&\frac{\sg_1^\mathsf{H}\{C_1\bD\hat{\bs}_1+C_2\bD \bB(\hat{\bs}_1)\sz_2[2:N]\}}{\|\sg_1\|\|\s\|}-T_\sg\,\frac{(\bR(\bs)^{-\hspace{-0.05cm}1}\sg_2)[1]}{\|\bs\|},\\
%T_\sg=&\frac{\|\bB(\sg_1)^\mathsf{H}\{C_1\bD \hat{\bs}_1+C_2\bD \bB(\hat{\bs}_1)\sz_2[2:N]\}\|}{\|(\bR(\bs)^{-1}\sg_2)[2:K]\|},\\
%C_1=&\frac{\sz_1^\mathsf{H}q\left(\frac{\|\hat{\s}_1\|}{\|\sz_1\|}\cdot\sz_1\right)}{\|\hat{\bs}_1\|\|\sz_1\|},~ C_2=\frac{\left\|\bB(\sz_1)^\mathsf{H}q\left(\frac{\|\hat{\s}_1\|}{\|\sz_1\|}\sz_1\right)\right\|}{\|\sz_2[2:N]\|},\\
% \hat{\s}_1=&\frac{\|\s\|}{\|\sg_1\|} f(\bD)^{\mathsf{T}}\sg_1.
 %\end{aligned}
 %\end{equation}
 In the above expressions, $\sg_1{\color{black}\sim\mathcal{CN}\left(\mathbf{0},\mathbf{I}_K \right)}, \sg_2\sim\mathcal{CN}\left(\mathbf{0},\mathbf{I}_K\right), \sz_1\sim\mathcal{CN}\left(\mathbf{0},\mathbf{I}_N\right), \sz_2\sim\mathcal{CN}\left(\mathbf{0},\mathbf{I}_N\right)$ are mutually independent standard Gaussian random vectors, which are further independent of the signal vector $\s$, the singular value matrix $\bD$, and the noise vector $\n$; $f(\cdot)$ is a processing function involved in the precoder \eqref{Eqn:P}; $\bR(\cdot)$ denotes the Householder transform of the input vector and $\bB(\cdot)$ represents the submatrix of $\bR(\cdot)$ with the first column removed (see   \eqref{reflector} and \eqref{B} in Appendix \ref{Appendix:HD}).
\end{theorem}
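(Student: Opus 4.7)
The plan is to realize the four-step iteration in \eqref{Eqn:model_s1s3} by progressively revealing the randomness of the two Haar matrices $\bU$ and $\bV$ via the Householder dice (HD) technique, taking care that each of $\bU$ and $\bV$ is used twice in \eqref{Eqn:model_s1s3} (once directly, once conjugate-transposed). I would build up the equivalent model in four stages, matching the four substitutions $\s_1,\s_2,\s_3,\y$, and at the end collect everything into the signal-plus-noise form stated in \eqref{Equiy}.

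\textbf{Stage 1 (handling $\bU^{\mathsf{H}}\s$ and introducing $\sg_1$).} Since $\bU\sim\text{Haar}(K)$ is independent of $\s$, the standard Haar identity gives $\bU^{\mathsf{H}}\s\overset{d}{=}(\|\s\|/\|\sg_1\|)\,\sg_1$ for $\sg_1\sim\mathcal{CN}(\mathbf{0},\mathbf{I}_K)$ independent of $\s$. Substituting this yields $\s_1\overset{d}{=}\hat{\s}_1=(\|\s\|/\|\sg_1\|)f(\bD)^{\mathsf{T}}\sg_1$, exactly as in \eqref{Equiy_2}. The HD prescription is to record the Householder reflector $\bR(\s)$ built from $\s$, so that any further appearance of $\bU$ can be expressed using $\bR(\s)$ and an independent Haar block.

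\textbf{Stage 2 (handling $\bV\s_1$ and introducing $\sz_1$).} By the same Haar identity applied to $\bV$, independently of everything so far, $\bV\hat{\s}_1\overset{d}{=}(\|\hat{\s}_1\|/\|\sz_1\|)\,\sz_1$, so $\s_2\overset{d}{=}q((\|\hat{\s}_1\|/\|\sz_1\|)\sz_1)$. Again, record $\bR(\hat{\s}_1)$ for reuse when $\bV^{\mathsf{H}}$ appears next.

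\textbf{Stage 3 (handling $\bV^{\mathsf{H}}\s_2$: the key HD step).} This is the main obstacle, because $\bV$ has already been partially revealed through $\bV\hat{\s}_1$, so $\bV^{\mathsf{H}}\s_2$ is not an unconditional Haar action. The HD machinery says that, conditioned on the direction $\sz_1$, the remaining randomness of $\bV$ lies in an orthogonal Haar block of dimension $(N-1)\times(N-1)$; this block, acting on the component of $\s_2$ orthogonal to $\sz_1$, produces a fresh Gaussian vector $\sz_2[2:N]\sim\mathcal{CN}(\mathbf{0},\mathbf{I}_{N-1})$ along the orthogonal complement, while the component of $\s_2$ along $\sz_1$ is dictated by the Bussgang-type coefficients. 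Concretely, decomposing $q(\cdot)$ into its projection on $\sz_1$ and its orthogonal complement produces the scalars
\[
C_1=\frac{\sz_1^{\mathsf{H}}q(\tfrac{\|\hat{\s}_1\|}{\|\sz_1\|}\sz_1)}{\|\hat{\s}_1\|\,\|\sz_1\|},\qquad C_2=\frac{\|\bB(\sz_1)^{\mathsf{H}}q(\tfrac{\|\hat{\s}_1\|}{\|\sz_1\|}\sz_1)\|}{\|\sz_2[2:N]\|},
\]
so that $\bV^{\mathsf{H}}\s_2$ becomes, in distribution, $C_1\hat{\s}_1+C_2\bB(\hat{\s}_1)\sz_2[2:N]$ after applying the Householder reflector $\bR(\hat{\s}_1)$. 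Multiplying by $\bD$ yields the bracketed expression $C_1\bD\hat{\s}_1+C_2\bD\bB(\hat{\s}_1)\sz_2[2:N]$ appearing in \eqref{Equiy_2}.

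\textbf{Stage 4 (handling $\bU\s_3$: the second conditional HD step) and collecting terms.} Now $\bU$ reappears, and conditioned on $\bU^{\mathsf{H}}\s$ the remaining randomness of $\bU$ is again a Haar block on the $(K-1)$-dimensional orthogonal complement of $\s$. Decomposing $\s_3$ into the component along $\sg_1$ (the direction generated in Stage 1) and its orthogonal complement, and pushing the latter through an independent Gaussian vector $\sg_2\sim\mathcal{CN}(\mathbf{0},\mathbf{I}_K)$ using the Householder reflector $\bR(\s)$, produces exactly the coefficients $T_\s$ and $T_\sg$ in \eqref{Equiy_2}: the longitudinal coefficient has a signal part $\sg_1^{\mathsf{H}}\{C_1\bD\hat{\s}_1+C_2\bD\bB(\hat{\s}_1)\sz_2[2:N]\}/(\|\sg_1\|\|\s\|)$ and a correction from the $(\bR(\s)^{-1}\sg_2)[1]$ entry, while the transverse coefficient is the ratio of the orthogonal-complement norm to $\|(\bR(\s)^{-1}\sg_2)[2:K]\|$. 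Adding the independent noise $\n$ gives \eqref{Equiy}.

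The main obstacle is Stage 3 (and symmetrically Stage 4): one must verify that after revealing $\bV\hat{\s}_1$, the conditional distribution of $\bV^{\mathsf{H}}\s_2$ is precisely reproduced by the Householder reflector $\bR(\hat{\s}_1)$ acting on the vector $(C_1\|\hat{\s}_1\|,\,C_2(\sz_2[2:N])^{\mathsf{T}})^{\mathsf{T}}$ constructed from an independent fresh Gaussian of dimension $N-1$. This rests on the invariance of Haar measure under the coset decomposition $\bV=\bR(\hat{\s}_1)\,\text{diag}(1,\tilde{\bV})\,\bR(\tfrac{\|\hat{\s}_1\|}{\|\sz_1\|}\sz_1)^{\mathsf{H}}$ with $\tilde{\bV}\sim\text{Haar}(N-1)$, which is exactly the structural identity used in \cite[Sec.~3.3]{HD}; the dimensions $N\geq3$ and $K\geq3$ ensure this $(N-1)\times(N-1)$ (resp.\ $(K-1)\times(K-1)$) block is nontrivial, explaining the hypothesis. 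Once this conditional identity is established, the four stages compose cleanly, and the mutual independence of $\{\sg_1,\sg_2,\sz_1,\sz_2\}$ (and of them with $\s,\bD,\n$) follows from the fact that each Gaussian vector is introduced in a distinct conditioning step orthogonal to all previously revealed randomness.
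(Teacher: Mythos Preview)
Your proposal is correct and follows essentially the same approach as the paper: both proceed via the four-stage HD unfolding of \eqref{Eqn:model_s1s3}, introducing $\sg_1,\sz_1,\sz_2,\sg_2$ in succession, verifying that the resulting $\widetilde{\bU},\widetilde{\bV}$ are Haar and independent of $(\s,\bD,\n)$, and then simplifying via Householder identities (Lemma~\ref{reflectorlemma}) together with the distributional substitution $\sg_2\mapsto\bR(\s)^{-1}\sg_2$. One small slip: in your coset decomposition of $\bV$ the roles of $\hat{\s}_1$ and $\sz_1$ are swapped---the paper takes $\bV=\bR_1(\sz_1)\,\mathrm{diag}(1,\tilde{\bV})\,\bR_1(\hat{\s}_1)^{\mathsf{H}}$ so that $\bV\hat{\s}_1$ collapses correctly---but this does not affect the substance of the argument.
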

\begin{proof}
See Appendix \ref{derivy}. 
\end{proof}

Before proceeding, let us take a look at the statistically equivalent model in \eqref{Equiy}:
 %Denoting the scaling factors in front of $\s$ and $\sg_2$ in \eqref{Equiy} to be $T_\bs$ and $T_\sg$, respectively, then $
 %\hat{\y}$ can be expressed as 
 %\begin{equation}\label{approxy}
 %\hat{\y}= T_\bs\cdot \s+T_\sg\cdot \sg_2+\n,
 %\end{equation} 
 %where 
 the first term is the (scaled) signal vector, the second term is an equivalent noise that captures both the multi-user interference and the distortion caused by quantization, and the last term is the channel noise. %In this way, the influence of the  quantization function $q(\cdot)$ becomes visible. 
 %Compared with the Bussgang decomposition \eqref{bussgangmodel} in which the distribution of the distortion term is unknown, the statistically equivalent model \eqref{Equiy} is more informative: it is known that $\sg_2\sim\mathcal{C}\mathcal{N}(\mathbf{0},\mathbf{I})$ is independent of $\s$ and $\n$, and the scaling factors $T_\bs$ and $T_\sg$ are explicit functions of some random variables whose distributions are known. 
Note, however, it is still difficult to {\red exactly} analyze the performance of the system (e.g., SEP performance)  based on the {\red statistically} equivalent model in \eqref{Equiy}. This is because $T_\s$ and $T_\sg$ therein are correlated with $\bs$ and $\sg_2$ in a complicated way. Fortunately, as $ N, K\rightarrow \infty$ and $N/K\rightarrow\gamma \in(1,\infty)$, both $T_\bs$ and $T_\sg$ converge to deterministic quantities, enabling us to derive sharp asymptotic formulas for both the SINR and SEP {\red performance}.
  
 \vspace{-0.2cm}
 \subsection{Asymptotic Analysis}\label{Sec:Asymptotic}
 \vspace{-0.1cm}
 In this subsection, we consider the large system limit where both $N$ and $K$ tend to infinity while keeping a finite ratio $\frac{N}{K}\rightarrow\gamma\in(1,\infty).$ This is a common assumption in the performance analysis of massive MIMO systems, and such asymptotic {\red analyses} can usually provide tight approximations for realistic systems with finite $N, K$ (see, e.g., \cite{asymp1,asymp2,assumption1, assumption2}). %the results derived under such an assumption can serve as approximations for realistic systems with finite $N, K$, usually showing good accuracy.   
 In what follows, all vectors and matrices should be understood as sequences of vectors and matrices of growing dimensions. For simplicity, their dependence on $N$ and $K$ is not explicitly shown. 

Our main asymptotic result is summarized in the following theorem. Its proof is given in Appendix \ref{asymptotic}.

{\begin{theorem}[Asymptotic Model]\label{asy}
Define the following asymptotic model:
\begin{equation}\label{Asympmodel}
\bar{\y}:=\eta\overline{T}_\bs \, \s +\eta\overline{T}_\sg \, \sg_2+ \n,
\end{equation}
%where $s\sim \mathrm{Unif}(\mathcal{S}_M)$, $g\sim\mathcal{CN}(0,1)$, $n\sim\mathcal{CN}(0,\sigma^2)$ are independent.
where
\begin{equation}\label{Asympmodel2}
\begin{split}
\overline{T}_\bs&=\overline{C}_1\,\mathbb{E}[d\, f(d)],\\
\overline{T}_\sg&=\sqrt{\sigma_s^2\,|\overline{C}_1|^2 \text{\normalfont{var}}[{d}\, f({d})]+\overline{C}_2^2},\\
\overline{C}_1&= \frac{\mathbb{E}[\sZ^\dagger q(\bar{\alpha} \sZ)]}{\bar{\alpha}},\\
\overline{C}_2&=\sqrt{\mathbb{E}[|q(\bar{\alpha} \sZ)|^2]-|\mathbb{E}[\sZ^\dagger q(\bar{\alpha}\sZ)]|^2},\\
\bar{\alpha}&=\sqrt{\frac{\sigma_s^2\,\mathbb{E}[f^2(d)]}{\gamma}},\\
\end{split} 
\end{equation}
$\sZ\sim\mathcal{C}\mathcal{N}(0,1),$ $d=\sqrt{\lambda}$, and $\lambda$ follows the Marchenko-Pastur distribution, whose probability density function is given by 
\begin{equation}\label{MPdistribution}
p_{\lambda}(x)=\frac{\sqrt{(x-a)_+(b-x)_+}}{2\pi cx}
\end{equation} with 
$a=(1-\sqrt{c})^2, b=(1+\sqrt{c})^2, c=\frac{1}{\gamma}; (x)_+=\max\{x,0\}.$ Then under Assumptions 1-3, the following holds as $N, K\to\infty$, and $\frac{N}{K}\to\gamma\in(1,\infty)$:
\[
(\hat{y}_k,s_k)\xrightarrow{a.s.}(\bar{y}_k,s_k),\quad \forall\, k\in[K],
\]
where $(\hat{y}_k,s_k)$ and $(\bar{y}_k,s_k)$  are given in \eqref{Equiy} and \eqref{Asympmodel}, respectively.
\end{theorem}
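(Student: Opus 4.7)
My plan is to reduce the coordinate-wise claim to the two almost-sure scalar limits $T_\s\xrightarrow{a.s.}\overline{T}_\s$ and $T_\sg\xrightarrow{a.s.}\overline{T}_\sg$. Indeed, from \eqref{Equiy} and \eqref{Asympmodel},
\[
\hat{y}_k-\bar{y}_k \;=\; \eta\,(T_\s-\overline{T}_\s)\,s_k + \eta\,(T_\sg-\overline{T}_\sg)\,(\sg_2)_k,
\]
the noise term $n_k$ canceling and $s_k,(\sg_2)_k$ being a.s.\ finite, so once the two scalar limits are established the conclusion follows immediately. The workhorse tools will be: (i) the classical strong law of large numbers for sums of conditionally independent terms indexed by the diagonal of $\bD$ or the coordinates of a Gaussian vector; (ii) the Marchenko--Pastur law for the empirical spectral distribution of $\H\H^\mathsf{H}$, which under Assumption~\ref{Ass:distribution} yields $\tfrac{1}{K}\sum_{i=1}^K g(d_i)\xrightarrow{a.s.}\mathbb{E}[g(d)]$ for $g$ bounded and a.e.\ continuous on the MP support (Assumption~\ref{Ass:f} guarantees that every such $g$ built from $d$ and $f(d)$ meets this); (iii) the Householder identity $\|\bB(\mathbf{u})^\mathsf{H}\mathbf{v}\|^2=\|\mathbf{v}\|^2-|\mathbf{u}^\mathsf{H}\mathbf{v}|^2/\|\mathbf{u}\|^2$ coming from the unitarity of $\bR(\mathbf{u})=[\mathbf{u}/\|\mathbf{u}\|,\bB(\mathbf{u})]$; and (iv) the conditional distributional fact that, given $\hat{\s}_1$, $\bB(\hat{\s}_1)\sz_2[2:N]\sim\mathcal{CN}(\mathbf{0},\mathbf{I}-\hat{\s}_1\hat{\s}_1^\mathsf{H}/\|\hat{\s}_1\|^2)$, with an analogous statement for $\bB(\sg_1)$.

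Working up the expressions in stages, I would first establish the elementary norm limits $\|\s\|^2/K\to\sigma_s^2$, $\|\sg_1\|^2/K,\,\|\sz_1\|^2/N,\,\|\sz_2[2:N]\|^2/N\to 1$, and $\|(\bR(\s)^{-1}\sg_2)[2:K]\|^2/K\to 1$ (the last from unitarity and the independence of $\sg_2$ from $\s$). Combined with Marchenko--Pastur applied to $f(d)^2$, this yields $\|\hat{\s}_1\|^2/N\to\bar\alpha^2$. Next, $C_1\to\overline{C}_1$ and $C_2\to\overline{C}_2$ follow by replacing the random scaling $\|\hat{\s}_1\|/\|\sz_1\|$ by $\bar\alpha$ inside $q(\cdot)$, which Assumption~\ref{Ass:q} (boundedness and a.e.\ continuity of $q$) permits, and then applying the SLLN together with the Householder identity. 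For $T_\sg$, I would analyze the numerator via $\|\bB(\sg_1)^\mathsf{H}\w\|^2=\|\w\|^2-|\sg_1^\mathsf{H}\w|^2/\|\sg_1\|^2$ with $\w:=C_1\bD\hat{\s}_1+C_2\bD\bB(\hat{\s}_1)\sz_2[2:N]$: the coherent pieces $\|\bD\hat{\s}_1\|^2/K\to\sigma_s^2\mathbb{E}[d^2 f(d)^2]$ and $\sg_1^\mathsf{H}\bD\hat{\s}_1/K\to\sigma_s\mathbb{E}[d f(d)]$ again reduce to SLLN plus Marchenko--Pastur. The incoherent quadratic form $\|\bD\bB(\hat{\s}_1)\sz_2[2:N]\|^2/K$ is handled by conditioning on $\hat{\s}_1$: its conditional mean equals $\text{tr}(\bD\bD^\mathsf{H})/K-\hat{\s}_1^\mathsf{H}\bD^\mathsf{H}\bD\hat{\s}_1/(K\|\hat{\s}_1\|^2)$, which tends to $\mathbb{E}[d^2]=1$, and a Hanson--Wright-type concentration (plus Borel--Cantelli) upgrades this to an a.s.\ statement; cross terms between the coherent and incoherent parts are $o(K)$ by the conditional mean-zero structure of $\sz_2[2:N]$. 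Using the key identity $\mathbb{E}[d^2]=1$ under Marchenko--Pastur with $c=1/\gamma$, these pieces combine to $T_\sg^2\to \sigma_s^2|\overline{C}_1|^2\,\text{var}(d f(d))+\overline{C}_2^2$. A simpler version of the same computation gives $T_\s\to \overline{C}_1\,\mathbb{E}[d f(d)]=\overline{T}_\s$, the second summand in the definition of $T_\s$ being of order $1/\sqrt{K}$ and thus vanishing.

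The main obstacle I anticipate is propagating the intricate joint dependence between $\bD$, $\sg_1$, $\sz_1$, and $\sz_2$ (coupled through $\hat{\s}_1$, the Bussgang coefficients $C_1,C_2$, and the Householder bases $\bB(\hat{\s}_1),\bB(\sg_1)$) when upgrading conditional-expectation calculations to a.s.\ statements. The spectrally weighted quadratic forms arising in the incoherent step need Hanson--Wright-type concentration with random weights $d_i^2$, for which Marchenko--Pastur edge convergence combined with the boundedness of $f$ on compacts (Assumption~\ref{Ass:f}) supplies the required uniform control. Substituting $\bar\alpha$ for the random scaling inside $q(\cdot)$ likewise needs a careful continuous-mapping argument exploiting Assumption~\ref{Ass:q}. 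Once these technicalities are in place, continuous mapping applied to $(T_\s,T_\sg)$ yields the stated coordinate-wise a.s.\ convergence.
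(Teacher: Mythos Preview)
Your reduction to $T_\s\xrightarrow{a.s.}\overline{T}_\s$ and $T_\sg\xrightarrow{a.s.}\overline{T}_\sg$, followed by SLLN on $\sg_1,\sz_1,\sz_2$ combined with Marchenko--Pastur for the $d_i$-averages and the Householder identity $\|\bB(\mathbf{u})^\mathsf{H}\mathbf{v}\|^2=\|\mathbf{v}\|^2-|\mathbf{u}^\mathsf{H}\mathbf{v}|^2/\|\mathbf{u}\|^2$, is exactly the paper's strategy and is correct.

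Two technical points where your execution differs from the paper are worth noting. First, for the ``incoherent'' piece $\bD\bB(\hat{\s}_1)\sz_2[2{:}N]$ you propose conditional Gaussianity plus Hanson--Wright and Borel--Cantelli; the paper instead writes $\bB(\hat{\s}_1)\sz_2[2{:}N]=\bR(\hat{\s}_1)\sz_2-\sz_2[1]\,\hat{\s}_1/\|\hat{\s}_1\|$ and uses that $\bR(\hat{\s}_1)\sz_2\overset{d}{=}\sz_2$ with $\bR(\hat{\s}_1)\sz_2$ independent of $(\sg_1,\bD)$, which reduces everything back to Lemma~\ref{calculatelemma}-type averages without any concentration inequality. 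Your route works (the spectral norm of $\bB(\hat{\s}_1)^\mathsf{H}\bD^\mathsf{H}\bD\bB(\hat{\s}_1)$ is a.s.\ bounded by MP edge convergence, so the conditional Hanson--Wright tail is summable), but the paper's rotational-invariance substitution is both simpler and avoids the uniform-spectral-bound step. Second, the step you flag as a ``careful continuous-mapping argument'' for replacing the random scale $\|\hat{\s}_1\|/\|\sz_1\|$ by $\bar\alpha$ inside $q(\cdot)$ is precisely the content of the paper's Lemma~\ref{Lemma10}: because $q$ is only a.e.\ continuous and the scale $\alpha$ depends on the same $\sz_1$ being averaged, one cannot simply invoke continuous mapping plus SLLN; the paper sandwiches $q$ between Lipschitz envelopes $l_\epsilon\le q\le u_\epsilon$, applies SLLN to the envelopes (where the $|\alpha-\bar\alpha|$ error is controlled by the Lipschitz constant), and then lets $\epsilon\downarrow 0$ via dominated convergence. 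You correctly identified this as the main obstacle, and the envelope trick is the missing ingredient.
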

 %(\textcolor{red}{I have replaced $\mathbb{E}[d^2]=1$. Correct normalization?})
\vspace{5pt}}

{\color{black}Several remarks on Theorems 1 and 2 are in order}. 
%\begin{remark}
%{\color{black}It is worthwhile noting that the conditioning technique developed in the VAMP algorithm \cite{VAMP,VAMP:model} can also be utilized to derive the same asymptotic model in Theorem \ref{asy}. In fact, the conditioning technique and the HD technique are fundamentally equivalent. We adopt HD in our analysis as it is more straightforward and convenient, while the conditioning technique will lead to a  more cumbersome derivation. }
%\end{remark}
\begin{remark}
{\color{black}It is worth noting that the conditioning technique developed in \cite{VAMP}\cite{VAMP:model} may be used to derive the asymptotic model in Theorem \ref{asy}, though the model analyzed in the current paper is more complicated than that in \cite{VAMP:model}; compare \eqref{sysmodel1} and \cite[Eq.~(15)]{VAMP:model}. Note that both the HD technique and the conditioning method of \cite{VAMP}\cite{VAMP:model} heavily rely on the rotational invariance of the Haar random matrix. For our problem, the HD technique is  more direct and transparent.}
\end{remark}
\begin{remark}[Connection with Bussgang decomposition]
The asymptotic model \eqref{Asympmodel} gives a precise characterization of the Bussgang decomposition {\red and} results in \eqref{Eqn:Baussgang}:
\begin{equation}\label{Eqn:Baussgang_temp}
\begin{split}
\y &= \eta\frac{ \overline{C}_1}{K}\text{\normalfont{tr}}(\mathbf{HP})\mathbf{s} +\eta\,\overline{C}_1\left(\mathbf{HP}-\frac{1}{K}\text{\normalfont{tr}}(\mathbf{HP})\mathbf{I}\right)\mathbf{s}+\eta\H\mathbf{q}_\perp+\mathbf{n}.\\
\end{split}
\end{equation}
Loosely speaking, we have the following correspondence between \eqref{Asympmodel} and \eqref{Eqn:Baussgang_temp}: 
\[
\begin{split}
\eta\overline{T}_\bs  \,\s  &\  \leftrightarrow \  \eta\frac{ \overline{C}_1}{K}\text{\normalfont{tr}}(\mathbf{HP})\mathbf{s}\\
\eta \overline{T}_\sg \, \sg_2 & \ \leftrightarrow \eta\,\overline{C}_1\left(\mathbf{HP}-\frac{1}{K}\text{\normalfont{tr}}(\mathbf{HP})\mathbf{I}\right)\mathbf{s}+\eta\H\mathbf{q}_\perp \\
%n\ &\  \leftrightarrow \bm{n}
\end{split}
\]
The above correspondence is in a distributional sense, i.e., {\red the corresponding terms have the same (asymptotic) distribution}, 
and is implied by our proof of Theorem \ref{asy}.
%Our results thus give a rigorous characterization of the asymptotic distributions of the terms 
\end{remark}

\begin{remark}[Assumption on channel $\H$] \label{rem:otherd}
{\color{black}Theorems 1 and 2 are stated under the assumption that $\H$ is i.i.d. Gaussian, but can be extended to the following more general models.
\begin{itemize}
\item (Unitarily invariant model) For the unitarily invariant model, the SVD of $\H$ satisfies $\bU\sim\text{Haar}(K), \bV\sim\text{Haar}(N)$, and $\{\bU,\bD,\bV\}$ are independent  \cite{tulino2004random},  and hence Theorem 1 holds. In addition, Theorem 2 holds as long as the empirical spectral distribution (see Definition \ref{def:esd} in Appendix \ref{asymptotic}) of $\H\H^\mathsf{H}$ further has a continuous limiting distribution with a bounded support, with $\lambda$ in Theorem 2 following the limiting e.s.d. of $\H\H^\mathsf{H}$.
\item (Large scale fading) In the case where the users have different large scale fading variances, the channel can be modeled as $\H=\boldsymbol{\Sigma}_0\tilde{\H}$, where $\tilde{\H}$ satisfies Assumption 1 and $\boldsymbol{\Sigma}_0=\text{\normalfont{diag}}(\sigma_1, \sigma_2, \dots, \sigma_K)$ is a diagonal matrix with $\sigma_k$ representing the standard deviation of the large scale fading of user $k$, $k=1,2,\dots,K$.  Let $\tilde{\H}=\bU\bD\bV^\mathsf{H}$ be the SVD of $\tilde{\H}$.  Theorems 1 and 2 can be directly generalized to this more general channel model with precoding matrices of the following form:
\begin{equation*}
\bP=\bV f(\bD)^\mathsf{T}\bU^\mathsf{H}g(\boldsymbol{\Sigma}_0),
\end{equation*}
where $\boldsymbol{\Sigma}_0$ and $g$ satisfy some mild regularity conditions. The above class of precoding matrices still includes  MF precoding and ZF precoding as special cases. 
\end{itemize}}
\end{remark}

%\begin{remark}\label{rem:otherd}
%\textcolor{blue}{The above result holds for any distribution of $d$. More precisely, our derivation works as long as %$\H$ can be decomposed as $\H=\bU\bD\bV^\mathsf{H}$, where $\bU\sim\text{Haar}(K),~\bV\sim \text{Haar}(N),$ $\bU,\bV, \bD$ are independent, 
%$\{\bU, \bD,\bV\}$ in the SVD of $\H$ are independent with $\bU\sim\text{Haar}(K)$ and $\bV\sim \text{Haar}(N),$ and the empirical spectral distribution of $\H$ (see Definition \ref{def:esd}) converges weakly and almost surely to a deterministic distribution. For example, $\H$ can be a random column-orthonormal matrix (corresponding to the case that $\bD=[\mathbf{I}_K\quad\mathbf{0}]$).}
%\end{remark}

%At this point, we can define the \emph{scalar asymptotic model}:
%\begin{equation}\label{Asympmodel}
%\bar{y}= \overline{T}_\bs\cdot s+\overline{T}_\sg\cdot g+n
%\end{equation}
%where $s$ is uniformly drawn from $\mathcal{S}_M$, $g\sim\mathcal{CN}(0,1)$ and $n\sim\mathcal{CN}(0,\sigma^2)$ are the distortion and the channel noise respectively. Further, $s$ $g$ and $n$ are independent. Theorem \ref{asy} together with Lemma \ref{as} in Appendix \ref{prooflemma} implies that
%\eqref{Equiy} is asymptotically equivalent to \eqref{Asympmodel} in the sense that
%\[
%{\hat{y}_k}\xrightarrow{a.s.}\bar{y}\quad \forall k\in[K].
%\]

%Theorem \ref{asy} shows that the individual performances of the $K$ users are completely characterized by the scalar model \eqref{Asympmodel} in the asymptotic regime. 
Theorem \ref{asy} shows that in the asymptotic regime, the system model is in a simple ``signal plus independent Gaussian noise'' form as \eqref{Asympmodel}.  
In the following, we will characterize the individual performance of the $K$ users with the help of Theorem \ref{asy}.
Two commonly used performance measures are the SINR: 
\begin{equation}\label{def:sinr}
{\text{\normalfont{SINR}}}_k:=\frac{|\rho_k|^2\,\mathbb{E}[|s_k|^2]}{\mathbb{E}[|y_k|^2]-|\rho_k|^2\,\mathbb{E}[|s_k|^2]},
\end{equation} where $ \rho_k=\mathbb{E}[s_k^\dagger y_k]/\mathbb{E}[|s_k|^2]$, and the SEP:
 \begin{equation}\label{def:sep}{\text{\normalfont{SEP}}}_k(\beta):=\mathbb{P}\left(\text{dec}(\beta y_k)\neq s_k\right),\end{equation} where $\text{dec}(\cdot)$ is the decision function that maps its argument to the nearest constellation point in $\mathcal{S}_M$. 
 
We note that the SINR and SEP performance of the $K$ users {\red in} the asymptotic model \eqref{Asympmodel} are the same and can be   characterized by the following {\red asymptotic scalar} model:
\begin{equation}\label{scalarmodel}
\bar{y}:=\eta\overline{T}_\s\, s+\eta\overline{T}_\sg\, \sg +n,
\end{equation}
where $s\sim\text{Unif} (\mathcal{S}_M),~\sg\sim\mathcal{C}\mathcal{N}(0,1), ~n\sim\mathcal{CN}(0,\sigma^2)$ are independent. 
Following the definitions in \eqref{def:sinr} and \eqref{def:sep}, the SINR and SEP of the above scalar asymptotic  model  are given by
\begin{equation}\label{snr}
\begin{split}
%\overline{\text{\normalfont{SINR}}}=\frac{\left|\overline{T}_\bs\right|^2\cdot \sigma_s^2}{\overline{T}_\sg^2+\sigma^2}
\overline{\text{\normalfont{SINR}}}&=\frac{\sigma_s^2\,\eta^2\,\overline{T}_\bs^2}{\eta^2\,\overline{T}_\sg^2+\sigma^2}
%&=\frac{\mathbb{E}^2[d\cdot f(d)]}{\text{var}[d\cdot f(d)]+\frac{\overline{C}_2^2+\sigma^2}{\sigma_s^2\cdot |\overline{C}_1|^2} }\\
=\frac{\mathbb{E}^2[d\, f(d)]}{\text{var}[d\, f(d)]+\phi(\bar{\alpha},\eta)\, \frac{\mathbb{E}[f^2(d)]}{\gamma}},
\end{split}
\end{equation}
where $\bar{\alpha}=\sqrt{\sigma_s^2\,\mathbb{E}[f^2(d)]/\gamma}$ and 
\begin{equation}\label{Eqn:phi_def}
\phi(\bar{\alpha},\eta):=\frac{\mathbb{E}[|q(\bar{\alpha} \sZ)|^2]-|\mathbb{E}[\sZ^\dagger q(\bar{\alpha} \sZ)]|^2+\sigma^2/\eta^2}{|\mathbb{E}[\sZ^\dagger q(\bar{\alpha} \sZ)]|^2},
\end{equation} and 
\begin{equation*}
\overline{\text{\normalfont{SEP}}}(\beta)=\mathbb{P}\left(\text{\normalfont{dec}}(\beta\bar{y})\neq s\right).
\end{equation*}
The second step of \eqref{snr} is obtained based on the definitions of $\overline{T}_\bs$ and $\overline{T}_\sg$ in \eqref{Asympmodel2}. %Similarly, the SEP for the scalar asymptotic model \eqref{scalarmodel} is 
%\begin{equation}
%\overline{\text{\normalfont{SEP}}}(\beta)=\mathbb{P}\left(\text{\normalfont{dec}}(\beta\cdot\bar{y})\neq s\right),
%\end{equation}

Theorem \ref{corollary:sinrsep} below shows that both the SINR and the SEP performance of the original model converge to those of the scalar asymptotic model in \eqref{scalarmodel}. Its proof can be found in Appendix \ref{proof:corollary1}.

  \begin{theorem}\label{corollary:sinrsep}
 Denote $\widehat{\text{\normalfont{SINR}}}_k$ and $\widehat{\text{\normalfont{SEP}}}_k(\beta)$ %and $\overline{\text{\normalfont{SINR}}}$ 
 as the 
 \text{\normalfont{SINR}} and {\normalfont{SEP}} of user $k$ of the model in \eqref{Equiy}, respectively.  Under the asymptotic setting in Theorem \ref{asy}, the following hold for any $k\in[K]$ and $\beta\in\mathbb{C}$:
 %and \eqref{Asympmodel}, respectively. $\widehat{\text{\normalfont{SEP}}}_k(\beta)$ and $\overline{\text{\normalfont{SEP}}}(\beta)$ as the {\normalfont{SEP}} of user $k$ of models \eqref{Equiy} and \eqref{Asympmodel}.
\begin{enumerate}
 \item[(i)] $ \lim_{N,K\to\infty}\widehat{\text{\normalfont{SINR}}}_k=\overline{\text{\normalfont{SINR}}}$;
\item[(ii)] $\lim_{N,K\to\infty}\widehat{\text{\normalfont{SEP}}}_k(\beta)=\overline{\text{\normalfont{SEP}}}(\beta)$.
\end{enumerate}
 \end{theorem}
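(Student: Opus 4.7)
The plan is to leverage the pointwise almost sure convergence $(\hat{y}_k,s_k)\xrightarrow{a.s.}(\bar{y}_k,s_k)$ established in Theorem \ref{asy} and to promote it to the two distinct modes of convergence demanded by the SINR and SEP definitions \eqref{def:sinr}--\eqref{def:sep}. For the SINR, the goal is to pass to the $L^1$ limit of the second-order statistics $\mathbb{E}[|\hat{y}_k|^2]$ and $\mathbb{E}[s_k^\dagger\hat{y}_k]$, which requires uniform integrability on top of almost sure convergence. For the SEP, the issue is that the decision map $\text{dec}(\cdot)$ is discontinuous on the constellation boundaries, so the plan is to show that the limit $\bar{y}_k$ places no mass on these boundaries and then apply a continuous mapping plus bounded convergence argument.

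For part (i), I would first obtain moment bounds on $\hat{y}_k$ that are uniform in $N$ and $K$. Writing $\hat{y}_k=\eta(\H q(\bP\s))_k+n_k$ in the original coordinates, Assumption \ref{Ass:q} gives $\|q(\bP\s)\|^2\leq N\|q\|_\infty^2$, while classical random matrix theory yields that $\|\H\|$ is bounded with overwhelming probability, since the spectrum of $\H\H^\mathsf{H}$ converges to the compactly supported Marchenko--Pastur law and admits exponential tail bounds on the top singular value. Combining these with the Gaussian tail of $n_k$ produces $\sup_{N,K}\mathbb{E}[|\hat{y}_k|^{2+\delta}]<\infty$ for some $\delta>0$, which gives the required uniform integrability of $\{|\hat{y}_k|^2\}$ and $\{s_k^\dagger\hat{y}_k\}$. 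Combining uniform integrability with Theorem \ref{asy} then yields $\mathbb{E}[|\hat{y}_k|^2]\to\mathbb{E}[|\bar{y}_k|^2]$ and $\mathbb{E}[s_k^\dagger\hat{y}_k]\to\mathbb{E}[s_k^\dagger\bar{y}_k]$. Plugging into \eqref{def:sinr} and noting that the denominator stays bounded away from zero because $\sigma^2>0$ gives $\widehat{\text{SINR}}_k\to\overline{\text{SINR}}$.

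For part (ii), fix $s\in\mathcal{S}_M$ and condition on the event $\{s_k=s\}$. Under this conditioning, the scalar limit reduces to $\bar{y}_k=\eta\overline{T}_\s\, s+\eta\overline{T}_\sg\,\sg_{2,k}+n_k$, where $\sg_{2,k}$ and $n_k$ are independent complex Gaussians and $n_k\sim\mathcal{CN}(0,\sigma^2)$ with $\sigma^2>0$. Consequently the conditional law of $\bar{y}_k$ given $s_k=s$ is absolutely continuous with respect to two-dimensional Lebesgue measure on $\C$. The boundary of the decision region $\{z\in\C:\text{dec}(\beta z)\neq s\}$ is a finite union of line segments of Lebesgue measure zero, hence carries zero mass under this conditional law. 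The indicator $\mathbf{1}\{\text{dec}(\beta\,\cdot\,)\neq s\}$ is therefore continuous $\bar{y}_k$-almost surely, and the continuous mapping theorem combined with Theorem \ref{asy} gives $\mathbf{1}\{\text{dec}(\beta\hat{y}_k)\neq s\}\xrightarrow{a.s.}\mathbf{1}\{\text{dec}(\beta\bar{y}_k)\neq s\}$. Bounded convergence promotes this to convergence in expectation, and averaging over $s\in\mathcal{S}_M$ via the uniform prior on $s_k$ delivers $\widehat{\text{SEP}}_k(\beta)\to\overline{\text{SEP}}(\beta)$.

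The principal technical obstacle is the uniform moment bound in step (i). The statistically equivalent representation in Theorem \ref{Equimodel} expresses $T_\s$ and $T_\sg$ through ratios such as $1/\|\sg_1\|$ and $1/\|(\bR(\bs)^{-1}\sg_2)[2:K]\|$ that can a priori be large on rare events. The argument relies on standard concentration of $\chi^2$-distributed norms to show that these small-denominator events are exponentially rare in $N$, so that a truncation controlling the contribution from these exceptional sets secures the $(2+\delta)$-th moment bound uniformly in $N,K$. Once this concentration step is in place, the remainder is a routine soft-to-hard passage, and no further use of the delicate structure of the Householder-dice construction is required.
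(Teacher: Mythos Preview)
Your overall strategy---promote the almost sure convergence of Theorem~\ref{asy} to $L^1$ via uniform integrability for part~(i), and handle the discontinuous decision map via a zero-boundary-mass argument for part~(ii)---is exactly what the paper does, and your treatment of part~(ii) matches the paper's proof essentially line for line.

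There is, however, a gap in the moment bound you propose for part~(i) via the original coordinates. From $\|q(\bP\s)\|^2\le N\|q\|_\infty^2$ and $\|\H\|=O(1)$ you only obtain $\|\H q(\bP\s)\|\le C\sqrt{N}$ for the full $K$-vector, and hence merely $|(\H q(\bP\s))_k|\le C\sqrt{N}$ for a single coordinate; this does not give $\sup_{N,K}\mathbb{E}[|\hat y_k|^{2+\delta}]<\infty$. The obstruction is that the $k$-th row of $\H$ and $q(\bP\s)$ are coupled through $\bP$, so the inner product cannot be treated as a conditionally Gaussian sum without a separate decoupling or leave-one-out step that you do not supply. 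The paper avoids this entirely by working with the equivalent representation $\hat y_k=\eta T_\s s_k+\eta T_\sg\,\sg_2[k]+n_k$ from Theorem~\ref{Equimodel} and bounding $\mathbb{E}[|\hat y_k|^4]$ through the moments of $T_\s$ and $T_\sg$. Your own final paragraph points to precisely this route (concentration of the $\chi^2$-type denominators in $T_\s$, $T_\sg$ plus truncation), and once you commit to it the argument is complete; in fact, the truncation you sketch is slightly more careful than the paper's own one-line claim that $\mathbb{E}[|T_\s|^8]\to|\overline T_\s|^8$ follows from almost sure convergence. So: drop the original-coordinate argument, keep the equivalent-model argument you outline at the end, and the proof goes through along the same lines as the paper.
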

 \vspace{5pt}

With the above theorem, we can give predictions of the SINR and SEP performance of the original model based on the   {\red asymptotic scalar} model in \eqref{scalarmodel}. Note that $\overline{\text{SEP}}(\beta)$ is defined for a scalar additive white Gaussian noise (AWGN) channel and therefore can be easily computed for specific constellation types. 
%Corollary \ref{corollary:sinrsep} illustrates that both the SINR and SEP performances of the original model could be approximated by those of the corresponding scalar asymptoic model. 
Clearly, a natural scaling factor should be
\begin{equation}\label{factor}
\beta=\frac{\overline{T}_\s^\dagger}{\eta|\overline{T}_\s|^2}.
\end{equation}
We denote the corresponding asymptotic SEP as $\overline{\text{SEP}}$.   When $M$-PSK modulation is adopted, $\overline{\text{SEP}}$ can be tightly approximated as  \cite[Section 4.3-2]{digitalcommunication}:
\begin{equation}\label{psk}
\overline{\text{SEP}}\approx2Q\left(\sqrt{2}\sin\frac{\pi}{M}\sqrt{\overline{\text{SINR}}}\right),
\end{equation}
where $Q(x)=\frac{1}{\sqrt{2\pi}}\int_x^{\infty}e^{-\frac{t^2}{2}}dt$. For $M$-QAM modulation, the SEP  has an explicit expression \cite[Section 4.3-3]{digitalcommunication}: 
\begin{equation}\label{qam}
\overline{\text{SEP}}=4\left(1-\frac{1}{\sqrt{M}}\right)E_0-4\left(1-\frac{1}{\sqrt{M}}\right)^2E_0^2
\end{equation}
where 
\begin{equation}\label{e0}
 E_0=Q\left(\sqrt{\frac{3}{M-1}\,\overline{\text{SINR}}}\right).
\end{equation}
In the rest of this paper, we assume that $\beta$ in \eqref{factor} is used unless otherwise stated. {\color{black}It is worth noting that the asymptotic SEP formulas in \eqref{psk} and \eqref{qam}, though derived under the asymptotic assumption that the system dimension grows to infinity, are also accurate for practical systems with moderate dimensions; see the simulation results in Section \ref{section6}.}

\subsection{Example: QCE Precoding}
So far, our results hold for general $q(\cdot)$ satisfying Assumption \ref{Ass:q}. In this subsection, we specify our results to the case of CE quantizer $\qce(\cdot)$. We start with a few properties of $\qce(\cdot)$.

\begin{lemma}[Properties of $\qce(\cdot)$]\label{lemma:qce}\hspace{5cm}
\begin{enumerate}
\item[(i)] $|\qce(x)|=1,~\forall\, x\in \C$;\vspace{3pt}
\item[(ii)] $\qce(\alpha x)=\qce(x),~\forall\, x\in \C,\alpha>0$; \vspace{3pt}
\item[(iii)] $\mathbb{E}[{\sZ}^\dagger \qce({\sZ})]=\frac{L\sin\frac{\pi}{L}}{2\sqrt{\pi}}$, where $\sZ\sim\mathcal{C}\mathcal{N}(0,1)$ and $L$ is the number of quantization levels \cite{CEQ}. 
\end{enumerate}
\end{lemma}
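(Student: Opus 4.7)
The three claims in Lemma \ref{lemma:qce} each follow from the geometric interpretation of $\qce(\cdot)$ as a nearest-neighbor projection onto the $L$ equispaced points on the unit circle given in \eqref{eq:xl2}. My plan is to dispatch (i) and (ii) directly from the definition, then compute the expectation in (iii) by conditioning on the polar decomposition of $\sZ$.

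For (i), every element of $\mathcal{X}_L$ has the form $e^{j(2\ell-1)\pi/L}$ and therefore unit modulus, so the nearest-point output $\qce(x)$ trivially satisfies $|\qce(x)|=1$ for all $x\in\C$. For (ii), I would note that Euclidean distance to a point on the unit circle is monotone in the angular separation. Since $\arg(\alpha x)=\arg(x)$ for any $\alpha>0$, the minimizing $\ell$ is unchanged by multiplication by a positive real scalar, giving $\qce(\alpha x)=\qce(x)$.

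For (iii) I would use polar coordinates: write $\sZ = R e^{j\Theta}$ where $R=|\sZ|$ has the Rayleigh-type density $2re^{-r^2}$ on $(0,\infty)$ and $\Theta$ is uniform on $[0,2\pi)$, with $R$ and $\Theta$ independent (this is the standard decomposition for $\mathcal{CN}(0,1)$). By property (ii), $\qce(\sZ)=\qce(e^{j\Theta})$, so the problem decouples. Writing the angular quantizer as $\hat{q}(\theta)=(2\ell-1)\pi/L$ on the wedge $[(2\ell-2)\pi/L,\,2\ell\pi/L)$, I get
\begin{equation*}
\mathbb{E}[\sZ^\dagger \qce(\sZ)] \;=\; \mathbb{E}[R]\cdot\mathbb{E}\!\left[e^{j(\hat{q}(\Theta)-\Theta)}\right].
\end{equation*}
The factor $\mathbb{E}[R]=\sqrt{\pi}/2$ is standard. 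For the angular factor, symmetry across the $L$ wedges (each of length $2\pi/L$) means the random variable $\hat{q}(\Theta)-\Theta$ is uniformly distributed on $[-\pi/L,\pi/L)$, so
\begin{equation*}
\mathbb{E}\!\left[e^{j(\hat{q}(\Theta)-\Theta)}\right] \;=\; \frac{L}{2\pi}\int_{-\pi/L}^{\pi/L} e^{j\phi}\,d\phi \;=\; \frac{L\sin(\pi/L)}{\pi}.
\end{equation*}
Multiplying gives $\frac{\sqrt{\pi}}{2}\cdot\frac{L\sin(\pi/L)}{\pi}=\frac{L\sin(\pi/L)}{2\sqrt{\pi}}$, as claimed.

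There is no real obstacle here; the only point that needs care is the angular book-keeping in (iii), namely verifying that each constellation point $e^{j(2\ell-1)\pi/L}$ is indeed the nearest-neighbor target for arguments in the wedge $[(2\ell-2)\pi/L,\,2\ell\pi/L)$ (so the residual angle $\hat{q}(\Theta)-\Theta$ is symmetric around zero on each wedge). Once that identification is made, the expectation reduces to an elementary integral, and the cited result from \cite{CEQ} is recovered.
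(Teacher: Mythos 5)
Your proof is correct. The paper itself gives no proof of this lemma—parts (i) and (ii) are treated as immediate from the definition of $\qce(\cdot)$ in \eqref{eq:xl2}, and part (iii) is imported from \cite{CEQ}—so your polar-coordinate derivation is a valid self-contained replacement. All the ingredients check out: for $\sZ\sim\mathcal{CN}(0,1)$ the modulus $R$ and argument $\Theta$ are independent with $\mathbb{E}[R]=\sqrt{\pi}/2$ and $\Theta$ uniform, the decision wedges of angular width $2\pi/L$ are centered at the constellation angles $(2\ell-1)\pi/L$ so the residual angle is uniform on $[-\pi/L,\pi/L)$ with mean phasor $\frac{L\sin(\pi/L)}{\pi}$, and the product recovers $\frac{L\sin(\pi/L)}{2\sqrt{\pi}}$. (Ties on wedge boundaries and at $x=0$ are measure-zero or handled by any fixed tie-breaking rule, so they do not affect (ii) or (iii).)
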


Under the above properties of $\qce(\cdot)$, the asymptotic SINR in \eqref{snr} simplifies to
 \begin{equation}\label{snrqce}
\begin{aligned}
\overline{\text{SINR}}_{\text{QCE}}=\frac{\mathbb{E}^2[{d}\, f({d})]}{\text{var}[{d}\, f({d})]+\frac{C_{L,\sigma,\eta}}{\gamma}\,\mathbb{E}[f^2({d})]},
\end{aligned}
\end{equation} 
where 
\begin{equation}\label{flsigma}
 C_{L,\sigma,\eta}=\frac{1-\frac{L^2\sin^2\frac{\pi}{L}}{4\pi}+\sigma^2/\eta^2}{\frac{L^2\sin^2\frac{\pi}{L}}{4\pi}}.
\end{equation}
The asymptotic SEP is a simple function of the SINR, as discussed {\red above}.%below Theorem \ref{corollary:sinrsep}.

By further specifying the nonlinear function $f(\cdot)$ {\red as} $f(x)=x$ and $f(x)=x^{-1}$ respectively (see discussions in Section \ref{section2A}) and using  \cite[Eq.~(2.104)]{tulino2004random}, we can {\red obtain} the following asymptotic SINR formulas for the quantized MF and ZF precoders:
\begin{equation*}
 \begin{aligned}
 \overline{\text{SINR}}_{\text{QCE}}^{\text{MF}}&= \frac{\gamma}{C_{L,\sigma,\eta}+1},\\
 \overline{\text{SINR}}_{\text{QCE}}^{\text{ZF}} &=\frac{\gamma-1}{C_{L,\sigma,\eta}},
 \end{aligned}
 \end{equation*}
where $C_{L,\sigma,\eta}$ is defined in \eqref{flsigma}. The above SINR formula for quantized ZF precoding has been obtained for the one-bit case (i.e., $L=4$) using the Bussgang decomposition technique in \cite{ZF}.

\section{Optimal Linear-quantized Precoding}\label{section5}

For the precoding scheme in \eqref{Eqn:P}, the function $f(\cdot)$ can be designed to optimize the system SEP performance. In this section, we derive the optimal $f(\cdot)$ based on the asymptotic characterization developed in Theorem \ref{asy}.

\subsection{Optimal Linear-Quantized Precoding}

Our goal is to find the optimal $f(\cdot)$ in terms of the asymptotic SEP performance. First, maximizing $\overline{\text{SINR}}$ is equivalent to minimizing $\overline{\text{SEP}}$, as shown in Appendix \ref{App:SINR_SEP}. In what follows, we shall focus on the following SINR maximization problem (see \eqref{snr}):
\begin{equation}\label{prob:maxsinr}
\begin{aligned}
\zeta^*:=\sup_{f,\eta>0,\bar{\alpha}>0}~&\frac{\mathbb{E}^2[d\, f(d)]}{\text{var}[d\, f(d)]+\phi(\bar{\alpha},\eta)\, \frac{\mathbb{E}[f^2(d)]}{\gamma} }\\
\text{s.t. }~~~~
&\eta^2\,\mathbb{E}[|q(\bar{\alpha} \sZ)|^2]\leq P_T,\\
%&\bar{\alpha}=\sqrt{\frac{\sigma_s^2\cdot \mathbb{E}[f^2(d)]}{\gamma}},\\
&\mathbb{E}[f^2(d)]=\frac{\gamma}{\sigma_s^2}\, \bar{\alpha}^2,\\
%&\bar{\alpha}\leq P_I,
\end{aligned}
\end{equation} 
where ${\sZ}\sim\mathcal{C}\mathcal{N}(0,1)$, $P_T>0$ is the maximum average transmit power in \eqref{powerconstraint}, $d$ is defined in Theorem \ref{asy}, and $\phi(\bar{\alpha},\eta)$ is defined in \eqref{Eqn:phi_def}.  If $\bar \alpha$ in the second constraint is eliminated and substituted into the first constraint, then the obtained constraint represents the asymptotic counterpart of the actual average power constraint in \eqref{powerconstraint}. Therefore, the function $f(\cdot)$ and the transmit power $\eta$  are jointly optimized in problem \eqref{prob:maxsinr}.
%\begin{equation}\label{Eqn:phi_def2}
%\phi(\bar{\alpha},\eta):=\frac{\mathbb{E}[|q(\bar{\alpha}\cdot \sZ)|^2]-|\mathbb{E}[\sZ^\dagger q(\bar{\alpha}\cdot \sZ)]|^2+\sigma^2/\eta^2}{|\mathbb{E}[\sZ^\dagger q(\bar{\alpha}\cdot \sZ)]|^2}.
%\end{equation}

%To ensure feasibility of problem \eqref{prob:maxsinr}, we require  $P_T\geq \min_f\mathbb{E}[|q(\bar{\alpha}\cdot \sZ)|^2]$.
%We impose the following power constraint on the asymptotic model \eqref{Asympmodel}.
%\begin{assumption}\label{assumption2}
% $\mathbb{E}[|q(\bar{\alpha}\cdot Z)|^2]\leq P_T$, where $P_T\geq \min_f\mathbb{E}[|q(\bar{\alpha}\cdot Z)|^2]$. 
%\end{assumption}

%\begin{remark}[Transmit power constraint]
%Note that the transmitted vector is $\eta\, q(\mathbf{Ps})$. The average transmit power constraint in \eqref{powerconstraint} then reads $\frac{1}{N}\,\mathbb{E}[\eta^2\|q(\bP\s)\|^2]\le P_T$. The constraint $\eta^2\,\mathbb{E}[|q(\bar{\alpha}\cdot \sZ)|^2]\leq P_T$ in \eqref{prob:maxsinr} is the asymptotic counterpart of the actual average power constraint.
%\end{remark}

%\begin{remark}
%In \eqref{prob:maxsinr}, the quantization function $q(\cdot)$ is treated as fixed. A related research problem is the optimization of $q$ subject to the total number of quantization levels (see, e.g., \cite{WFQ,pointdesign2}). We do not pursue this direction in this paper and leave it as possible future work.
%\end{remark}

The SINR maximization problem \eqref{prob:maxsinr} may seem challenging to solve as it involves optimization over a function $f(\cdot)$. Moreover, the {\red variables} $(\bar{\alpha},\eta, f)$ are coupled by the constraints, which further complicates the problem.
Fortunately, the optimal solution of \eqref{prob:maxsinr} has a simple structure, as shown by the following theorem.

\begin{theorem}\label{optimalprecoder}
{\red Suppose that the following infimum is attained by some $\bar{\alpha}^\ast>0$:}
\begin{subequations}\label{Eqn:theorem_all}
\begin{equation}\label{alpha}
\phi^*:=\left(1+\frac{\sigma^2}{P_T}\right)\inf_{\bar{\alpha}> 0}\frac{\mathbb{E}[|q(\bar{\alpha} \sZ)|^2]}{\left|\mathbb{E}[\sZ^\dagger q(\bar{\alpha} \sZ)]\right|^2}-1.
\end{equation}
Then, $(\bar{\alpha}^\ast,\eta^\ast,f^\ast)$ is an optimal solution to the problem in \eqref{prob:maxsinr} where
\begin{equation}\label{eta}
\eta^\ast:=\sqrt{\frac{P_T}{\mathbb{E}[|q(\bar{\alpha}^*\sZ)|^2]}},
\end{equation}
\begin{equation}\label{fstar}
f^\ast(d):=\pm\frac{d}{\tau^*(d^2+\frac{\phi^\ast}{\gamma})},
\end{equation}
 \begin{equation}\label{tau}
\tau^\ast:=\sqrt{\frac{\sigma_s^2}{\gamma (\bar{\alpha}^\ast)^2} \, \mathbb{E}\left[\left(\frac{d}{d^2+\frac{\phi^\ast}{\gamma}}\right)^2\right]}.
 \end{equation}
Furthermore, the maximum SINR in \eqref{prob:maxsinr} is equal to
\begin{equation}\label{Eqn:zeta_opt}
\zeta^*=\frac{1}{1-\mathbb{E}\left[\frac{d^2}{d^2+\frac{\phi^{*}}{\gamma}}\right]}-1.
\end{equation}
\end{subequations}
\end{theorem}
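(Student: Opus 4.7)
The plan is to decouple the joint optimization in \eqref{prob:maxsinr} into three nested subproblems: first optimize over $\eta$ for fixed $(\bar{\alpha},f)$; then optimize over $f$ for fixed $\bar{\alpha}$; and finally optimize over $\bar{\alpha}$. The key structural observation is that $\eta$ enters the objective only through the term $\sigma^2/\eta^2$ inside $\phi(\bar{\alpha},\eta)$, while the $\eta$-constraint is a simple upper bound independent of $f$; this cleanly peels $\eta$ off first. Step two is where the heavy lifting happens, via a weighted Cauchy-Schwarz inequality; step three reduces to a scalar optimization that directly reproduces \eqref{alpha}.

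For the first step, note that $\phi(\bar{\alpha},\eta)$ is strictly decreasing in $\eta$, so to maximize the SINR the power budget must bind, giving $\eta^{\ast 2}=P_T/\mathbb{E}[|q(\bar{\alpha}\sZ)|^2]$, which is exactly \eqref{eta}. Substituting this back reduces $\phi$ to a function of $\bar{\alpha}$ alone,
\begin{equation*}
\phi(\bar{\alpha}) = \Bigl(1+\tfrac{\sigma^2}{P_T}\Bigr)\frac{\mathbb{E}[|q(\bar{\alpha}\sZ)|^2]}{|\mathbb{E}[\sZ^\dagger q(\bar{\alpha}\sZ)]|^2} - 1,
\end{equation*}
which is precisely the expression inside the infimum in \eqref{alpha}.

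Next, observe that the SINR in \eqref{prob:maxsinr} is scale-invariant in $f$: replacing $f$ by $cf$ for any $c>0$ multiplies both numerator and denominator of the objective by $c^2$. Thus, for fixed $\bar{\alpha}$, the constraint $\mathbb{E}[f^2(d)]=\gamma\bar{\alpha}^2/\sigma_s^2$ merely pins down the scale of the optimizer without affecting the achieved SINR value; we may therefore optimize the shape of $f$ and read off the scale at the end. Writing $\phi:=\phi(\bar{\alpha})$, it remains to minimize $\mathbb{E}[(d^2+\phi/\gamma)f^2(d)]/\mathbb{E}^2[df(d)]$, which I will handle by the weighted Cauchy-Schwarz inequality
\begin{equation*}
\mathbb{E}^2[df(d)] = \Bigl(\mathbb{E}\bigl[\sqrt{w(d)}\,f(d)\cdot d/\sqrt{w(d)}\bigr]\Bigr)^{2} \leq \mathbb{E}[w(d)f^2(d)]\cdot \mathbb{E}[d^2/w(d)]
\end{equation*}
with weight $w(d)=d^2+\phi/\gamma$, equality holding iff $f(d)\propto d/w(d)$. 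This immediately yields the optimal profile \eqref{fstar} up to sign, with the normalization constant $\tau^*$ in \eqref{tau} fixed by $\mathbb{E}[(f^*)^2(d)]=\gamma(\bar{\alpha}^*)^2/\sigma_s^2$, and the SINR bound $\mathrm{SINR}\leq Y(\bar{\alpha})/(1-Y(\bar{\alpha}))$ with $Y(\bar{\alpha}):=\mathbb{E}[d^2/(d^2+\phi(\bar{\alpha})/\gamma)]$.

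Finally, since $Y$ is strictly decreasing in $\phi$ and $y\mapsto y/(1-y)$ is strictly increasing on $(0,1)$, maximizing the SINR over $\bar{\alpha}$ is equivalent to minimizing $\phi(\bar{\alpha})$, which yields $\phi^*$ in \eqref{alpha}; the claimed value \eqref{Eqn:zeta_opt} then follows from the elementary identity $Y/(1-Y)=1/(1-Y)-1$. The main technical step is the Cauchy-Schwarz argument together with its equality condition; a mild bookkeeping task is to confirm that the candidate $f^*$ respects Assumption \ref{Ass:f}, which is immediate since $d\mapsto d/(d^2+\phi^*/\gamma)$ is smooth, strictly positive, and bounded on any compact subset of $(0,\infty)$. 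Existence of a minimizer $\bar{\alpha}^*$ of the scalar map $\bar{\alpha}\mapsto\phi(\bar{\alpha})$ is taken as a hypothesis of the theorem, so no compactness argument is needed on my part.
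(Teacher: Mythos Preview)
Your proposal is correct and follows essentially the same approach as the paper: both arguments hinge on the identical weighted Cauchy--Schwarz inequality $\mathbb{E}^2[d\,f(d)]\le \mathbb{E}[(d^2+\phi/\gamma)f^2(d)]\cdot\mathbb{E}[d^2/(d^2+\phi/\gamma)]$ with equality at $f(d)\propto d/(d^2+\phi/\gamma)$, together with the monotonicity of $\phi(\bar{\alpha},\eta)$ in $\eta$ to force the power constraint to bind. The only cosmetic difference is the order of elimination---you peel off $\eta$ first and then apply Cauchy--Schwarz, whereas the paper applies Cauchy--Schwarz first (obtaining an $f$-free lower bound, so the $f$-constraint drops out) and then optimizes over $\eta$; both routes arrive at the same scalar problem \eqref{alpha} and the same optimizer.
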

\begin{proof}
See Appendix \ref{proof:Th4}.
\end{proof}

 \vspace{3pt}
 \begin{remark}\label{remark1}
 Some comments on Theorem \ref{optimalprecoder} are in order.
\begin{enumerate}
\item In Theorem \ref{optimalprecoder}, we do not impose the positivity constraint on $f$, and an optimal $f$ satisfying $f>0$ always exists; see \eqref{fstar}. % (there is always a positive $\tau(\bar{\alpha}^*)$ and a negative $\tau(\bar{\alpha}^*)$ that ensure $\mathbb{E}[f^2]=\frac{\gamma}{\sigma_s^2}\cdot \bar{\alpha}^*)$. 
 This supports our claim below Assumption \ref{Ass:f} that the positivity assumption on $f$ does not impose any restriction in terms of the best achievable performance. 
\item We assume that the {\red infimum in \eqref{alpha} is attainable}. Our numerical results suggest that this holds for commonly used quantization {\red functions} $q(\cdot)$. In {\red cases} where the infimum in \eqref{alpha} is not attainable, we may modify the theorem using a simple truncation argument. Specifically, we add an additional constraint $\frac{1}{M}\le\bar{\alpha}\le M$ to \eqref{alpha}  and let $\bar{\alpha}^\ast_M$ be any optimal solution to the constrained problem. We define $\phi_M^\ast$, $\eta^\ast_M$ and $f^\ast_M$ as in \eqref{Eqn:theorem_all} with $\bar{\alpha}^\ast$ replaced by $\bar{\alpha}^\ast_M$. Then, it can be shown that the SINR achieved by $(\bar{\alpha}_M^\ast,\eta^\ast_M,f^\ast_M)$ tends to $\zeta^\ast$ as $M\to\infty$.
 \end{enumerate}
 \end{remark}
 
  \vspace{5pt}

{\red We now take a closer look at \eqref{alpha}. Loosely speaking, it} may be interpreted as finding the optimal input power for the quantization function $q(\cdot)$.  
% It can be understood as optimizing the power of a Gaussian input signal (the signal before quantization)  such that  the relative power of the quantization error is smallest, i.e.,  the SINR is largest. %it lies in the optimal quantization region, i.e., the region in which the power of the quantization error is (relatively) smallest.   
More precisely, using the Baussgang decomposition technique, we can decompose $q(\bar{\alpha} \sZ)$ as
\[
q(\bar{\alpha} \sZ)\overset{d}{=}\mathbb{E}[\sZ^\dagger q(\bar{\alpha} \sZ)]\, \sZ+d,
\]
where $\sZ\sim\mathcal{CN}(0,1)$ and $d$ models the quantization error which is uncorrelated with $\sZ$. Then, problem \eqref{alpha} can be viewed as maximizing the signal to quantization error plus noise ratio (SQNR):
\[
\begin{split}
\text{SQNR}&:=\frac{ |\mathbb{E}[\sZ^\dagger q(\bar{\alpha} \sZ)]|^2\, \mathbb{E}[|\sZ|^2]  }{\mathbb{E}[|d|^2]}\\
&=\frac{ |\mathbb{E}[\sZ^\dagger q(\bar{\alpha} \sZ)]|^2 }{\mathbb{E}[|q(\bar{\alpha} \sZ)|^2]-|\mathbb{E}[\sZ^\dagger q(\bar{\alpha} \sZ)]|^2}\\
&=\frac{ 1 }{\frac{\mathbb{E}[|q(\bar{\alpha} \sZ)|^2]}{|\mathbb{E}[\sZ^\dagger q(\bar{\alpha} \sZ)]|^2}-1}.
\end{split}
\]
%We may choose $\bar{\alpha}$ to maximize the SQNR, which is precisely the problem in \eqref{alpha}.  
For a general $q(\cdot)$, problem \eqref{alpha} can be solved by a one-dimensional search. For the special case of {\red the} CE quantizer, the objective function of \eqref{alpha} is actually constant (see Lemma \ref{lemma:qce}) and the problem is trivial; see details in the next subsection. %and we can easily solved. (

\begin{remark}[Connection with the WFQ precoder \cite{WFQ}]
Theorem \ref{optimalprecoder} shows that for precoding matrices of the form $\bP=\bV f(\bD)^\mathsf{T}\bU^\mathsf{H}$, the {\color{black}asymptotically} optimal one is precisely the RZF precoder:  
\begin{equation}\label{eqn:optp}
\bP^*=\frac{1}{\tau^*}\H^\mathsf{H}\left(\H\H^\mathsf{H}+\frac{\phi^*}{\gamma}\,\mathbf{I}_{\color{black} K}\right)^{-1},
\end{equation}
where ${\tau^*}$ and $\phi^*$ can be obtained by solving the one-dimensional optimization problem in \eqref{alpha}. Interestingly, \eqref{eqn:optp} is closely related to the WFQ precoder{\red\cite{WFQ} designed for traditional quantized precoding}:
$$\bP_{\text{WFQ}}=\frac{1}{g_0}\left(\H^\mathsf{H}\H-\rho_q \text{\normalfont{nondiag}}(\H^\mathsf{H}\H)+\frac{\sigma^2}{\gamma P_T} \mathbf{I}_{\color{black}N}\right)^{-1}\H^\mathsf{H},$$
where $g_0$ is a scaling factor that ensures {\red a} certain power constraint to be satisfied. %$\frac{\text{tr}\left(\bP_{\text{WFQ}}\bP_{\text{WFQ}}^\mathsf{H}\right)}{N}=\frac{P_T}{(1-\rho_q)\sigma_s^2}$.}
As $N,K\to\infty$, the diagonal entries of $\H^\mathsf{H}\H$ converge to
\begin{equation}\label{hhii}
(\H^\mathsf{H}\H)[{i,i}]=\sum_{j=1}^K|\H[j, i]|^2\xrightarrow{a.s.} \frac{1}{\gamma},\quad i=1,2,\dots, N.
\end{equation}
Therefore, $\bP_{\text{WFQ}}$ can be approximated as
\begin{equation}\label{wfqrzf}
\begin{aligned}
\bP_{\text{WFQ}}=&\frac{1}{g_0}\left(\H^\mathsf{H}\H-\rho_q \text{\normalfont{nondiag}}(\H^\mathsf{H}\H)+\frac{\sigma^2}{\gamma P_T}\mathbf{I}_{\color{black}N}\right)^{-1}\H^\mathsf{H}\\
{=}\,&\frac{1}{g_0}\left((1-\rho_q) \,\H^{\mathsf{H}}\H+\rho_q\text{\normalfont{diag}}(\H^\mathsf{H}\H)+\frac{\sigma^2}{\gamma P_T}\mathbf{I}_{\color{black}N}\right)^{-1}\hspace{-0.2cm}\H^\mathsf{H}\\
\overset{(a)}{\approx}&\frac{1}{g_0}\left((1-\rho_q) \,\H^{\mathsf{H}}\H+\frac{\rho_q+\frac{\sigma^2}{P_T}}{\gamma}\,\mathbf{I}_{\color{black}N}\right)^{-1}\H^\mathsf{H}\\
\overset{(b)}{=}\,&\frac{1}{g_0}\H^{\mathsf{H}}\left((1-\rho_q)\, \H\H^{\mathsf{H}}+\frac{\rho_q+\frac{\sigma^2}{P_T}}{\gamma}\,\mathbf{I}_{\color{black}K}\right)^{-1}\\
\overset{(c)}{=}\,&\frac{1}{g_0(1-\rho_q)}\, \H^{\mathsf{H}}\left(\H\H^{\mathsf{H}}+\frac{\rho_q+\frac{\sigma^2}{P_T}}{\gamma(1-\rho_q)}\,\mathbf{I}_{\color{black}K}\right)^{-1},\\
%=\,&\frac{g_0}{1-\rho_q}\cdot\H^{\mathsf{H}}\left( \H\H^{\mathsf{H}}+\frac{\phi^*}{\gamma}\cdot\mathbf{I}\right)^{-1}.
\end{aligned}
\end{equation}
%Based on this, we further have the following approximation:
%\textcolor{blue}{
%$$\frac{1}{N}\text{tr}\left(\bP_{\text{WFQ}}\bP_{\text{WFQ}}^\mathsf{H}\right)\approx\frac{1}{g_0^2(1-\rho_q)^2\gamma}\mathbb{E}\left[\left(\frac{d}{d^2+\frac{\rho_q+{\sigma^2/P_T}}{\gamma(1-\rho_q)}}\right)^2\right],$$
%i.e., $$g_0(1-\rho_q)\approx \sqrt{\frac{(1-\rho_q)\sigma_s^2}{\gamma P_T}\,\mathbb{E}\left[\left(\frac{d}{d^2+\frac{\rho_q+{\sigma^2/P_T}}{\gamma(1-\rho_q)}}\right)^2\right]}.$$
%}
{\color{black}where $(a)$ is due to \eqref{hhii}, (b) uses the fact that $\left(\bX^\mathsf{H}\bX+\rho \mathbf{I}_N\right)^{-1}\bX^\mathsf{H}=\bX^\mathsf{H}\left(\bX\bX^{\mathsf{H}}+\rho \mathbf{I}_K\right)^{-1}$ for any $\bX\in\C^{K\times N}$ and $\rho>0$, and (c) is obtained by extracting the  coefficient $1-\rho_q$ from the inverse matrix.} Comparing {\color{black} the last line of } \eqref{wfqrzf} with \eqref{eqn:optp}, we see that both $\bP_{\text{WFQ}}$ and  $\bP^\ast$ are RZF precoders, and are identical if the regularization parameters and the scaling factors are the same:
\begin{equation*}\label{Eqn:rho_q}
\begin{split}
\phi^\ast&=\frac{\rho_q+\frac{\sigma^2}{P_T}}{1-\rho_q},\quad\tau^*=g_0(1-\rho_q).
\end{split}
\end{equation*}
%{\color{black}(We also need $1/\tau^\ast=\frac{g_0}{1-\rho_q}$, but the scaling only affects the transmit power but not the effective SINR at the receiver side.)} Noting the definitions of $\phi^\ast$ and \textcolor{blue}{$\tau^*$} in \eqref{alpha} and \eqref{tau} and after some simple algebra, we can rewrite \eqref{Eqn:rho_q} as
%\begin{equation}\label{Eqn:wfqrzf_final}
%\begin{aligned}
%&\inf_{\bar{\alpha}\geq 0}\frac{\mathbb{E}[|q(\bar{\alpha}\cdot \sZ)|^2]}{\left|\mathbb{E}[\sZ^\dagger q(\bar{\alpha}\cdot \sZ)]\right|^2}=\frac{1}{1-\rho_q},\\
%&\sqrt{\frac{P_T}{1-\rho_q}} \in\arg\min_{\bar{\alpha}\geq 0}\frac{\mathbb{E}[|q(\bar{\alpha}\cdot \sZ)|^2]}{\left|\mathbb{E}[\sZ^\dagger q(\bar{\alpha}\cdot \sZ)]\right|^2}.
%\end{aligned}
%\end{equation}

Note that %the regularization parameters for 
$\bP_{\text{WFQ}}$ and $\bP^\ast$ are derived based on different criteria and motivations, and not directly comparable. For the WFQ precoder, the independent quantizer $q_{\text{\tiny I}}(\cdot)$ is optimized together with the linear precoder  (see also \cite{pointdesign1,pointdesign2} for optimization of the independent quantizer $q_{\text{\tiny I}}(\cdot)$), and the quantization intervals for $q_{\text{\tiny I}}(\cdot)$ have to be chosen carefully to satisfy several conditions (see \cite[Eqs.~(7)-(9)]{WFQ}). On the other hand, $q(\cdot)$ is a generic fixed function in this work, and the input power of $q(\cdot)$ (dictated by $\bar{\alpha}$) is properly optimized.
\end{remark}

\subsection{Example: QCE Precoding}

In this subsection, we consider the special case of QCE precoding. As a direct corollary of Theorem \ref{optimalprecoder}, we have the following result.

\begin{corollary}\label{optimalprecoder_qce}
When $q(\cdot)$ is specified as $\qce(\cdot)$ in problem \eqref{prob:maxsinr}, the optimal $\eta$ is $\eta^*=\sqrt{P_T}$, and the optimal $f$ has a closed-form expression:
\begin{equation}\label{optqcef}
f^*(d)=\frac{d}{d^2+\frac{C_{L,\sigma,\eta^*}}{\gamma}},
\end{equation} where $C_{L,\sigma,\eta^*}$ is given in \eqref{flsigma}.
In this case, the {\color{black}asymptotically} optimal precoding matrix is given by 
\begin{equation}\label{optqce}
\bP^*_{\text{QCE}}=%\frac{1}{\tau}
\H^\mathsf{H}\left(\H\H^\mathsf{H}+\frac{C_{L,\sigma,\eta^*}}{\gamma}\,\mathbf{I}_{\color{black}K}\right)^{-1},
\end{equation}
and its asymptotic SINR is 
\begin{equation*}\label{sqirrzf}
\begin{aligned}\overline{\text{\normalfont{SINR}}}^*_{\text{QCE}}=\frac{\sqrt{u^2+4C_{L,\sigma,\eta^*}}+u}{2C_{L,\sigma,\eta^*}}-1,
\end{aligned}
\end{equation*}
where $u=C_{L,\sigma,\eta^*}+\gamma-1$.
\end{corollary}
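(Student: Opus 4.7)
The proof is essentially a specialization of Theorem \ref{optimalprecoder} to $q(\cdot) = \qce(\cdot)$, where the special structure of $\qce$ from Lemma \ref{lemma:qce} eliminates the one-dimensional search in \eqref{alpha} and collapses several expressions. My plan is to carry this out in four steps.

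First, I would simplify the infimum in \eqref{alpha}. By Lemma \ref{lemma:qce}(i), $\mathbb{E}[|\qce(\bar{\alpha}\sZ)|^2]=1$ for every $\bar{\alpha}>0$. Combining Lemma \ref{lemma:qce}(ii) with (iii), $\mathbb{E}[\sZ^\dagger \qce(\bar{\alpha}\sZ)] = \mathbb{E}[\sZ^\dagger \qce(\sZ)] = \tfrac{L\sin(\pi/L)}{2\sqrt{\pi}}$, independent of $\bar{\alpha}$. Hence the ratio inside the infimum in \eqref{alpha} is a constant equal to $4\pi/(L^2\sin^2(\pi/L))$, so the infimum is attained for every positive $\bar{\alpha}^\ast$. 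Substituting this into \eqref{eta} immediately yields $\eta^\ast=\sqrt{P_T}$, and substituting into the expression for $\phi^\ast$ and rearranging recovers exactly the constant $C_{L,\sigma,\eta^\ast}$ defined in \eqref{flsigma}.

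Next, I would derive $f^\ast$ and $\bP^\ast_{\text{QCE}}$. Formula \eqref{fstar} gives $f^\ast(d) = \pm d/(\tau^\ast(d^2+\phi^\ast/\gamma))$ for a positive normalization $\tau^\ast$. However, Lemma \ref{lemma:qce}(ii) shows that $\qce(\alpha x) = \qce(x)$ for any $\alpha>0$, so multiplying $f$ by a positive constant does not change the transmit vector $\x = \eta\, q(\bP\s)$. The factor $\tau^\ast$ is therefore immaterial in the QCE setting and one may take $\tau^\ast=1$, giving \eqref{optqcef}. To obtain \eqref{optqce}, I would plug this $f^\ast$ into $\bP = \bV f(\bD)^\mathsf{T}\bU^\mathsf{H}$ and simplify using the SVD $\H = \bU\bD\bV^\mathsf{H}$ together with the identity $\bD^\mathsf{H}(\bD\bD^\mathsf{H}+\rho\mathbf{I}_K)^{-1} = f^\ast(\bD)^\mathsf{T}$ for $\rho = C_{L,\sigma,\eta^\ast}/\gamma$, which is the standard algebraic reduction of RZF to its spectral form.

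For the closed-form SINR, I would start from \eqref{Eqn:zeta_opt} and rewrite
\begin{equation*}
1-\mathbb{E}\!\left[\frac{d^2}{d^2+\phi^\ast/\gamma}\right] = \frac{\phi^\ast}{\gamma}\,\mathbb{E}\!\left[\frac{1}{\lambda+\phi^\ast/\gamma}\right],
\end{equation*}
where $\lambda$ follows the Marchenko--Pastur law with ratio $c=1/\gamma$. The remaining expectation is the (negated) Stieltjes transform of the MP law at $-\phi^\ast/\gamma$, for which a classical closed form is available (e.g., \cite[Eq.~(2.104)]{tulino2004random}, which was already invoked earlier in the paper). Substituting this closed form with $\phi^\ast=C_{L,\sigma,\eta^\ast}$ and $u=C_{L,\sigma,\eta^\ast}+\gamma-1$, and then simplifying, should reproduce the stated expression $(\sqrt{u^2+4C_{L,\sigma,\eta^\ast}}+u)/(2C_{L,\sigma,\eta^\ast})-1$.

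The three substitution steps are essentially routine once Lemma \ref{lemma:qce} is invoked. The main obstacle is the last step: the sign and scaling conventions of the MP Stieltjes transform differ across references, and the algebraic reduction from the MP integral to the compact form involving $\sqrt{u^2+4C_{L,\sigma,\eta^\ast}}$ needs to be handled carefully to combine the constants $c=1/\gamma$, $\phi^\ast/\gamma$, and $u=\phi^\ast+\gamma-1$ correctly. Once this algebraic manipulation is performed, the corollary follows directly.
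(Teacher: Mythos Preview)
Your proposal is correct and follows essentially the same route as the paper: specialize Theorem~\ref{optimalprecoder} via Lemma~\ref{lemma:qce} to see that the objective in \eqref{alpha} is constant (whence any $\bar\alpha^\ast>0$ works, $\eta^\ast=\sqrt{P_T}$, and $\phi^\ast=C_{L,\sigma,\eta^\ast}$), set $\tau^\ast=1$, and then evaluate \eqref{Eqn:zeta_opt} using a Marchenko--Pastur identity from \cite{tulino2004random}. The only cosmetic differences are that the paper justifies $\tau^\ast=1$ by the arbitrariness of $\bar\alpha^\ast$ in \eqref{tau} rather than by the scale-invariance of $\qce$, and it cites \cite[Eq.~(2.42)]{tulino2004random} rather than Eq.~(2.104) for the final closed-form SINR.
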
 
\begin{proof}
When $q(\cdot)=\qce(\cdot),$ it follows from Lemma \ref{lemma:qce} and  \eqref{eta} that  $\eta^*=\sqrt{P_T}$ and the objective function of problem \eqref{alpha} is a constant $C_{L,\sigma,\eta^*}$, which is  independent of $\bar{\alpha}$. Hence, the optimal solution set of problem \eqref{alpha} is $\{\bar{\alpha}^*\mid\bar{\alpha}^*> 0 \}$ and $\tau^*$ can be any positive number.  Without loss of generality, here we set $\tau^*=1$. This further gives the optimal $f^*(\cdot)$ and the optimal precoding matrix $\bP^*_{\text{QCE}}$ in \eqref{optqcef} and \eqref{optqce}, respectively. Finally,  $\overline{\text{\normalfont{SINR}}}^*_{\text{QCE}}$ can be obtained by plugging $\phi^*=C_{L,\sigma,\eta^*}$ into \eqref{Eqn:zeta_opt}  and using \cite[Eq. (2.42)]{tulino2004random}.
\end{proof}
 We note that when $\gamma$ is large, $\sqrt{u^2+4C_{L,\sigma, \eta^*}}\approx u$, which implies that
\begin{equation}\label{rzfapproxzf}
\begin{aligned}
\overline{\text{SINR}}^*_{\text{QCE}}&\approx \frac{u}{C_{L,\sigma,\eta^*}}-1=\frac{\gamma-1}{C_{L,\sigma,\eta^*}}=\overline{\text{SINR}}^{\text{ZF}}_{\text{QCE}},
\end{aligned}
\end{equation}
i.e., the performance of  quantized ZF precoding is nearly optimal when the antenna-user ratio is large.

\section{Simulation Results}\label{section6}

In this section, we present simulation results to demonstrate the theoretical results obtained in previous sections. We  consider both traditional quantized precoding (where the independent quantizer is used) and QCE precoding (where the CE quantizer is used), and assume that the precoding factor (namely, the linear scaling applied at the receiver side before detection) in \eqref{factor} is used. We also assume that the precoder adopted in this section has the optimal input power for the corresponding quantization function, i.e., $\bar{\alpha}=\bar{\alpha}^*$, where $\bar{\alpha}^*$ is an optimal solution to problem \eqref{alpha}. 
The transmit power is set as $P_T=1$. % and Gaussian channel, i.e., $\H$ is generated according to Assumption \ref{Ass:distribution}.    
All results are averaged over $10^5$ channel realizations.

\subsection{Numerical Validation of SEP Formulas}
We first verify the accuracy of the SEP formulas given in \eqref{psk}--\eqref{e0}. 
%We first focus on the SQIR approximations for the MF precoder \eqref{sqirmf} and ZF precoder \eqref{sqirzf}. To obtain an exact formula of SQIR, we apply the Bussgang theorem as in \cite{ZF} to decompose the original model \eqref{sysmodel1} into the sum of useful signals and uncorrelated distortion, and the variance of the distortion term can be found in \cite{variance}. Since the Bussgang theorem is only valid for Gaussian input, in the first experiment we set
 %$\s\sim\mathcal{C}\mathcal{N}(\mathbf{0},\mathbf{I})$, in which case our analytical results also make sense. In Fig. \ref{rate}, we depict the achievable rates, i.e., $\log_2\left(1+\text{SQIR}\right)$, for both quantized  MF and ZF precoding. We first fix the level of quantization to be $L=4$ (one-bit case) in Fig. \ref{rate} (a) and show the achievable rate versus different numbers of $\gamma=\frac{N}{K}$.

In Fig. \ref{ser}, we plot the symbol error rate (SER) as a function of {\red the ratio of the number of antennas to users $\gamma$ for the} quantized MF and ZF precoding schemes. {\color{black}Both the cases of a large system with $K=100$ and a more practical system with $K=20$ are investigated, where the number of transmit antennas is set as $N=\gamma K$.}
 We consider three types of signal constellations: QPSK, 8-PSK, and 16-QAM, and two types of quantization: CE quantization and independent quantization. The channel noise is set as $\sigma=0$.  %{We consider the noiseless case (i.e., $\sigma=0$) and two types of signal constellations: 8-PSK and 16-QAM . }
As shown in the figure, there {\red is a} slight mismatch between simulations and asymptotic predictions when $K=20$, and the differences become indistinguishable when $K=100$. 

  \begin{figure} [t]
\centering    
  \hspace{-0.06\columnwidth}
\subfigure[Quantized MF.] { 
\includegraphics[width=0.5\columnwidth]{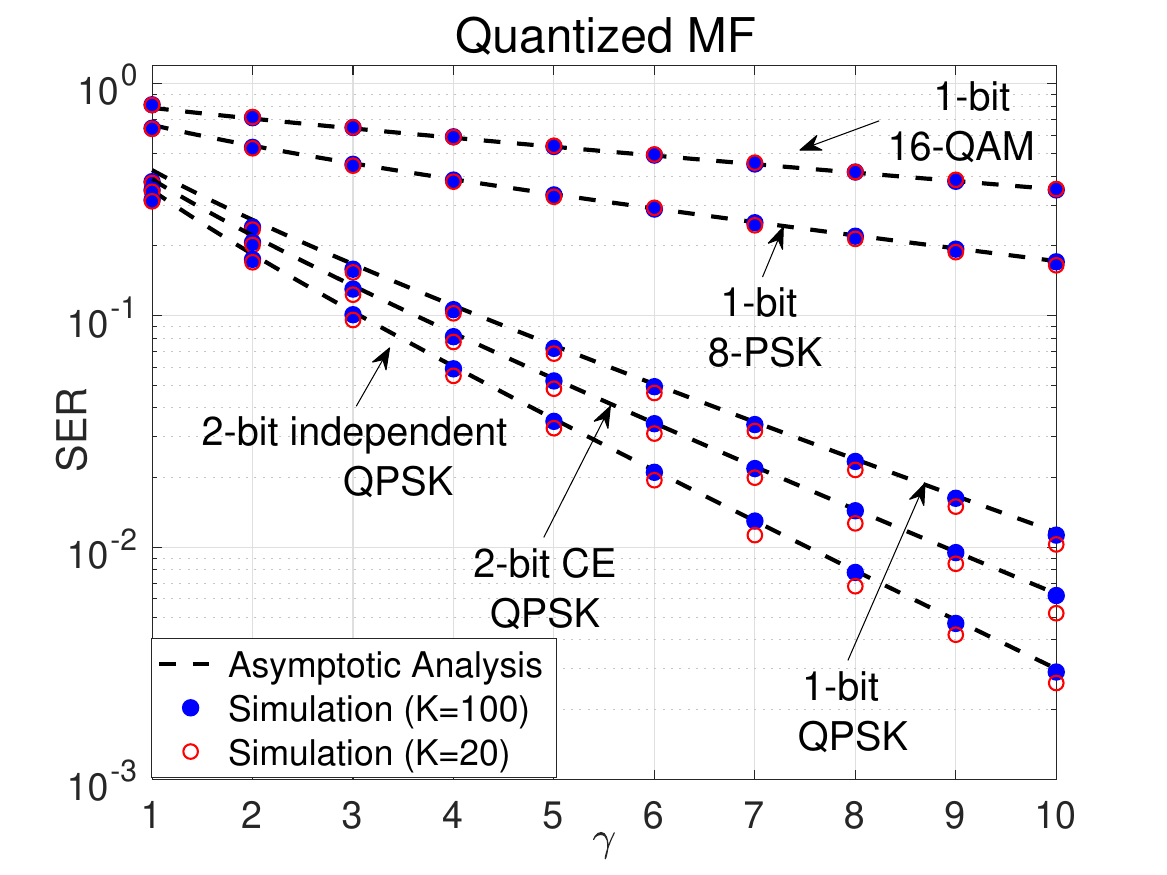}  
}     
\subfigure[Quantized ZF.] { 
\includegraphics[width=0.5\columnwidth]{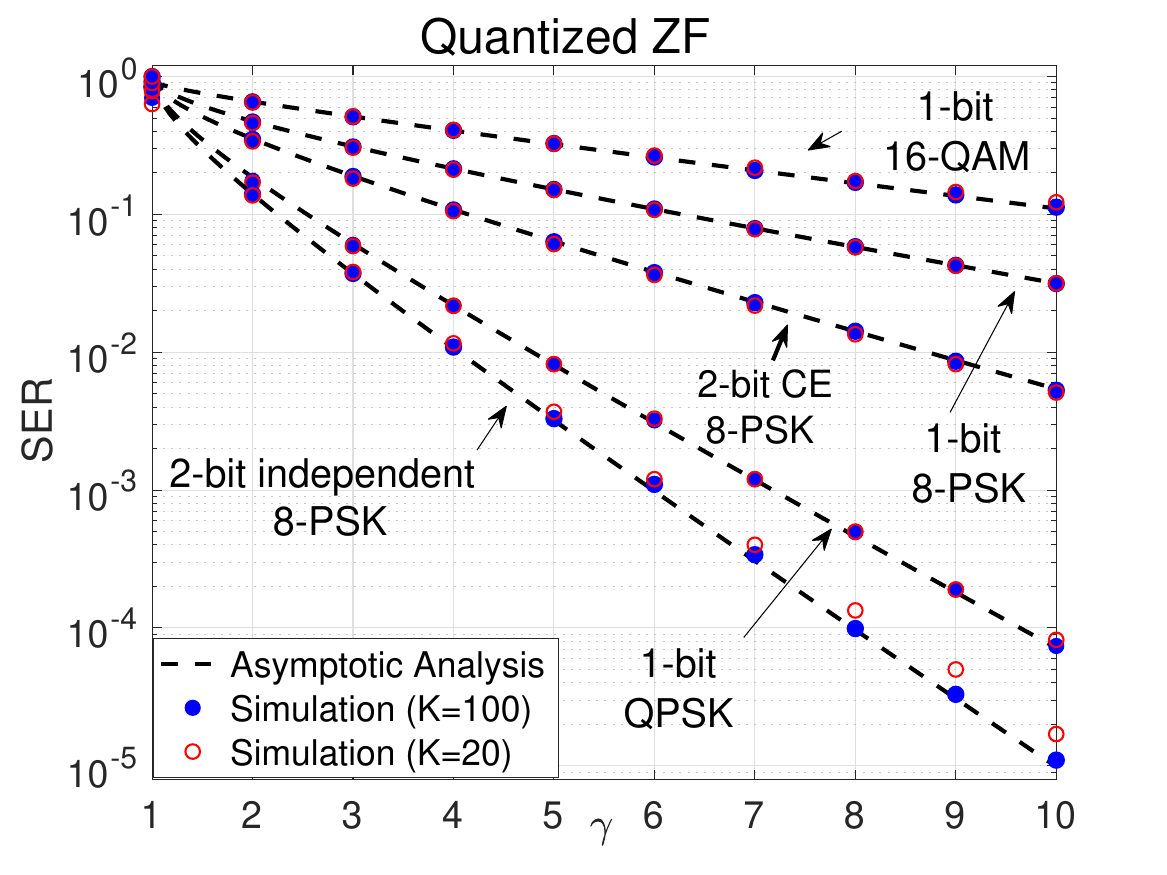}  
}    
\caption{SER performance versus $\gamma$ for quantized MF and ZF precoding with $\sigma=0$, where PSK and QAM represent the type of constellation and  ``independent'' and ``CE'' represent the type of quantization. The quantization interval for the independent quantizer in \eqref{eq:xl} is $\Delta=2$.}
 \label{ser}
\end{figure}

We can also draw some interesting observations from  Fig. \ref{ser}. First, the logarithm of SEP/SER decreases linearly with $\gamma$, i.e., the antenna-user ratio,  which reveals the gain of increasing the number of transmit antennas at the BS, and the slope of the decrease is determined by the precoder, the quantizer, and the constellation {\red that are employed}.
Second, the independent quantizer and the CE quantizer have different behaviors under different precoding schemes. For example, the superiority (in terms of SEP/SER) of {\red the} 2-bit independent quantizer over {\red the} 2-bit CE quantizer under ZF precoding is much more prominent than that under MF precoding.
%, and the  gain of increasing the resolution of DACs under ZF precoding is also more significant than that under MF precoding. 

\begin{table}[t]
\centering
\caption{$\phi^*$ in \eqref{alpha} for scalar and CE quantizers with $\sigma=0$.}
\begin{tabular}{ccccc}
\hline
&1 bit &2 bits &3 bits&4 bits \\
\hline
scalar&0.57&0.14&0.04&0.01\\
CE& 0.57&0.34&0.29&0.28\\
\hline
\end{tabular}
\label{table:phi}
\end{table}
In fact, the above observations are clear from our asymptotic analysis.  To be specific, applying the approximation $Q(x)\approx \frac{1}{2}e^{-\frac{1}{2}x^2}$\cite{digitalcommunication} to the SEP predictions for $M$-PSK and $M$-QAM {\red constellations} in \eqref{psk} and \eqref{qam}, we get
\begin{equation}\label{eqn:slope1}
\ln\overline{\text{SEP}}\approx\left\{
\begin{aligned}
-2\sin^2\frac{\pi}{M}\overline{\text{SINR}},\qquad\qquad\qquad\qquad~~&\text{for PSK;}\\
-\frac{3}{2(M-1)}\overline{\text{SINR}}+\log(2-\frac{2}{\sqrt{M}}),~~~&\text{for QAM,}
\end{aligned}\right.
\end{equation}
i.e., the logarithm of $\overline{\text{SEP}}$ decreases linearly with $\overline{\text{SINR}}$. According to \eqref{snr}, $\overline{\text{SINR}}$ for MF and ZF precoding can be expressed as 
\begin{equation}\label{eqn:slope2}
\overline{\text{SINR}}=\left\{
\begin{aligned}
\frac{\gamma}{\phi^*+1},~~~&\text{for MF;}\\
\frac{\gamma-1}{\phi^*},~~~~&\text{for ZF},
\end{aligned}\right.
\end{equation}
where $\phi^*$ is given in \eqref{alpha} and is determined by the quantization type. The values of $\phi^*$ for scalar and CE quantizers with $\sigma=0$ are given in Table \ref{table:phi}.   
Eqs \eqref{eqn:slope1} and \eqref{eqn:slope2} demonstrate the linear decrease in the logarithm of {\red the} SEP with $\gamma${\red,} and quantitatively  characterize the effects of precoding, quantization, and constellation on the slope of  the decrease. In particular, we can see that $\phi^*$  has a stronger impact on ZF than on MF, since the slope of decrease is proportional to $\frac{1}{\phi^*}$ and  $\frac{1}{\phi^*+1}$ for ZF and MF, respectively, and $\phi^*$ is  {\red on} the order of $10^{-2}-10^{-1}$ as shown in Table \ref{table:phi}. This explains why {\red the} 2-bit independent quantizer has {\red a greater} performance gain {\red compared with the} 2-bit CE quantizer under ZF precoding than under MF precoding. 

In Fig. \ref{serl}, we plot the SER performance of quantized MF and ZF for both traditional quantized precoding and QCE precoding as a function of the {\red DAC} resolution.  For MF, the performance gain of both quantizers is small as the {\red DAC} resolution increases, especially when the resolution is larger than 2 bits. The same happens for ZF precoding with CE quantization. However, there is a remarkable gain for ZF precoding with independent quantization as the {\red DAC} resolution increases. These observations can also be interpreted by our previous discussions.

%As shown in the figure, there is a remarkable gain as $L$ increases from $4$ to $8$ (corresponding to one-bit DAC quantization and two-bit DAC quantization, respectively \textcolor{red}{(2 and 3?)}) and a slight gain as $L$ increases from $8$ to $16$ (corresponding to two-bits DAC quantization and three-bits DAC quantization, respectively). However, when $L$ becomes larger, the performance remains almost the same. 

Finally, Fig. \ref{sersnr} shows the SER of quantized MF and ZF as a function of the channel SNR (i.e., ${1}/{\sigma^2}$) for a one-bit system with $\gamma=6$ and QPSK modulation.  We see that ZF has a noticeable performance gain compared with MF in the high SNR region.
\begin{figure}
\centering   \hspace{-0.06\columnwidth}
\subfigure[Quantized MF.]{
\includegraphics[width=0.5\columnwidth]{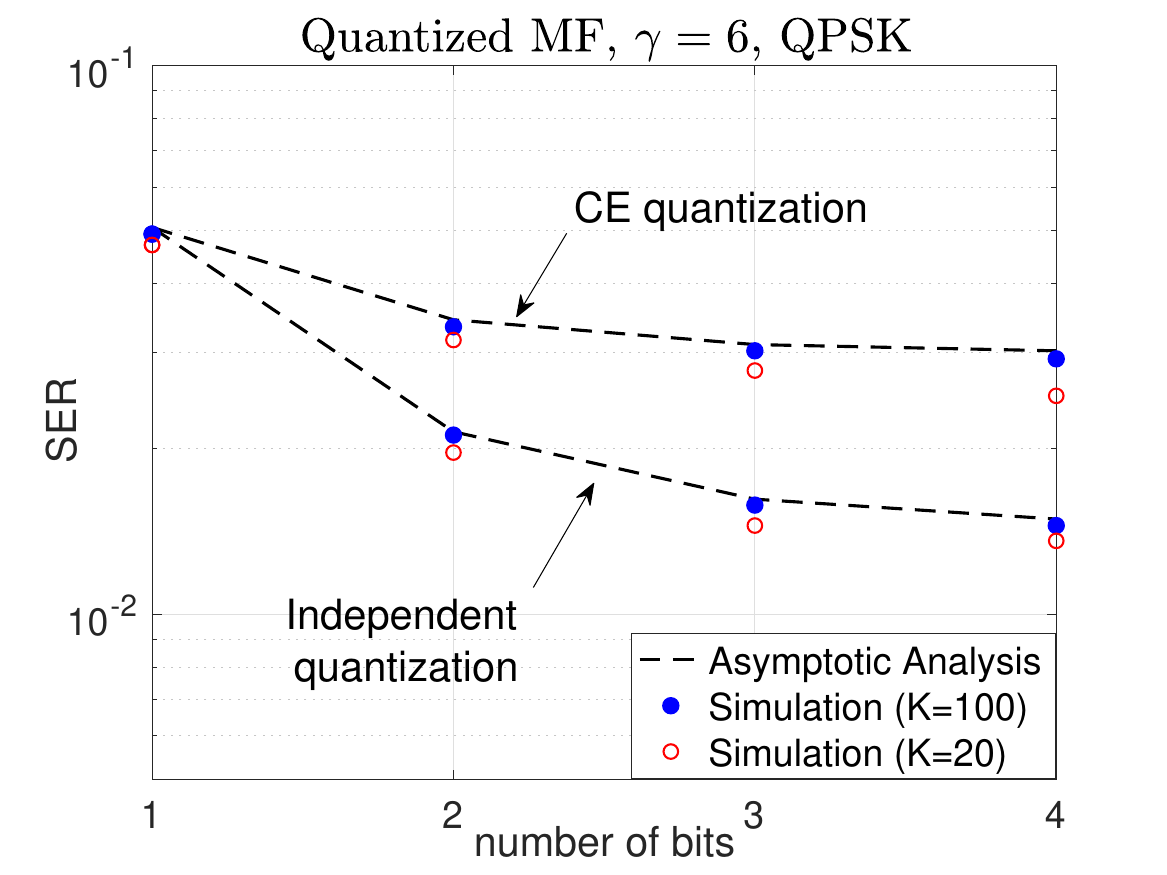}}
\subfigure[Quantized ZF.]{
\includegraphics[width=0.5\columnwidth]{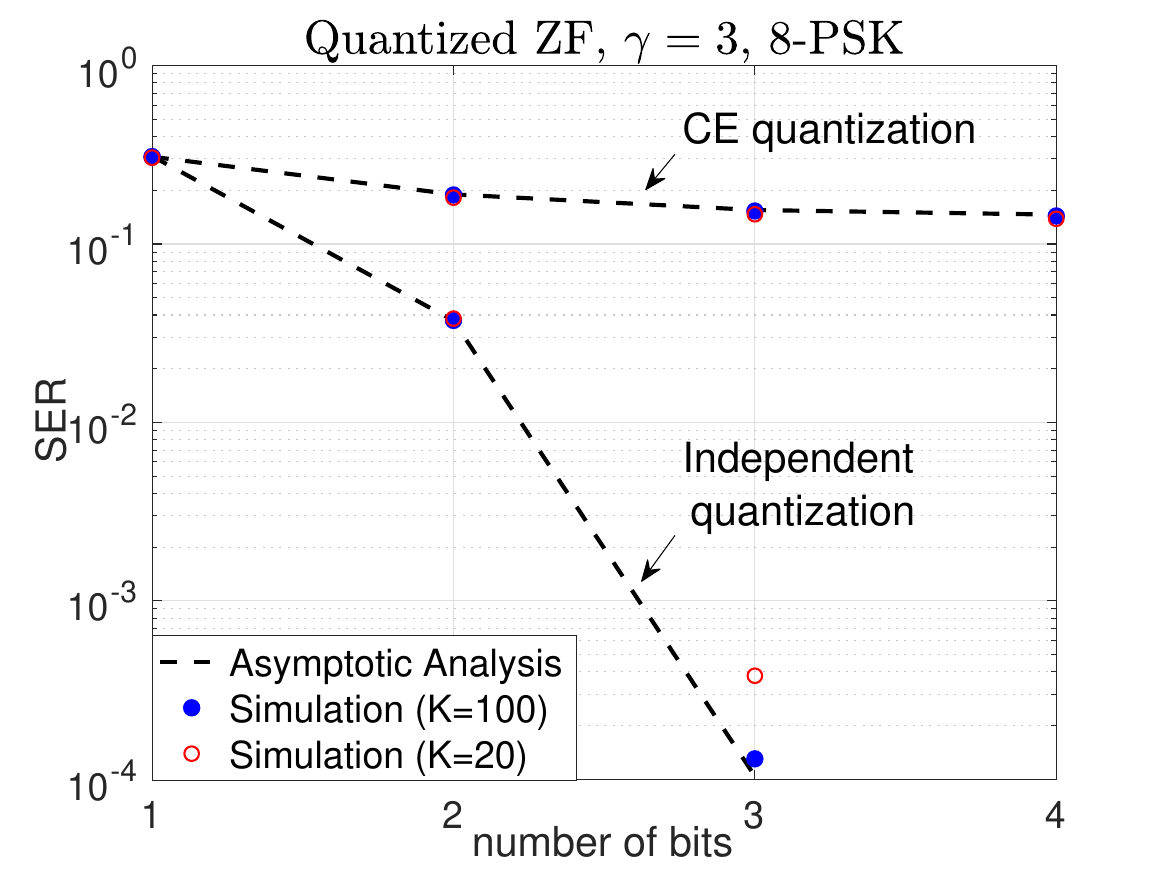}}
\caption{SER performance versus {\red DAC resolution for} quantized MF with $\gamma=6$ and QPSK modulation and quantized ZF  with $\gamma=3$ and $8$-PSK modulation; $\sigma=0$.}
\label{serl}
\end{figure}

\begin{figure}
\centering
\includegraphics[scale=0.45]{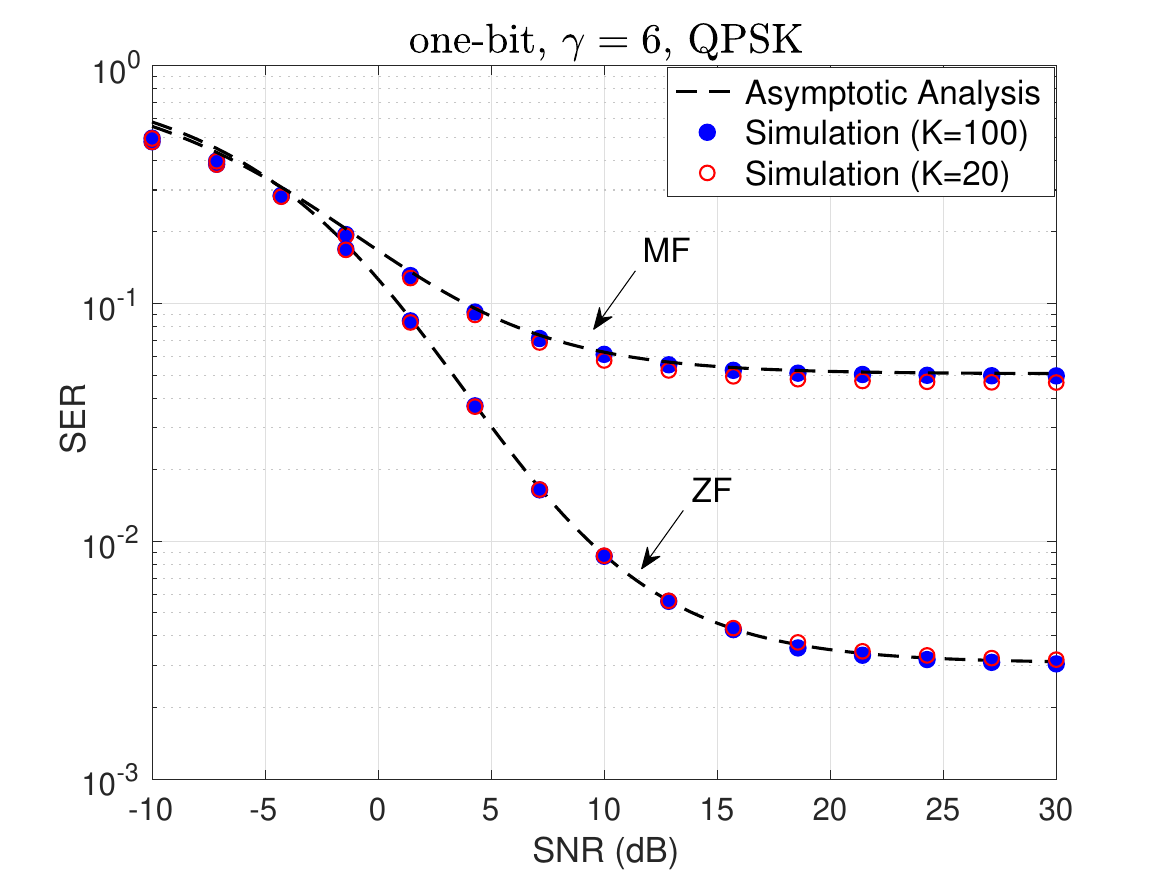}
\caption{SER performance versus {\red SNR for} one-bit quantized MF and ZF precoding with  $\gamma=6$ and QPSK modulation.}
\label{sersnr}
\end{figure}
\begin{comment}
\subsection{Optimality of the Precoding Factor in \eqref{factor}}
In Fig.~\ref{beta}, we verify the optimality of the precoding factor in \eqref{factor}. We consider a noiseless system with $K=20$, $\gamma=10$, and 16-QAM modulation. We employ quantized ZF precoding and depict the SER performance as a function of the precoding factor $\beta$. %\footnote{For simplicity, we set the precoding factors of all users to be the same, i.e., $\beta_k=\beta$.}. 
For the considered system,  $\beta$ in \eqref{factor} is equal to
$$\beta=\frac{2\sqrt{10\pi}}{3L\sin\frac{\pi}{L}}\approx
\left\{
\begin{aligned}
1.32,&\quad\text{if }L=4;\\
1.22,&\quad\text{if }L=8.\\
\end{aligned}\right.
$$
The simulation results in Fig. \ref{beta} illustrate the optimality of the above precoding factor.
\begin{figure}[t]
\centering
\includegraphics[scale=0.35]{testfactor_2}
\caption{The SER performance versus the precoding factor for quantized ZF precoding, with $K=20$, $\gamma=10$, $\sigma=0$, and 16-QAM modulation.}
\label{beta}
\end{figure}
\end{comment}
\subsection{Optimality of Quantized RZF Precoding}
In this subsection, we present some simulation results to demonstrate the optimality of the quantized RZF precoding matrix given in  \eqref{eqn:optp}.

%In Fig. \ref{asym_sqir}, we first compare the asymptotic SINR of quantized MF, ZF, and RZF precoding for the case of no additive noise as a function of $\gamma$.  As expected, we can observe that the quantized RZF precoding has better asymptotic SINR performance  than the quantized MF and ZF precoding, and when $\gamma$ becomes large, the performance of quantized RZF approximates that of quantized ZF. This coincides with our discussion in \eqref{rzfapproxzf}.

%\begin{figure}
%\includegraphics[scale=0.35]{Asym_sqir.eps}
%\centering
%\caption{The asymptotic SINR versus $\gamma$ for quantized MF, ZF, and RZF precoding with $L=4$ and $\sigma=0$.}
%\label{asym_sqir}
%\end{figure}
\begin{figure}[t]
\includegraphics[scale=0.45]{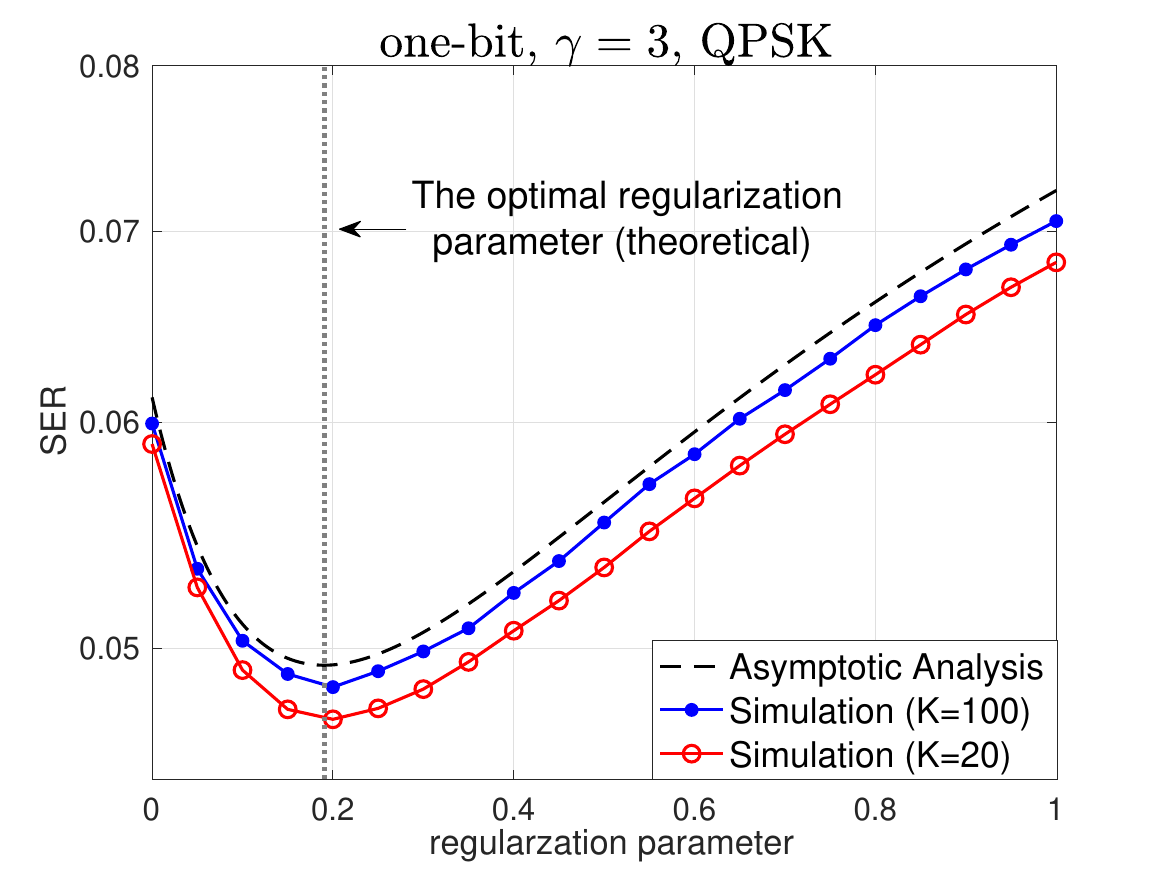}
\centering
\caption{SER performance versus regularization parameter for a one-bit system with $\gamma=3$ and QPSK modulation.}
\label{serrho}
\end{figure}

\begin{comment}
 \begin{figure} [t]
\centering   \hspace{-0.06\columnwidth}
\subfigure[Independent quantization.] { 
\includegraphics[width=0.52\columnwidth]{threecom_scalar}  
}     
\hspace{-0.08\columnwidth}
\subfigure[CE quantization.] { 
\includegraphics[width=0.52\columnwidth]{threecom_qce}  
}    
\caption{Comparison of the SER performance between different linear-quantized precoders with $K=20$, for 2-bit independent quantization with $8$-PSK and 2-bit CE quantization with QPSK.}
 \label{fourcom}
\end{figure}
\end{comment}

 \begin{figure} [t]
\centering   \hspace{-0.06\columnwidth}
\subfigure[$\gamma=1.5$.] { 
\includegraphics[width=0.5\columnwidth]{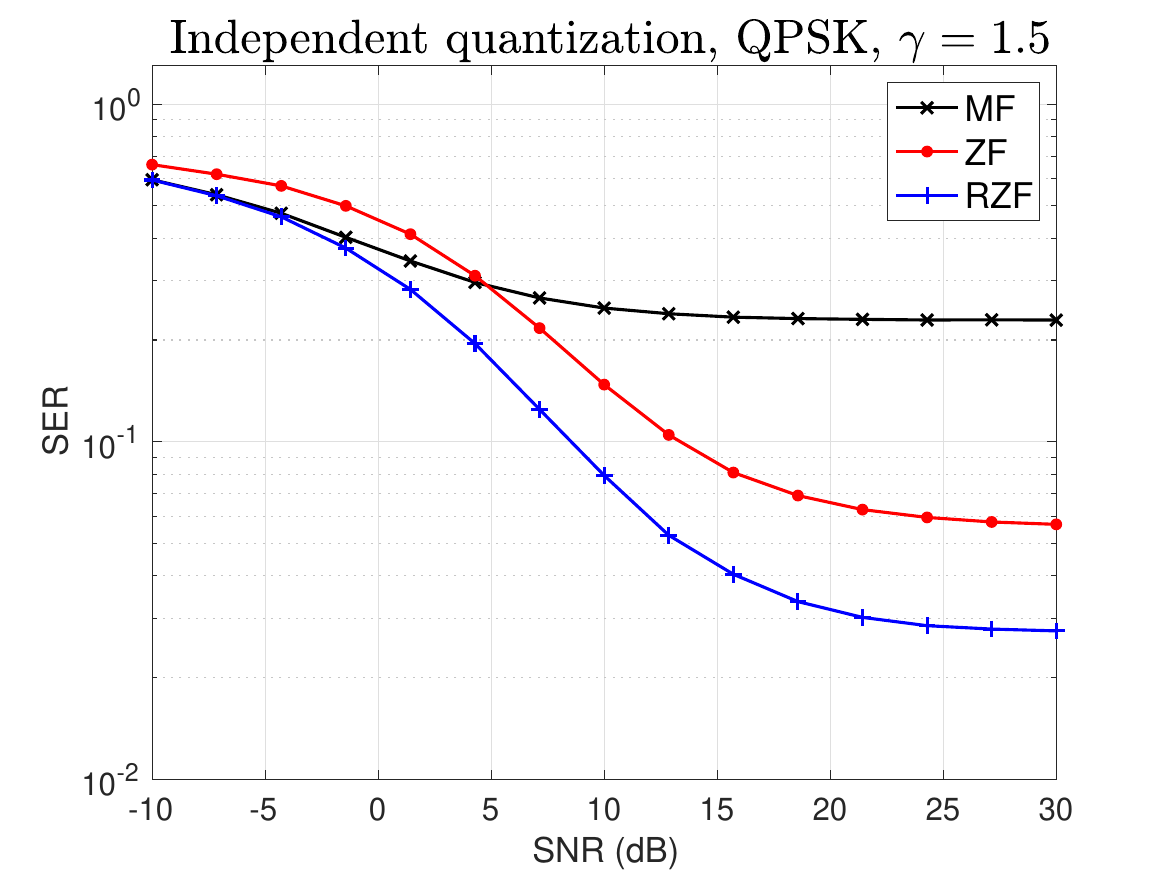}  
}     
\subfigure[$\gamma=4$.] { 
\includegraphics[width=0.5\columnwidth]{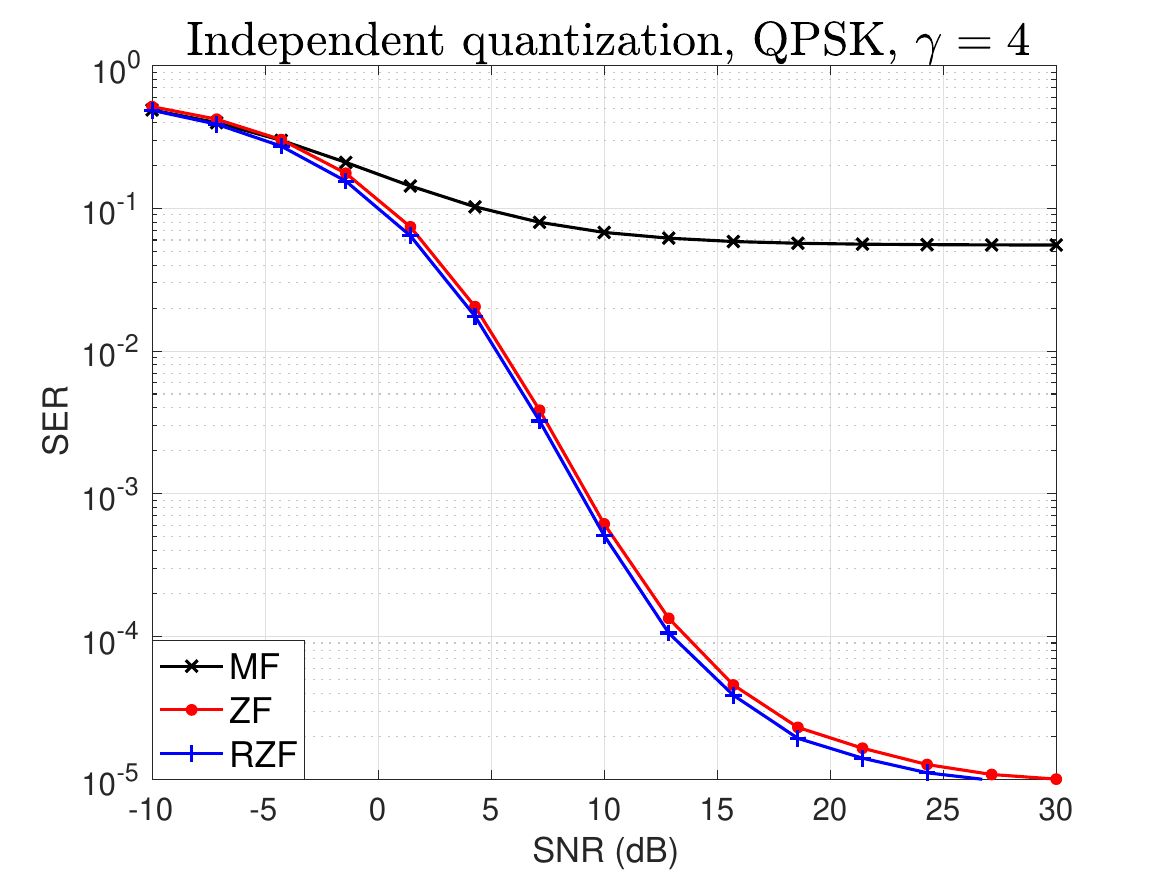}  
}    
\caption{Comparison of the SER performance between different linear-quantized precoders with $K=20$, for 2-bit independent quantization with QPSK.}
 \label{fourcom2}
\end{figure}
 \begin{figure} [t]
\centering   \hspace{-0.06\columnwidth}
\subfigure[$\gamma=1.5$.] { 
\includegraphics[width=0.5\columnwidth]{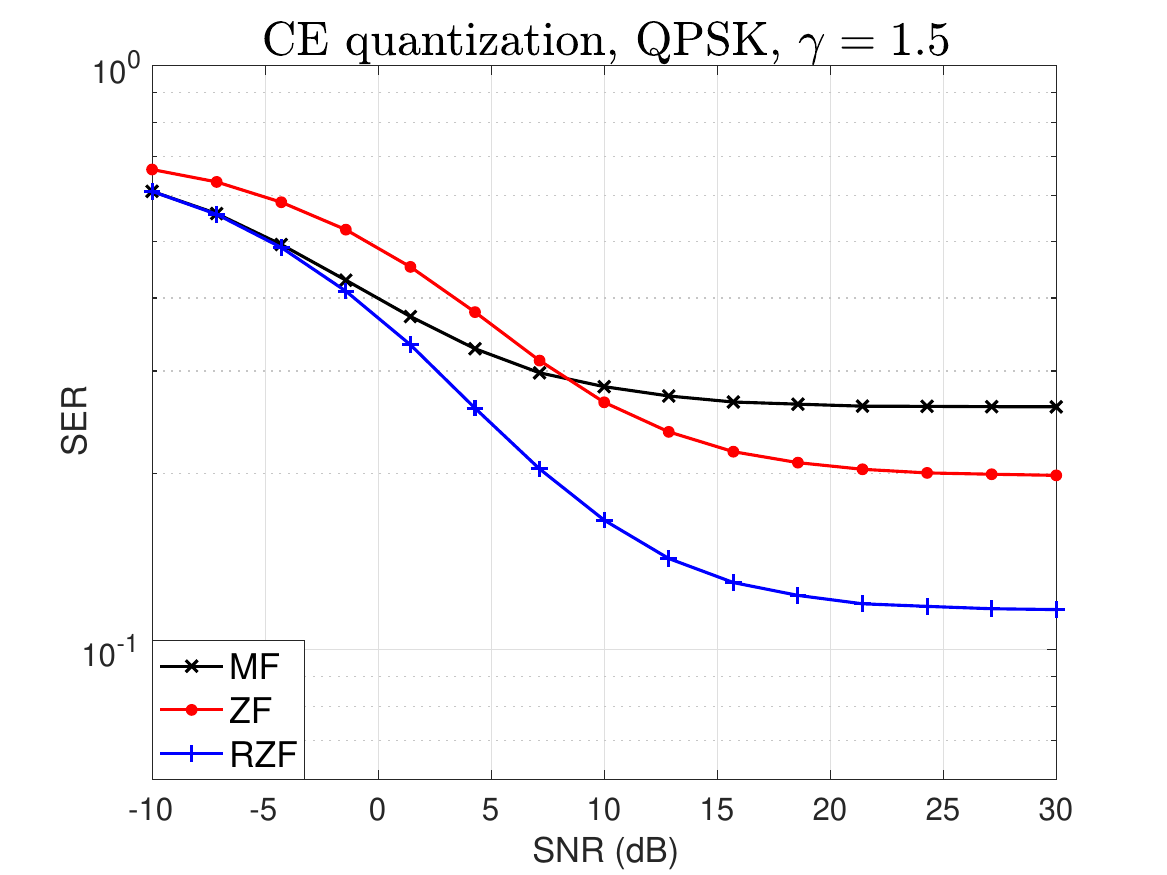}  
}     
\subfigure[$\gamma=4$.] { 
\includegraphics[width=0.5\columnwidth]{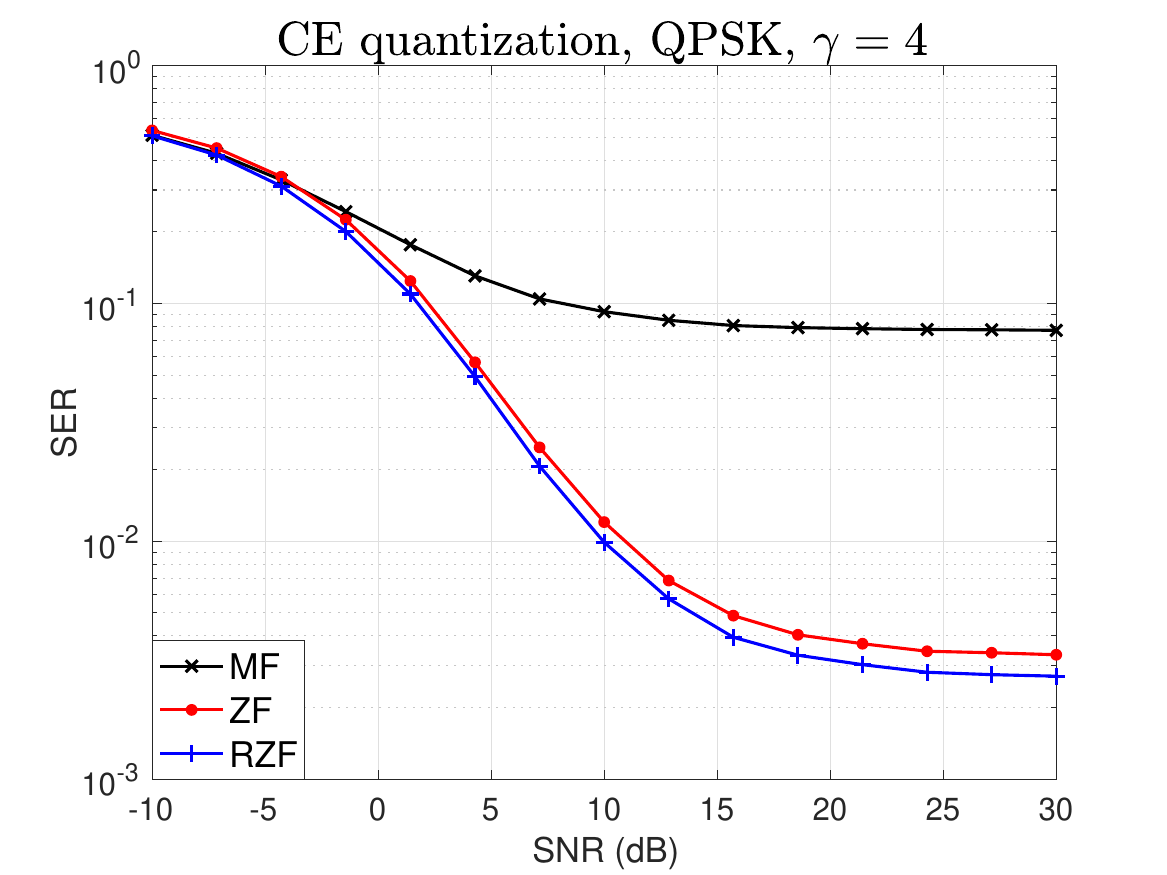}  
}    
\caption{Comparison of the SER performance between different linear-quantized precoders with $K=20$, for 2-bit CE quantization with QPSK.}
 \label{fourcom3}
\end{figure}
%\begin{figure}
%\includegraphics[scale=0.35]{threecom.eps}
%\centering
%\caption{Comparison of SER performance between different linear QCE precoders with $K=20$, $L=4$, and QPSK modulation.}
%\label{fourcom}
%\end{figure}
In Fig. \ref{serrho}, we consider quantized RZF precoding and depict the SER as a function of the regularization parameter for a one-bit system with $\gamma=3$, $\sigma=0$, and QPSK modulation. 
As shown in the figure, the errors between the asymptotic SER and the actual SER are within $0.002$ and $0.005$ for $K=100$ and $K=20$, respectively, which again validates the accuracy of our analytical results. In addition, it can be observed that the simulation curves and the (asymptotic) analysis curve show almost the same trends and all attain their minimum at {\red approximately 0.2, which agrees well with the predicted value of 0.19 for the optimal regularization parameter}.  This demonstrates the optimality of the RZF precoding matrix $\bP^*$ in \eqref{eqn:optp}.%,  whose regularization parameter is  $0.1903$ for the considered system. %the SER errors between the asymptotic results and the simulation results are within $0.002$ and $0.005$, respectively.

In Figs. \ref{fourcom2} and \ref{fourcom3}, we further consider realistic systems with $K=20$. We plot the SER performance of quantized MF, ZF, and the proposed RZF precoding as a function of the SNR for both independent quantization and CE quantization. We investigate two different cases: $\gamma=1.5$ and $\gamma=4,$ which corresponds to small and large antenna-user ratio, respectively. Compared with {\red the} quantized MF and ZF precoder,  the proposed quantized RZF precoder enjoys a  {\red substantial} performance gain for small $\gamma$. When $\gamma$ is large, the proposed quantized RZF precoder {\red performs similarly to} the quantized ZF precoder (which is consistent with our discussion in \eqref{rzfapproxzf}), and they both {\red yield a much lower SER} than the quantized MF precoder. %  These observations all coincide with our analytic results.  % As shown in the figure, for both of these two cases, the TxWFQ precoder and the quantized RZF precoder show very similar performance, which validates the discussions in Remark \ref{remark:rzf}. Compared with quantized MF and ZF precoders,  these two precoders show remarkable performance gain for small $\gamma$; while when $\gamma$ is large, they both share similar performance with the quantized ZF precoder, all of which are significantly better than the quantized MF precoder.  These observations all coincide with our analytic results. %The simulation results for more general $L$ and $M$  are similar and thus omitted for the space reason.
\section{Conclusion}\label{section7}
In this paper, we studied the performance of linear-quantized precoding in massive MIMO downlink systems. Assuming an i.i.d. Gaussian channel matrix, we showed that the linear-quantized precoding scheme is statistically equivalent to a simple scalar model in the asymptotic sense when the number of antennas $N$ and the number of users $K$ tend to infinity with a fixed ratio. We further derived the optimal precoding strategy within a class of  linear-quantized precoders, and found that it is precisely the RZF precoder where the optimal regularization parameter depends on the type and level of quantization and various system parameters.  {\color{black}For  future work, it would be interesting to extend the current analysis to encompass more general scenarios, such as general correlated channels and imperfect CSI. It is also interesting to give performance analysis of the more challenging nonlinear precoding schemes.}

%{\red An interesting avenue of future work} is to extend the current analysis to the nonlinear precoding cases.

%In addition, all the obtained results hold for general quantization functions, which might be useful for other applications. 
 %Simulation results reveal the accuracy of the approximations, the optimality of the derived precoding factor as well as the optimality of quantized RZF precoding for realistic systems. 
\appendices
\section{Preliminaries of the  Householder Dice Technique \cite{HD}}\label{Appendix:HD}
{\color{black}In this appendix, we collect some useful results about the HD technique.}
{\color{black}We begin with the definition of the Householder transform.}
\begin{definition}[Householder Transform]
For a given vector $\v=(v_1,v_2,\dots,v_N)^\mathsf{T}\in\C^N\backslash\{\mathbf{0}\}$, denote 
%$$p_\v=\left\{
%\begin{aligned}
%1,~~~~~&\text{if }v_1= 0;\\
%-e^{i\arg v_1},~~&\text{if }v_1\neq 0,
%\end{aligned}
%\right.$$ 
$$p_\v=-e^{j\arg(v_s)},\quad \text{where }s=\min_{i}\left\{i\mid v_i\neq 0\right\}.$$
Let ${\H}(\v)$ be the Householder transform of $\v$, i.e., 
\[
{\H}(\v)=\mathbf{I}-{\red2\frac{\mathbf{u}\mathbf{u}^\mathsf{H}}{\|\mathbf{u}\|^2}}
\]
with $\mathbf{u}={\v-{\red p_\v\|\v\|\mathbf{e}_1}}$, ${\color{black}\mathbf{e}_1=(1,0,\dots,0)^\mathsf{T}}.$ Further, define 
\begin{equation}\label{reflector}
\bR(\v)=p_\v\,\H(\v).
\end{equation}
With {\red a slight abuse of notation}, $\bR(\v)$ will also be called the Householder transform matrix associated with $\v$. 
We further define a generalized Householder transform as \cite{HD}
\[
\bR_k(\v)=\left(\begin{matrix}\mathbf{I}_{k-1}&\mathbf{0}\\\mathbf{0}&\bR(\v[k:N])\end{matrix}\right).
\]
%We have 
%$$\H(\v)\v=p_\v\cdot\|\v\|\cdot\mathbf{e}_1.$$
\end{definition}
%\textcolor{blue}{Above we only define the Householder transform of a nonzero vector.  This is sufficient since in our analysis, we will only encounter Householder transform whose associated vectors are nonzero with probability one. For completeness, we can define the Householder transform of a zero vector as $\bR(\mathbf{0})=\mathbf{I}$.}

The following lemma collects some useful properties of $\bR(\v)$ that will be used in the subsequent analysis. The proofs are straightforward and thus omitted.
\begin{lemma}[Facts and properties of $\bR(\v)$]\label{reflectorlemma}
For a given vector $\v\in\C^N\backslash\{\mathbf{0}\}$, the Householder transform $\bR(\v)$ defined in \eqref{reflector} satisfies:
\begin{itemize}
\item[(i)] $\bR(\v)\in\mathcal{U}(N)${\normalfont{;}}
\item[(ii)] ${\red \bR(\v)^\mathsf{H}\v=\|\v\| \mathbf{e}_1}${\normalfont{;}}
\item[(iii)] $\bR(\v)\mathbf{e}_1=\frac{\v}{\|\v\|}${, i.e., $\bR(\v)$ can be expressed as 
\begin{equation}\label{B}
\bR(\v)=\left(\begin{matrix}\frac{\v}{\|\v\|}&\mathbf{B}(\v)\end{matrix}\right),
\end{equation} where $\mathbf{B}(\v)\in\mathbb{C}^{N\times(N-1)}$ is a basis matrix for $\{\mathbf{v}\}^{\perp}$}{\normalfont{.}}
\end{itemize}
\end{lemma}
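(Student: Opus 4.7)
The plan is to reduce everything to the two standard facts about the Hermitian part $\H(\v)$: it is Hermitian and involutive ($\H(\v)^2=\mathbf{I}$), so unitary, and the sign convention $p_\v=-e^{j\arg(v_s)}$ is exactly what makes the reflection formula collapse cleanly when applied to $\v$.

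For \textbf{part (i)}, I would first observe that $\mathbf{u}\mathbf{u}^\mathsf{H}$ is Hermitian, so $\H(\v)=\mathbf{I}-2\mathbf{u}\mathbf{u}^\mathsf{H}/\|\mathbf{u}\|^2$ is Hermitian. A direct expansion gives
\begin{equation*}
\H(\v)^2=\mathbf{I}-4\frac{\mathbf{u}\mathbf{u}^\mathsf{H}}{\|\mathbf{u}\|^2}+4\frac{\mathbf{u}(\mathbf{u}^\mathsf{H}\mathbf{u})\mathbf{u}^\mathsf{H}}{\|\mathbf{u}\|^4}=\mathbf{I},
\end{equation*}
so $\H(\v)$ is unitary. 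Since $|p_\v|=|-e^{j\arg(v_s)}|=1$, we get $\bR(\v)^\mathsf{H}\bR(\v)=|p_\v|^2\H(\v)^2=\mathbf{I}$, proving $\bR(\v)\in\mathcal{U}(N)$. (A small caveat: one must first check $\|\mathbf{u}\|>0$, i.e., $\v\neq p_\v\|\v\|\mathbf{e}_1$; this fails only if $\v$ is a scalar multiple of $\mathbf{e}_1$ with the sign clashing with $p_\v$, which the minus sign in $p_\v=-e^{j\arg(v_s)}$ is precisely designed to rule out.)

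For \textbf{part (ii)}, set $\alpha:=p_\v\|\v\|$, so $\mathbf{u}=\v-\alpha\mathbf{e}_1$. I would compute
\begin{equation*}
\mathbf{u}^\mathsf{H}\v=\|\v\|^2-\bar{\alpha}v_1,\qquad \|\mathbf{u}\|^2=2\|\v\|^2-\alpha\bar{v}_1-\bar{\alpha}v_1,
\end{equation*}
using $|\alpha|^2=\|\v\|^2$. The key algebraic identity $2\mathbf{u}^\mathsf{H}\v=\|\mathbf{u}\|^2$ is equivalent to $\alpha\bar{v}_1=\bar{\alpha}v_1$, i.e., $\alpha\bar{v}_1\in\mathbb{R}$. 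If $s=1$, then $v_1\neq 0$ and $\alpha\bar{v}_1=-e^{j\arg(v_1)}\|\v\|\,\bar{v}_1=-\|\v\|\,|v_1|\in\mathbb{R}$; if $s>1$, then $v_1=0$ so the identity is trivial. Therefore $\H(\v)\v=\v-2\frac{\mathbf{u}^\mathsf{H}\v}{\|\mathbf{u}\|^2}\mathbf{u}=\v-\mathbf{u}=\alpha\mathbf{e}_1=p_\v\|\v\|\mathbf{e}_1$. Multiplying by $\bar{p}_\v$ on the left and using $\bR(\v)^\mathsf{H}=\bar{p}_\v\H(\v)$ gives $\bR(\v)^\mathsf{H}\v=\|\v\|\mathbf{e}_1$.

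For \textbf{part (iii)}, I would just invoke (ii) and the involution $\H(\v)^2=\mathbf{I}$ from (i). Applying $\H(\v)$ to $\H(\v)\v=p_\v\|\v\|\mathbf{e}_1$ yields $\v=p_\v\|\v\|\H(\v)\mathbf{e}_1$, hence $\H(\v)\mathbf{e}_1=\bar{p}_\v\v/\|\v\|$, and finally $\bR(\v)\mathbf{e}_1=p_\v\H(\v)\mathbf{e}_1=\v/\|\v\|$.

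The only subtle point, and the one I would flag as the main obstacle, is the reality of $\alpha\bar{v}_1$ in (ii); the rest is bookkeeping. This is precisely why the sign convention in $p_\v$ uses $\arg(v_s)$ for the first nonzero coordinate $v_s$ rather than $\arg(v_1)$: it simultaneously guarantees $\mathbf{u}\neq\mathbf{0}$ and makes the phase of $\alpha$ align (or be irrelevant) with $v_1$, so that the identity $2\mathbf{u}^\mathsf{H}\v=\|\mathbf{u}\|^2$ holds uniformly and the reflection sends $\v$ to $\alpha\mathbf{e}_1$ on the nose.
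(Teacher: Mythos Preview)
Your proof is correct and complete. The paper itself omits the proof entirely, stating that ``the proofs are straightforward and thus omitted,'' so your argument---verifying that $\H(\v)$ is a Hermitian involution, then checking the phase identity $\alpha\bar{v}_1\in\mathbb{R}$ forced by the sign convention $p_\v=-e^{j\arg(v_s)}$---is exactly the standard route the authors had in mind and fills in what they left to the reader.
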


%\hspace{-0.35cm}According to 3) of Lemma \ref{reflectorlemma}, $\bR(\v)$ can be expressed as$$\bR(\v)=\left(\begin{matrix}\frac{\v}{\|\v\|}&\mathbf{B}(\v)\end{matrix}\right),$$
%where $\bB(\v)\in\C^{n\times (n-1)}$ denotes the remaining submatrix.
%For $\v\in\C^n (\v\neq\mathbf{0})$ and $1\leq k\leq n$, we  further define a generalized reflector as
%$$\bR_k(\v)=\left(\begin{matrix}\mathbf{I}_{k-1}&\mathbf{0}\\\mathbf{0}&\bR(\v[k:n])\end{matrix}\right).$$

The generalized Householder transform $\mathbf{R}_k(\mathbf{v})$ has similar properties as $\mathbf{R}(\mathbf{v})$, except that it leaves the first $k-1$ entries of $\mathbf{v}$ unchanged and applies a Householder transform on $\mathbf{v}[k:N]$.

{\color{black}The following lemma is a recursive characterization of the Haar random matrix introduced in \cite[Lemma 1]{HD}, which serves as the theoretical basis of the HD technique}. The only difference is that we consider complex unitary matrices instead of real orthogonal matrices.
\begin{lemma}\label{Haar}
Let $\sg\sim\mathcal{C}\mathcal{N}(\mathbf{0},\mathbf{I}_{\color{black}N})$, $\bQ_{N-1}\sim\text{\normalfont{Haar}}(N-1)$, and $\v\in\C^N\backslash\{\mathbf{0}\}$, all of which are independent, {\color{black} where $N\geq 2$}. Then
\begin{equation}\label{construct}
\bQ_{N}:=\bR_1(\sg)\left(\begin{matrix}1&\mathbf{0}\\\mathbf{0}&\bQ_{N-1}\end{matrix}\right)\bR_1(\v)^\mathsf{H}\sim\text{\normalfont{Haar}}(N)
\end{equation}
and
\begin{equation}\label{construct2}
\widetilde{\bQ}_{N}:=\bR_1(\v)\left(\begin{matrix}1&\mathbf{0}\\\mathbf{0}&\bQ_{N-1}\end{matrix}\right)\bR_1(\sg)^\mathsf{H}\sim\text{\normalfont{Haar}}(N).
\end{equation}
Moreover, $\bQ_N$ and $\widetilde{\bQ}_N$ are independent of $\v$.
\end{lemma}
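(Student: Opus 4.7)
The plan is to show that $\bQ_N$ in \eqref{construct} is left-translation-invariant, i.e., $\mathbf{W}\bQ_N \overset{d}{=} \bQ_N$ for every fixed $\mathbf{W} \in \mathcal{U}(N)$, and then invoke the characterization of the Haar measure as the unique such probability distribution on $\mathcal{U}(N)$. That $\bQ_N\in\mathcal{U}(N)$ almost surely follows from Lemma \ref{reflectorlemma}(i) and the fact that products of unitaries are unitary. Independence from $\v$ will come out of the same computation by carrying it out conditionally on $\v$.

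The key algebraic step is a factorization of $\mathbf{W}\,\bR_1(\sg)$: for any fixed $\mathbf{W} \in \mathcal{U}(N)$, I would show that there exists $\bM_{N-1}=\bM_{N-1}(\mathbf{W},\sg)\in\mathcal{U}(N-1)$ such that
\[
\mathbf{W}\,\bR_1(\sg) \;=\; \bR_1(\mathbf{W}\sg)\,\begin{pmatrix}1&\mathbf{0}\\\mathbf{0}&\bM_{N-1}\end{pmatrix}.
\]
This reduces to verifying that $\bR_1(\mathbf{W}\sg)^{\mathsf{H}}\,\mathbf{W}\,\bR_1(\sg)$ is a unitary matrix fixing $\mathbf{e}_1$: unitarity is clear, and by Lemma \ref{reflectorlemma}(iii) both $\mathbf{W}\bR_1(\sg)$ and $\bR_1(\mathbf{W}\sg)$ map $\mathbf{e}_1$ to the common vector $\mathbf{W}\sg/\|\sg\|=\mathbf{W}\sg/\|\mathbf{W}\sg\|$, so the product in question fixes $\mathbf{e}_1$ and must therefore have the displayed block form.

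Substituting this factorization into $\mathbf{W}\bQ_N$ yields
\[
\mathbf{W}\bQ_N \;=\; \bR_1(\mathbf{W}\sg)\,\begin{pmatrix}1&\mathbf{0}\\\mathbf{0}&\bM_{N-1}\,\bQ_{N-1}\end{pmatrix}\bR_1(\v)^{\mathsf{H}}.
\]
Conditioning on $(\sg,\v)$, the factor $\bM_{N-1}$ becomes a deterministic element of $\mathcal{U}(N-1)$ while $\bQ_{N-1}$ remains Haar distributed (since it is jointly independent of $\sg$ and $\v$). Left invariance of the Haar measure at level $N-1$ then gives $\bM_{N-1}\bQ_{N-1}\overset{d}{=}\bQ_{N-1}$ conditional on $(\sg,\v)$. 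Combining this with the rotational invariance $\mathbf{W}\sg\overset{d}{=}\sg$ of the standard complex Gaussian and the joint independence of $\{\sg,\bQ_{N-1},\v\}$, one concludes $\mathbf{W}\bQ_N\overset{d}{=}\bQ_N$, so $\bQ_N\sim\text{Haar}(N)$.

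Independence of $\bQ_N$ from $\v$ is obtained by repeating the invariance check conditionally on $\v$: the conditional law of $\bQ_N$ given $\v$ is still left-translation-invariant on $\mathcal{U}(N)$, hence equals the Haar measure for almost every realization of $\v$ and in particular does not depend on $\v$. The second claim \eqref{construct2} follows from \eqref{construct} by taking conjugate transposes, using that $\bQ^{\mathsf{H}}\sim\text{Haar}(N)$ whenever $\bQ\sim\text{Haar}(N)$ together with $\bQ_{N-1}^{\mathsf{H}}\overset{d}{=}\bQ_{N-1}$, which lets one recognize $\widetilde{\bQ}_N$ as an instance of $\bQ_N^{\mathsf{H}}$ after relabeling. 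The main obstacle is the factorization step; once that is in place, the rest is the standard absorption of a conditionally deterministic unitary factor into a Haar-distributed matrix.
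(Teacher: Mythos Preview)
Your argument is correct and complete as an outline: the factorization $\mathbf{W}\bR_1(\sg)=\bR_1(\mathbf{W}\sg)\,\mathrm{diag}(1,\bM_{N-1})$ holds for the reason you give, the conditional absorption of $\bM_{N-1}$ into $\bQ_{N-1}$ is valid, and the final step using $\mathbf{W}\sg\overset{d}{=}\sg$ together with joint independence cleanly yields $\mathbf{W}\bQ_N\overset{d}{=}\bQ_N$. Your handling of independence from $\v$ and of \eqref{construct2} via conjugate transposition is also sound.

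The paper takes a different and shorter route. Rather than proving left-invariance from scratch, it observes that the first column of $\bR_1(\sg)$ is $\sg/\|\sg\|$, which is uniform on the sphere, and then cites a known recursive characterization of the Haar measure (the subgroup algorithm) to conclude directly that $\bR_1(\sg)\,\mathrm{diag}(1,\bQ_{N-1})\sim\text{Haar}(N)$. The final right-multiplication by $\bR_1(\v)^{\mathsf{H}}$ is then absorbed using right translation invariance of the Haar measure, since $\bR_1(\v)$ is independent of the Haar matrix just constructed. Your approach is more self-contained---it essentially reproves the cited recursive characterization via the factorization step and the Haar property at level $N-1$---while the paper's is more economical but relies on an external reference. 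Both yield the independence from $\v$ by the same conditional-Haar argument.
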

\begin{proof}
We first note that the first column of $\bR_1(\sg)$ is $\frac{\sg}{\|\sg\|}$, which is uniformly distributed on the unit sphere $\mathbb{S}^{N-1}\subseteq\C^N$. Then according to \cite[page 21]{Haarbook}, {\color{black}we have }
$$\bR_1(\sg)\left(\begin{matrix}1&\mathbf{0}\\\mathbf{0}&\bQ_{N-1}\end{matrix}\right)\sim\text{Haar}(N).$$ Moreover, since a Haar matrix is both left and right translation invariant, we are free to multiply unitary matrices (either deterministic or independent of the Haar matrix) from left or right, hence \eqref{construct} is correct. The above discussions imply that the conditional distribution of $\bQ_N$ given $\v$ is the same as the distribution of $\bQ_N$ (both are Haar distributed), i.e.,
\begin{equation}\label{independence}
\mu_{\bQ_N|\v}=\mu_{\bQ_N}=\mu,~~\forall\, \v\in\mathbb{C}^N\backslash\{\mathbf{0}\},
\end{equation} where $\mu$ denotes the Haar measure on $\mathcal{U}(N)$. Therefore, $\bQ_N$ and $\v$ are independent. 
Finally, \eqref{construct2} is also true since $\bQ\sim\text{Haar}(N)$  {\red implies}  $\bQ^\mathsf{H}\sim\text{Haar}(N)$, and the  independence between $\widetilde{\bQ}_N$ and $\v$ can be justified in a similar way as \eqref{independence}.
\end{proof}

{\color{black}\section{Proof of Theorem \ref{Equimodel}}\label{derivy}
In this section, we provide the detailed proof of Theorem \ref{Equimodel}. This section is long and is organized as follows:
\begin{itemize}
\item Section \ref{Sec:main_proof_th1} contains the main proof of Theorem \ref{Equimodel}, relying on two auxiliary results: Lemma \ref{the1:lemma1} and Lemma \ref{the1:lemma2};
\item Lemma \ref{the1:lemma1} is critical to the proof of Theorem \ref{Equimodel}. For better understanding, we present some high-level ideas about the proof of Lemma \ref{the1:lemma1} in Section \ref{discussion:the1};
\item Section \ref{proof:lemthe11} contains the complete proof of Lemma \ref{the1:lemma1};
\item Section \ref{proof:lemthe12} contains the proof of Lemma \ref{the1:lemma2}.
\end{itemize}

\subsection{Proof of Theorem \ref{Equimodel}}\label{Sec:main_proof_th1}
The proof of Theorem \ref{Equimodel} contains two major steps. The first step is to apply the HD technique \cite{HD} to our model in \eqref{Eqn:model_s1s3} to obtain a statistically equivalent model that is more amenable to analysis. The result of this step is summarized in  the following lemma and the proof is given in Appendix \ref{proof:lemthe11}.
\begin{lemma}\label{the1:lemma1}
The distribution of $(\s,\y)$ given in \eqref{Eqn:model_s1s3} is the same as {\color{black}that of} $({\s},\tilde{\y})$ specified by the following model: 
\begin{equation}\label{Eqn:model_s1s3_2}
\left\{
\begin{aligned}
\tilde{\s}_1&=f(\bD)^\mathsf{T}\bR_1(\mathsf{g}_1)\bR_1(\s)^\mathsf{H}\s,\\
\tilde{\s}_2&=q(\bR_1(\mathsf{z}_1)\bR_1(\tilde{\s}_1)^\mathsf{H}\tilde{\s}_1),\\
\tilde{\s}_3&=\bD\bR_1(\tilde{\s}_1)\bR_2(\mathsf{z}_2)\bR_1(\v_1)^\mathsf{H}\v_1,\\
\tilde{\y}&=\eta\bR_1(\s)\bR_2(\mathsf{g}_2)\bR_2(\v_2)^\mathsf{H}\v_2+\n,
\end{aligned}\right.
\end{equation}
where $\v_1=\bR_1(\mathsf{z}_1)^\mathsf{H}\tilde{\s}_2$ and $\v_2=\bR_1(\mathsf{g}_1)^\mathsf{H}\tilde{\s}_3${\normalfont{;}} $\mathsf{g}_1\sim\mathcal{CN}(\mathbf{0},\mathbf{I}_K),\mathsf{g}_2\sim\mathcal{CN}(\mathbf{0},\mathbf{I}_K), \mathsf{z}_1\sim\mathcal{CN}(\mathbf{0},\mathbf{I}_N),$ $ \mathsf{z}_2\sim\mathcal{CN}(\mathbf{0},\mathbf{I}_N)${\normalfont{;}} $\mathsf{g}_1,\mathsf{g}_2, \mathsf{z}_1, \mathsf{z}_2,\s,\n,\bD$ are mutually independent.
\end{lemma}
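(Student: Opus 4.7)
The plan is to apply the Householder dice recursion of Lemma~\ref{Haar} to the two Haar matrices $\bU$ and $\bV$ driving \eqref{Eqn:model_s1s3}, peeling off one layer of randomness each time a Haar matrix multiplies a new vector. Since $\bU$ appears both as $\bU^{\mathsf{H}}\s$ in the first iteration and as $\bU\s_3$ in the fourth (and similarly $\bV$ in iterations two and three), both appearances must be expanded via a single consistent decomposition. I would therefore fix at the outset the representations
\begin{equation*}
\bU=\bR_1(\s)\begin{pmatrix}1&\mathbf{0}\\ \mathbf{0}&\bU_{K-1}\end{pmatrix}\bR_1(\sg_1)^{\mathsf{H}},\qquad \bV=\bR_1(\sz_1)\begin{pmatrix}1&\mathbf{0}\\ \mathbf{0}&\bV_{N-1}\end{pmatrix}\bR_1(\s_1)^{\mathsf{H}},
\end{equation*}
each of which carries the correct Haar law by Lemma~\ref{Haar}, provided $\sg_1,\sz_1$ and the smaller Haar blocks $\bU_{K-1},\bV_{N-1}$ are drawn independently of the pivot vectors $\s,\s_1$ and of $\bD,\n$.

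The first two iterations then collapse immediately. Using Lemma~\ref{reflectorlemma}(ii), $\bR_1(\s)^{\mathsf{H}}\s=\|\s\|\mathbf{e}_1$ is preserved by the inner block, so $\bU^{\mathsf{H}}\s=\bR_1(\sg_1)\bR_1(\s)^{\mathsf{H}}\s$ and $\tilde{\s}_1$ takes the claimed form; the same collapse gives $\bV\s_1=\bR_1(\sz_1)\bR_1(\s_1)^{\mathsf{H}}\s_1$ and hence $\tilde{\s}_2$ matches the lemma.

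For the third and fourth iterations, the vectors $\v_1=\bR_1(\sz_1)^{\mathsf{H}}\tilde{\s}_2$ and $\v_2=\bR_1(\sg_1)^{\mathsf{H}}\tilde{\s}_3$ are generic rather than proportional to $\mathbf{e}_1$, so the inner Haar blocks $\bV_{N-1}^{\mathsf{H}}$ and $\bU_{K-1}$ must act non-trivially on the subvectors $\v_1[2{:}N]$ and $\v_2[2{:}K]$. I would then invoke Lemma~\ref{Haar} a second time on each inner block, now using these subvectors as pivots and exposing fresh Gaussian vectors $\sz_2$ and $\sg_2$. Repackaging the resulting products using the generalized reflector $\bR_2(\cdot)$ produces the expressions for $\tilde{\s}_3$ and $\tilde{\y}$ in \eqref{Eqn:model_s1s3_2}. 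The hypothesis $N,K\geq 3$ enters precisely here, to guarantee that the inner Haar blocks have dimension at least two so the second application of Lemma~\ref{Haar} is legitimate.

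The only real obstacle is independence bookkeeping: each invocation of Lemma~\ref{Haar} exposes a Gaussian vector and a smaller Haar matrix that are independent of the particular pivot used at that step, but the pivot itself depends on earlier randomness ($\s_1$ already uses $\sg_1$, $\v_1$ uses $\sz_1$, and so on). I would therefore check inductively, using the conditional-independence identity \eqref{independence} from the proof of Lemma~\ref{Haar}, that each newly exposed $\sg_1,\sz_1,\sz_2,\sg_2$ may be taken independent of the $\sigma$-algebra generated by $\s,\bD,\n$ together with all previously drawn Gaussians. Once this is in place, the distributional identity of $(\s,\y)$ and $(\s,\tilde{\y})$ follows at once from the Householder identities listed above.
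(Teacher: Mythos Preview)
Your proposal is correct and follows essentially the same route as the paper's proof: apply Lemma~\ref{Haar} once to $\bU^{\mathsf{H}}$ and $\bV$ with pivots $\s$ and $\tilde{\s}_1$ to collapse iterations one and two, then apply it a second time to the inner $(K-1)$- and $(N-1)$-dimensional Haar blocks with pivots $\v_2[2{:}K]$ and $\v_1[2{:}N]$ to handle iterations three and four, exposing $\sg_1,\sz_1,\sz_2,\sg_2$ in turn. Your remark that $N,K\geq 3$ is exactly what makes the second peel legitimate matches the paper, and your independence bookkeeping via the conditional identity \eqref{independence} is the same mechanism the paper uses (it verifies afterward that the fully assembled $\widetilde{\bU},\widetilde{\bV}$ are Haar and independent of $\s,\bD,\n$). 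The only cosmetic point is that ``fix at the outset'' is slightly misleading, since the pivot $\tilde{\s}_1$ in the $\bV$-representation already depends on $\sg_1$; the construction is necessarily sequential, as you yourself note in the final paragraph.
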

The second step is to simplify the above statistically equivalent model $(\s,\tilde{\y})$ using basic properties of the Householder transform in Lemma \ref{reflectorlemma} to obtain the explicit model \eqref{Equiy} in Theorem  \ref{Equimodel}. This step requires  careful calculation and we leave the details to Appendix \ref{proof:lemthe12}.
\begin{lemma}\label{the1:lemma2}
The distribution of $(\s,\tilde{\y})$ given in \eqref{Eqn:model_s1s3_2} is the same as that of $(\s,\bar{\y})$ given in \eqref{Equiy} in Theorem \ref{Equimodel}.
\end{lemma}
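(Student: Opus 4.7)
The strategy is a direct, line-by-line simplification of the model \eqref{Eqn:model_s1s3_2}, invoking only the three elementary properties of the Householder transform from Lemma \ref{reflectorlemma}: $\bR(\v)^\mathsf{H}\v=\|\v\|\mathbf{e}_1$, $\bR(\v)\mathbf{e}_1=\v/\|\v\|$, and the block form $\bR(\v)=(\v/\|\v\|,\mathbf{B}(\v))$. Each iterate in \eqref{Eqn:model_s1s3_2} has the shape ``rotate the input into the $\mathbf{e}_1$ direction, apply a block-structured map, then rotate back by a Householder built from an independent Gaussian,'' which is precisely the pattern those three identities were designed to collapse. The only conceptual step that is not mechanical occurs at the very end, where one must reconcile the Gaussian vector that falls out of the collapse with the particular parameterization of $T_\s$ and $T_\sg$ used in \eqref{Equiy_2}.

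Concretely, I would carry out the simplification as follows. Applying the first two identities to the first two lines of \eqref{Eqn:model_s1s3_2} gives $\tilde{\s}_1=\frac{\|\s\|}{\|\sg_1\|}f(\bD)^\mathsf{T}\sg_1=\hat{\s}_1$ and $\tilde{\s}_2=q\!\left(\frac{\|\hat{\s}_1\|}{\|\sz_1\|}\sz_1\right)$. Using the block form on $\v_1=\bR_1(\sz_1)^\mathsf{H}\tilde{\s}_2$ then reads off $\v_1[1]=C_1\|\hat{\s}_1\|$ and $\|\v_1[2:N]\|=C_2\|\sz_2[2:N]\|$ straight from the definitions of $C_1,C_2$. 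Iterating the three identities once more collapses the third line to
\begin{equation*}
\tilde{\s}_3=C_1\bD\hat{\s}_1+C_2\bD\mathbf{B}(\hat{\s}_1)\sz_2[2:N]=:\w,
\end{equation*}
and, after reading off $\v_2[1]=\sg_1^\mathsf{H}\w/\|\sg_1\|$ and $\v_2[2:K]=\mathbf{B}(\sg_1)^\mathsf{H}\w$ from $\v_2=\bR_1(\sg_1)^\mathsf{H}\w$, the same three-step collapse applied to the fourth line gives
\begin{equation*}
\tilde{\y}=\eta\,\v_2[1]\,\frac{\s}{\|\s\|}+\eta\,\frac{\|\v_2[2:K]\|}{\|\sg_2[2:K]\|}\,\mathbf{B}(\s)\sg_2[2:K]+\n.
\end{equation*}

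The remaining obstacle is that this expression uses $\sg_2[2:K]$, whereas \eqref{Equiy_2} instead uses $(\bR(\s)^{-1}\sg_2)[2:K]$ inside $T_\sg$ together with a compensating $(\bR(\s)^{-1}\sg_2)[1]$ inside $T_\s$. I would resolve this by expanding $\hat{\y}=\eta T_\s\s+\eta T_\sg\sg_2+\n$ through the orthogonal decomposition $\sg_2=(\bR(\s)^{-1}\sg_2)[1]\,\s/\|\s\|+\mathbf{B}(\s)(\bR(\s)^{-1}\sg_2)[2:K]$ from Lemma \ref{reflectorlemma}(iii); by design of $T_\s$, the $\s/\|\s\|$-component cancels, leaving
\begin{equation*}
\hat{\y}=\eta\,\v_2[1]\,\frac{\s}{\|\s\|}+\eta\,\frac{\|\v_2[2:K]\|}{\|(\bR(\s)^{-1}\sg_2)[2:K]\|}\,\mathbf{B}(\s)(\bR(\s)^{-1}\sg_2)[2:K]+\n.
\end{equation*}
Both the simplified $\tilde{\y}$ and this rewritten $\hat{\y}$ now have the common form $\eta\v_2[1]\s/\|\s\|+\eta\|\v_2[2:K]\|\,\mathbf{B}(\s)\,\mathsf{u}/\|\mathsf{u}\|+\n$ with $\mathsf{u}\in\mathbb{C}^{K-1}$ a standard complex Gaussian independent of all other random quantities: in the first case $\mathsf{u}=\sg_2[2:K]$, and in the second $\mathsf{u}=(\bR(\s)^{-1}\sg_2)[2:K]$, which remains $\mathcal{CN}(\mathbf{0},\mathbf{I}_{K-1})$ and independent of $(\s,\bD,\n,\sg_1,\sz_1,\sz_2)$ because $\bR(\s)$ is $\s$-measurable and unitary. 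The two conditional distributions given all other random quantities therefore coincide, and marginalizing delivers the claimed distributional equality $(\s,\tilde{\y})\overset{d}{=}(\s,\hat{\y})$.
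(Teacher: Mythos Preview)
Your proposal is correct and follows essentially the same route as the paper's proof. The mechanical collapse of $\tilde{\s}_1,\tilde{\s}_2,\tilde{\s}_3,\tilde{\y}$ via Lemma~\ref{reflectorlemma} is carried out in the paper identically (arriving at exactly your expression for $\tilde{\y}$, their equation~\eqref{tildey2}), and the final distributional step relies on the same rotational invariance $\bR(\s)^{-1}\sg_2\overset{d}{=}\sg_2$; the only cosmetic difference is direction---the paper substitutes $\sg_2\mapsto\bR(\s)^{-1}\sg_2$ inside the simplified $\tilde{\y}$ to produce $\hat{\y}$, while you expand $\hat{\y}$ and match it to $\tilde{\y}$.
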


Combining  Lemmas  \ref{the1:lemma1} and \ref{the1:lemma2}, we get the desired result in Theorem \ref{Equimodel}.}

%In the following, we will first give some discussions to shed some light on the  result in Lemma \ref{the1:lemma1} in Appendix \ref{discussion:the1}, and then give rigorous proof of  Lemma \ref{the1:lemma1} and Lemma \ref{the1:lemma2} in Appendix \ref{proof:lemthe11} and Appendix \ref{proof:lemthe12}, respectively. }

\subsection{Discussions on Lemma \ref{the1:lemma1}}\label{discussion:the1}

 {\color{black}
Since Lemma \ref{the1:lemma1} is critical to  the proof of Theorem \ref{Equimodel}, we would like to provide some high-level ideas and informal discussions before we present its full proof in Appendix \ref{proof:lemthe11}.

Recall that our original model in \eqref{Eqn:model_s1s3} reads
\begin{equation*}
\left\{
\begin{aligned}
\s_1&=f(\bD)^\mathsf{T}\bU^\mathsf{H}\s;\\
\s_2&=q(\bV \s_1);\\
\s_3&=\bD\bV^\mathsf{H}\s_2;\\
\y&=\bU\s_3+\n.
\end{aligned}\right.
\end{equation*} 
Our goal is to characterize the joint distribution of $(\mathbf{s},\mathbf{s}_1, \mathbf{s}_2, \mathbf{s}_3,\mathbf{y})$. Roughly speaking, for $N,K\ge3$, fixing the distribution of $(\mathbf{s},\mathbf{s}_1, \mathbf{s}_2, \mathbf{s}_3,\mathbf{y})$ does not fully fix the randomness of the Haar matrices $\bU$ and $\bV$, and we still have freedom to generate the remaining randomness in a convenient way. A systematic way of carrying out this process is via the HD technique in \cite{HD}. \textit{To be clear, we use $(\mathbf{s},\mathbf{s}_1, \mathbf{s}_2, \mathbf{s}_3,\mathbf{y})$ to denote the random vectors from the original model, and use $(\mathbf{s},\tilde{\mathbf{s}}_1, \tilde{\mathbf{s}}_2, \tilde{\mathbf{s}}_3,\tilde{\mathbf{y}})$ to denote the corresponding vectors generated via the HD method in the following discussion.} 

The main idea of HD is to  generate the Haar random matrices $\bU$ and $\bV$ involved in the above iterations in a recursive way by repeatedly applying Lemma \ref{Haar}. More specifically, the HD technique tells that at each iteration we only need to generate a single Gaussian vector to unfold the randomness of $\bU$ (or $\bV$) in one dimension,  and the resulting sequence will only depend on the initial condition $\{\s,\bD,\n\}$ and the exposed Gaussian vectors. 

Here we take the first iteration of \eqref{Eqn:model_s1s3} as an example to shed some light on this. To compute $\bU^{\mathsf{H}}\s$, we can construct a Haar random matrix $\bU^\mathsf{H} $ according to Lemma \ref{Haar} as 
\begin{equation}\label{U2}
\widetilde{\bU}^\mathsf{H}=\bR_1(\mathsf{g}_1)\left(\begin{matrix}1&\mathbf{0}\\\mathbf{0}&\bQ_{K-1}\end{matrix}\right)\bR_1(\s)^\mathsf{H},
\end{equation}
where $\sg_1\sim\mathcal{C}\mathcal{N}(\mathbf{0},\mathbf{I}_K)$ and $\bQ_{K-1}\sim\text{Haar}(K-1)$ are  independent  of each other and of $\s$, $\bD$, and $\n$. Then we have 
$$
\begin{aligned}
\tilde{\s}_1&:=f(\bD)^\mathsf{T}\widetilde{\bU}^\mathsf{H}\s\\
&=f(\bD)^\mathsf{T}\bR_1(\sg_1)\left(\begin{matrix}1&\mathbf{0}\\\mathbf{0}&\bQ_{K-1}\end{matrix}\right)\bR_1(\s)^\mathsf{H}\s\\
&=f(\bD)^\mathsf{T}\bR_1(\sg_1)\bR_1(\s)^\mathsf{H}\s,
\end{aligned}$$
where the last equality holds since $\bR_1(\s)^\mathsf{H}\s$ is only nonzero in its first element due to Lemma \ref{reflectorlemma}. \textit{An important observation is that $\tilde{\s}_1$ depends on the Haar matrix $\widetilde{\mathbf{U}}$ only through a Gaussian vector $\sg_1$, and is invariant to the remaining Haar matrix $\mathbf{Q}_{K-1}$.} Since $\bQ_{K-1}$ involved in \eqref{U2} is Haar distributed and independent of all the other random variables generated up to this point, we can apply the same technique to $\bQ_{K-1}$ when another multiplication involving $\bU$ is required, and a new Gaussian vector $\sg_2\sim\mathcal{CN}(\mathbf{0},\mathbf{I}_K)$ will be  exposed (see the expression of $\tilde{\y}$ in \eqref{Eqn:model_s1s3_2}).
The Haar matrix $\bV$ can be constructed similarly when we deal with the second and third iterations, and two Gaussian vectors $\sz_1\sim\mathcal{CN}(\mathbf{0},\mathbf{I}_N)$ and $\sz_2\sim\mathcal{CN}(\mathbf{0},\mathbf{I}_N)$ will be exposed in these two iterations after applying the HD technique (see the expression of $\tilde{\s}_2$ and $\tilde{\s}_3$ in \eqref{Eqn:model_s1s3_2}).  A detailed derivation of \eqref{Eqn:model_s1s3} using the HD technique is provided in Appendix \ref{EquiyA1}.

To gain some further insight,  we directly give the form of the two Haar matrices constructed using the HD technique without proof (see Appendix \ref{EquiyA1} for a detailed proof): 
\begin{equation}\label{expression:U}
\widetilde{\bU}=\bR_1(\s)\bR_2(\mathsf{g}_2)\left(\begin{matrix}\mathbf{I}_2&\mathbf{0}\\\mathbf{0}&\bQ_{K-2}\end{matrix}\right)\bR_2(\v_2)^\mathsf{H}\bR_1(\mathsf{g}_1)^\mathsf{H}
\end{equation}
 and
\begin{equation}\label{expression:V}
\widetilde{\bV}=\bR_1(\sz_1)\bR_2(\v_1)\left(\begin{matrix}\mathbf{I}_2&\mathbf{0}\\\mathbf{0}&\bP_{N-2}\end{matrix}\right)\bR_2(\sz_2)^\mathsf{H}\bR_1(\tilde{\s}_1)^\mathsf{H}.
\end{equation}
In the above expressions, $\bQ_{K-2}$ and $\bP_{N-2}$ are Haar matrices independent of all the other random variables, which are the unexposed random matrices that are absent in the final result; $\mathsf{g}_1,\mathsf{g}_2,\mathsf{z}_1,\mathsf{z}_2$ are the exposed Gaussian vectors; $\tilde{\s}_1,\v_1,$ and $\v_2$ are  some intermediate random vectors generated due to the recursive nature of the HD technique. %Here, $\mathsf{g}_1$, $\mathsf{g}_2$ and $\mathsf{z}_1$, $\mathsf{z}_2$ are the corresponding exposed random vectors (as they appear in the final model in  \eqref{Eqn:model_s1s3_2}) and $\mathbf{Q}_{K-2}$ and $\mathbf{P}_{N-2}$ are the unexposed ones (as they are absent from the final result). 
Equation \eqref{Eqn:model_s1s3_2} can be interpreted as replacing the Haar random matrices $\mathbf{U}$ and $\mathbf{V}$ in the original model \eqref{Eqn:model_s1s3} by the two unitary matrices $\widetilde{\bU}$ and $\widetilde{\bV}$ given above. }%Notice that $\widetilde{\mathbf{U}}$ and $\widetilde{\mathbf{V}}$ depend on various intermediate variables generated in \eqref{Eqn:model_s1s3_2}, due to the recursive nature of the HD technique. 
Here, we use the notation $\widetilde{\mathbf{U}}$ and $\widetilde{\mathbf{V}}$ to emphasize the fact that their distributional properties are yet to be proved. To show $(\s,\y)\overset{d}{=}(\s,\tilde{\y})$, it remains to check that $\widetilde{\bU}$ and $\widetilde{\bV}$ have the desired properties, i.e., $\widetilde{\bU}\sim\text{Haar}(K), \widetilde{\bV}\sim\text{Haar}(N)$, and $\widetilde{\bU},\widetilde{\bV},\bD, \s,\n$ are mutually independent. We relegate the details to Appendix \ref{EquiyA2}.

\subsection{Proof of Lemma \ref{the1:lemma1}}\label{proof:lemthe11}
{\color{black}In this subsection, we give the complete proof of Lemma \ref{the1:lemma1}, which consists of the following two steps:
\begin{itemize}
\item We first show that \eqref{Eqn:model_s1s3_2} is the sequence generated by applying the HD technique to \eqref{Eqn:model_s1s3};
\item We then prove \eqref{Eqn:model_s1s3_2} is statistically equivalent to \eqref{Eqn:model_s1s3}. 
\end{itemize}}
\subsubsection{Derivation of \eqref{Eqn:model_s1s3_2}}\label{EquiyA1}
We first give a detailed derivation of how \eqref{Eqn:model_s1s3_2} is obtained  via the HD technique. Recall that our original model \eqref{sysmodel1} can be written as the following  iterative process (see \eqref{Eqn:model_s1s3}):
\begin{equation*}
\left\{
\begin{aligned}
\s_1&=f(\bD)^\mathsf{T}\bU^\mathsf{H}\s;\\
\s_2&=q(\bV \s_1);\\
\s_3&=\bD\bV^\mathsf{H}\s_2;\\
\y&=\bU\s_3+\n.
\end{aligned}\right.
\end{equation*} 
Next, we apply the HD technique to deal with the above model.

{\color{black}For the first iteration, %we need to calculate a multiplication of the Haar matrix $\bU^\mathsf{H}$ and an indendendent vector $\s$. 
we construct $\bU^\mathsf{H}$ according to Lemma \ref{Haar} as}
\begin{equation}\label{U}
\left(\bU^{(1)}\right)^\mathsf{H}=\bR_1(\mathsf{g}_1)\left(\begin{matrix}1&\mathbf{0}\\\mathbf{0}&\bQ_{K-1}\end{matrix}\right)\bR_1(\s)^\mathsf{H},
\end{equation}
where $\sg_1\sim\mathcal{C}\mathcal{N}(\mathbf{0},\mathbf{I}_K)$ and $\bQ_{K-1}\sim\text{Haar}(K-1)$ are  independent {\red of each other and of} $\s$, $\bD$, and $\n$. Then we have 
$$
\begin{aligned}
\tilde{\s}_1&=f(\bD)^\mathsf{T}\left(\bU^{(1)}\right)^\mathsf{H}\s\\
&=f(\bD)^\mathsf{T}\bR_1(\sg_1)\left(\begin{matrix}1&\mathbf{0}\\\mathbf{0}&\bQ_{K-1}\end{matrix}\right)\bR_1(\s)^\mathsf{H}\s\\
&=f(\bD)^\mathsf{T}\bR_1(\sg_1)\bR_1(\s)^\mathsf{H}\s,
\end{aligned}$$
where the last equality holds since $\bR_1(\s)^\mathsf{H}\s$ is only nonzero in its first element.

{\color{black}For the second iteration,} we use a similar technique to construct  the Haar matrix $\bV$ according to Lemma \ref{Haar} as
\begin{equation}\label{eq:V1}
\bV^{(1)}=\bR_1(\sz_1)\left(\begin{matrix}1&\mathbf{0}\\\mathbf{0}&\bP_{N-1}\end{matrix}\right)\bR_1(\tilde{\s}_1)^\mathsf{H},
\end{equation}
where $\sz_1\sim\mathcal{C}\mathcal{N}(\mathbf{0},\mathbf{I}_N)$ and $\bP_{N-1}\sim\text{Haar}(N-1)$ are {\red independent of each other and of} all the existing random variables. It follows immediately that 
$$
\begin{aligned}
\bV^{(1)}\tilde{\s}_1&=\bR_1(\sz_1)\left(\begin{matrix}1&\mathbf{0}\\\mathbf{0}&\bP_{N-1}\end{matrix}\right)\bR_1(\tilde{\s}_1)^\mathsf{H}\tilde{\s}_1\\
&=\bR_1(\sz_1)\bR_1(\tilde{\s}_1)^\mathsf{H}\tilde{\s}_1,
\end{aligned}$$
and thus
\begin{equation*}\label{tildes2}
\tilde{\s}_2=q(\bV^{(1)}\tilde{\s}_1){=}q(\bR_1(\sz_1)\bR_1(\tilde{\s}_1)^\mathsf{H}\tilde{\s}_1).
\end{equation*}

 {\color{black}  For the third iteration, we need to  calculate $\bD\bV^\mathsf{H}\tilde{\s}_2$}. From \eqref{eq:V1}, we get 
$$\left(\bV^{(1)}\right)^\mathsf{H}=\bR_1(\tilde{\s}_1)\left(\begin{matrix}1&\mathbf{0}\\\mathbf{0}&\bP_{N-1}^\mathsf{H}\end{matrix}\right)\bR_1(\sz_1)^\mathsf{H}.$$ 
Let $\v_1=\bR_1(\sz_1)^\mathsf{H}\tilde{\s}_2$. Since  $\bP_{N-1}^\mathsf{H}$ is Haar distributed and independent of $\v_1$, we can still apply the above technique to construct $\bP_{N-1}^\mathsf{H}$ as 
$$\bP_{N-1}^\mathsf{H}=\bR_1(\sz_2[2:N])\left(\begin{matrix}1&\mathbf{0}\\\mathbf{0}&\bP_{N-2}^\mathsf{H}\end{matrix}\right)\bR_1(\v_1[2:N])^\mathsf{H},$$ where 
 $\sz_2\sim\mathcal{C}\mathcal{N}(\mathbf{0},\mathbf{I}_N)$ and $\bP_{N-2}\sim\text{Haar}(N-2)$ are {\red independent of each other and of} all the existing random variables. Then we have 
$$\left(\bV^{(2)}\right)^\mathsf{H}=\bR_1(\tilde{\s}_1)\bR_2(\sz_2)\left(\begin{matrix}\mathbf{I}_2&\mathbf{0}\\\mathbf{0}&\bP_{N-2}^\mathsf{H}\end{matrix}\right)\bR_2(\v_1)^\mathsf{H}\bR_1(\sz_1)^\mathsf{H}$$
 and 
 $$
 \begin{aligned}
 \tilde{\s}_3&=\bD \left(\bV^{(2)}\right)^\mathsf{H} \tilde{\s}_2\\
 &=\bD \bR_1(\tilde{\s}_1)
 \bR_2(\sz_2)\left(\begin{matrix}\mathbf{I}_2&\mathbf{0}\\\mathbf{0}&\bP_{N-2}^\mathsf{H}\end{matrix}\right)\bR_2(\v_1)^\mathsf{H}\v_1\\
 &=\bD \bR_1(\tilde{\s}_1)\bR_2(\sz_2)\bR_2(\v_1)^\mathsf{H}\v_1,
 \end{aligned}
 $$
  where $\bP_{N-2}^\mathsf{H}$ disappears in the second equality since $\bR_2(\v_1)^\mathsf{H}\v_1$ is only nonzero in its first two elements.
  
 Finally, we calculate $\tilde{\y}=\bU\tilde{\s}_3+\n$. According to \eqref{U},
 $$\bU^{(1)}=\bR_1(\s)\left(\begin{matrix}1&\mathbf{0}\\\mathbf{0}&\bQ_{K-1}^\mathsf{H}\end{matrix}\right)\bR_1(\sg_1)^\mathsf{H}.$$ Similarly, let $\v_2=\bR_1(\sg_1)^\mathsf{H}\tilde{\s}_3$ and {\color{black} construct $\bQ_{K-1}^\mathsf{H}$ as}
 $$\bQ_{K-1}^\mathsf{H}=\bR_1(\sg_2[2:K])\left(\begin{matrix}1&\mathbf{0}\\\mathbf{0}&\bQ_{K-2}\end{matrix}\right)\bR_1(\v_2[2:K])^\mathsf{H},$$ 
where $\sg_2\sim\mathcal{C}\mathcal{N}(\mathbf{0},\mathbf{I}_K)$ and  $\bQ_{K-2}\sim\text{Haar}(K-2)$ are independent {\red of each other and of} all the existing random variables.  Then we have 
$$\bU^{(2)}=\bR_1(\s)\bR_2(\sg_2)\left(\begin{matrix}\mathbf{I}_2&\mathbf{0}\\\mathbf{0}&\bQ_{K-2}\end{matrix}\right)\bR_2(\v_2)^\mathsf{H}\bR_1(\sg_1)^\mathsf{H}$$ and 
$$
\begin{aligned}
\tilde{\y}&=\bU^{(2)}\tilde{\s}_3+\n\\
&=\bR_1(\s)\bR_2(\sg_2)\left(\begin{matrix}\mathbf{I}_2&\mathbf{0}\\\mathbf{0}&\bQ_{K-2}\end{matrix}\right)\bR_2(\v_2)^\mathsf{H}\bR_1(\sg_1)^\mathsf{H}\tilde{\s}_3+\n\\
&=\bR_1(\s)\bR_2(\sg_2)\bR_2(\v_2)^\mathsf{H}\v_2+\n.
\end{aligned}
$$ 
This gives the sequence $(\tilde{\s}_1,\tilde{\s}_2,\tilde{\s}_3,\tilde{\y})$ in \eqref{Eqn:model_s1s3_2}, and the two constructed Haar matrices  are $\widetilde{\bU}=\bU^{(2)}$ and $\widetilde{\bV}=\bV^{(2)}$, which are exactly those given in \eqref{expression:U} and \eqref{expression:V}.

\subsubsection{Statistical Equivalence of \eqref{Eqn:model_s1s3_2} and \eqref{Eqn:model_s1s3}}\label{EquiyA2}

{\color{black}The proof follows the general principle proposed in \cite[Theorem 2]{HD}.  Here we provide a complete proof to make the paper self-contained.}

First, it is easy to check that % It follows immediately from the above derivations that 
$(\s,\tilde{\y})$ given in \eqref{Eqn:model_s1s3_2} can be obtained by substituting $\widetilde{\bU}$ in \eqref{expression:U} and $\widetilde{\bV}$ in \eqref{expression:V} into \eqref{Eqn:model_s1s3}. To show the statistical equivalence, we still need to prove that $\widetilde{\bU}$ and $\widetilde{\bV}$ in \eqref{expression:U} and \eqref{expression:V} have the following desired properties:  
 \begin{subequations}\label{eqn:propertyuv}
 \begin{align}
  &\hspace{-7.3cm}\bullet\,\widetilde{\bU}\sim\text{Haar}(K), \widetilde{\bV}\sim \text{Haar}(N);\\
 &\hspace{-7.3cm}\bullet\,\widetilde{\bU}, \widetilde{\bV}, \s, \bD, \n \text{are mutually independent}. 
 \end{align}
  \end{subequations}
Next, we prove the above properties for $\widetilde{\bU}$, and those for $\widetilde{\bV}$ can be proved by similar arguments.
We first analyze the {\color{black} inner} term $\bR_2(\sg_2)\left(\begin{smallmatrix}\mathbf{I}_2&\mathbf{0}\\\mathbf{0}&\bQ_{K-2}\end{smallmatrix}\right)\bR_2(\v_2)^\mathsf{H}$ {\color{black} in $\widetilde{\bU}$}:
$$
\begin{aligned}
\bR_2(\sg_2)\left(\begin{matrix}\mathbf{I}_2&\mathbf{0}\\\mathbf{0}&\bQ_{K-2}\end{matrix}\right)\bR_2(\v_2)^\mathsf{H}
=&\left(\begin{matrix}1&\mathbf{0}\\\mathbf{0}&\bR_1(\sg_2[2:K])\left(\begin{smallmatrix}1&\mathbf{0}\\\mathbf{0}&\bQ_{K-2}\end{smallmatrix}\right)\bR_1(\v_2[2:K])^\mathsf{H}\end{matrix}\right)\\:=&\left(\begin{matrix}1&\mathbf{0}\\\mathbf{0}&\bQ_{K-1}\end{matrix}\right).
\end{aligned}
$$
 %It is clear from the definition of $\v_2$ that it is independent of $\sg_2$, and thus 
 By the definition of $\v_2$ and from the {\red method} of generating $\bQ_{K-2}$ and $\sg_2$, it is clear that $\sg_2, \bQ_{K-2}$, and $\v_2$ are mutually independent.  It follows immediately from Lemma \ref{Haar} that $\bQ_{K-1}\sim\text{Haar}(K-1)$ and is independent of $\v_2$, and hence independent of all other random  variables except $\bQ_{K-2}$ and $\sg_2$ that construct $\bQ_{K-1}$. We then investigate $$\tilde{\bU}=\bR_1(\s)\left(\begin{matrix}1&\mathbf{0}\\\mathbf{0}&\bQ_{K-1}\end{matrix}\right)\bR_1(\sg_1)^\mathsf{H}.$$
 Again from Lemma \ref{Haar}, we know that $\widetilde{\bU}\sim\text{Haar}(K)$ and is independent of all other random variables (except $\bQ_{K-2}$, $\sg_2$, and $\sg_1$), {\color{black}i.e., $\widetilde{\bU}$ has the desired properties in \eqref{eqn:propertyuv}}.   This completes our proof of the statistical equivalence between $(\s,\y)$ in \eqref{Eqn:model_s1s3} and $(\s,\tilde{\y})$ in \eqref{Eqn:model_s1s3_2}.
 
\subsection{Proof of Lemma \ref{the1:lemma2}}\label{proof:lemthe12}
In this subsection, we give the proof of Lemma \ref{the1:lemma2}, i.e., we derive \eqref{Equiy} from  \eqref{Eqn:model_s1s3_2}.
For clarity, we copy the expressions of $\tilde{\s}_1,\tilde{\s}_2,\tilde{\s}_3,\tilde{\y}$ in \eqref{Eqn:model_s1s3_2} here.
\begin{equation*}
\left\{
\begin{aligned}
\tilde{\s}_1&=f(\bD)^\mathsf{T}\bR_1(\mathsf{g}_1)\bR_1(\s)^\mathsf{H}\s;\\
\tilde{\s}_2&=q(\bR_1(\mathsf{z}_1)\bR_1(\tilde{\s}_1)^\mathsf{H}\tilde{\s}_1);\\
\tilde{\s}_3&=\bD\bR_1(\tilde{\s}_1)\bR_2(\mathsf{z}_2)\bR_1(\v_1)^\mathsf{H}\v_1;\\
\tilde{\y}&=\eta\bR_1(\s)\bR_2(\mathsf{g}_2)\bR_2(\v_2)^\mathsf{H}\v_2+\n,
\end{aligned}\right.
\end{equation*}
where $\v_1=\bR_1(\mathsf{z}_1)^\mathsf{H}\tilde{\s}_2$ and $\v_2=\bR_1(\mathsf{g}_1)^\mathsf{H}\tilde{\s}_3.$
In the subsequent analysis, we will frequently encounter Householder transform-vector multiplications of the following form:
$$\bR_1(\sg)\bR_1(\v)^\mathsf{H}\v,$$ where $\sg\sim\mathcal{C}\mathcal{N}(\mathbf{0},\mathbf{I})$ and  $\v$ is   a random vector independent of $\sg$. According to the properties of the Householder transform, i.e., (ii) and (iii) of Lemma \ref{reflectorlemma}, we have 
 \begin{equation}\label{eq8}
 \bR_1(\sg)\bR_1(\v)^\mathsf{H}\v=\|\v\|\,\bR_1(\sg)\mathbf{e}_1=\frac{\|\v\|}{\|\sg\|}\,  \sg.
 \end{equation}
It follows immediately that 
\begin{equation}\label{hats1}
\begin{aligned}\tilde{\s}_1&=f(\bD)^{\mathsf{T}}\bR_1(\sg_1)\bR_1(\s)^\mathsf{H}\s=\frac{\|\s\|}{\|\sg_1\|}\, f(\bD)^\mathsf{T}\sg_1=\hat{\s}_1,
\end{aligned}
\end{equation}
where the last equality is due to the definition of $\hat {\s}_1$ in \eqref{Equiy_2}. 

Next we begin our derivation of \eqref{Equiy}. {\red First}, we compute $\bR_2(\sg_2)\bR_2(\v_2)^\mathsf{H}\v_2$ in  $\tilde{\y}$ using \eqref{eq8}:
\begin{equation*}\label{iny}
\begin{aligned}
\bR_2(\sg_2)\bR_2(\v_2)^\mathsf{H}\v_2=&\left(\begin{matrix}1&\mathbf{0}\\\mathbf{0}&\bR_1(\sg_2[2:K])\bR_1(\v_2[2:K])^\mathsf{H}\end{matrix}\right)\left(\begin{matrix}\v_2[1]\\\v_2[2:K]\\ \end{matrix}\right)\\
=&\left(\begin{matrix} \v_2[1]\\ \bR_1(\sg_2[2:K])\bR_1(\v_2[2:K])^\mathsf{H}\v_2[2:K]\end{matrix}\right)\\
=&\left(\begin{matrix}\v_2[1]\\
\frac{\|\v_2[2:K]\|}{\|\sg_2[2:K]\|}\, \sg_2[2:K]
\end{matrix}\right).
\end{aligned}
\end{equation*}
\hspace{-0.1cm}{\color{black}Combining the above equation with the definition $\bR_1(\bs)=\left(\frac{\s}{\|\s\|}~~\bB(\s)\right)$, we can express $\tilde{\y}$ as  }
\begin{equation}\label{tildey2}
\begin{aligned} 
\tilde{\y}&=\eta\bR_1(\s)\bR_2(\sg_2)\bR_2(\v_2)^\mathsf{H}\v_2+\n\\
&=\eta\left(\begin{matrix}\frac{\s}{\|\s\|}&\bB(\s)\end{matrix}\right)\left(\begin{matrix}\v_2[1]\\\frac{\|\v_2[2:K]\|}{\|\sg_2[2:K]\|}\, \sg_2[2:K]\end{matrix}\right)+\n\\
&=\eta\frac{\v_2[1]}{\|\s\|} \s+\eta\frac{\|\v_2[2:K]\|}{\|\sg_2[2:K]\|} \bB(\s)\sg_2[2:K]+\n.
\end{aligned}
\end{equation}
{\color{black}By  further noting that} $$\mathbf{B}(\bs)\sg_2[2:K]=\mathbf{R}(\bs)\sg_2-\frac{\sg_2[1]}{\|\bs\|}\,\bs$$ {\color{black}and substituting it} into \eqref{tildey2}, we get
\begin{equation}\label{tildey3}
\begin{aligned} 
\tilde{\y}=&\eta\left(\frac{\v_2[1]}{\|\s\|}-\frac{\sg_2[1]}{\|\bs\|}\,\frac{\|\v_2[2:K]\|}{\|\sg_2[2:K]\|} \right)\s+\eta\frac{\|\v_2[2:K]\|}{\|\sg_2[2:K]\|} 
\bR(\bs)\sg_2+\n.\\
\end{aligned}
\end{equation}
Note that if $\sg\sim\mathcal{C}\mathcal{N}(\mathbf{0},\mathbf{I})$, then for any unitary matrix $\mathbf{U}$ independent of $\sg$, $\mathbf{U}\sg\sim\mathcal{C}\mathcal{N}(\mathbf{0},\mathbf{I})$, i.e., $\mathbf{U}\sg\overset{d}{=}\sg$, and $\mathbf{U}\sg$ is independent of $\mathbf{U}$. %In the following proof (and in the proof of Theorem \ref{asy}), we will frequently use this technique. 
Therefore, since $\sg_2$ is independent of all other random variables in \eqref{tildey3}, i.e., $\s,\v_2,$ and $\n$, 
  $\bR(\s)^{-1}\sg_2\overset{d}{=}\sg_2$ and $\bR(\s)^{-1}\sg_2$ is also independent of  $\bs$, $\v_2$ and $\n$, and hence we can replace $\sg_2$ with $\bR(\s)^{-1}\sg_2$ in \eqref{tildey3}, which yields 
 \begin{equation}\label{tildey}
\begin{aligned} 
\tilde{\y}\overset{d}{=}\,&\eta\left(\frac{\v_2[1]}{\|\s\|}-\frac{(\bR(\bs)^{-1}\sg_2)[1]}{\|\bs\|}\, \frac{\|\v_2[2:K]\|}{\|(\bR(\bs)^{-1}\sg_2)[2:K]\|}\right) \s+\eta\frac{\|\v_2[2:K]\|}{\|(\bR(\bs)^{-1}\sg_2)[2:K]\|} \sg_2+\n.
\end{aligned}
\end{equation}
  We next analyze 
  \begin{equation}\label{v2_2}
  \begin{aligned}\v_2&=\bR_1(\sg_1)^\mathsf{H}\tilde{\s}_3=\bR_1(\sg_1)^\mathsf{H}\bD\bR_1(\tilde{\s}_1)\bR_2(\sz_2)\bR_2(\v_1)^\mathsf{H}\v_1
  \end{aligned}
  \end{equation} step by step.
{\red First},  from the definitions of $\v_1$ and $\tilde \s_2,$ we have
\begin{equation}\label{v1}
\begin{aligned}\v_1&=\bR_1(\sz_1)^\mathsf{H}q(\bR_1(\sz_1)\bR_1(\tilde{\s}_1)^\mathsf{H}\tilde{\s}_1)\\
&\overset{}{=}\bR_1(\sz_1)^\mathsf{H} q\left(\frac{\|\tilde{\s}_1\|}{\|\sz_1\|}\, \sz_1\right)~~(\text{from \eqref{eq8})}\\
&\overset{}{=}\bR_1(\sz_1)^\mathsf{H} q\left(\frac{\|\hat{\s}_1\|}{\|\sz_1\|} \,\sz_1\right)~~(\text{from \eqref{hats1}})\\
&\overset{}{=}\left(\begin{matrix}\frac{\sz_1^\mathsf{H}}{\|\sz_1\|}\\\bB(\sz_1)^\mathsf{H}\end{matrix}\right)q\left(\frac{\|\hat{\s}_1\|}{\|\sz_1\|}\, \sz_1\right)~~(\text{from \eqref{B})}\\
&=\left(\begin{matrix}\frac{\sz_1^\mathsf{H}q\left(\frac{\|\hat{\s}_1\|}{\|\sz_1\|} \sz_1\right)}{\|\sz_1\|}\\\bB(\sz_1)^\mathsf{H}q\left(\frac{\|\hat{\s}_1\|}{\|\sz_1\|} \sz_1\right)\end{matrix}\right).
\end{aligned}
\end{equation}
Then we have
$$\begin{aligned}
\bR_2(\sz_2)\bR_2(\v_1)^\mathsf{H}\v_1
=&\left(\begin{matrix}1&\mathbf{0}\\\mathbf{0}&\bR_1(\sz_2[2:N])\bR_1(\v_1[2:N])^\mathsf{H}\end{matrix}\right)\left(\begin{matrix}\v_1[1]\\\v_1[2:N]\end{matrix}\right)\\
=&\left(\begin{matrix}\v_1[1]\\\bR_1(\sz_2[2:N])\bR_1(\v_1[2:N])^\mathsf{H}\v_1[2:N]\end{matrix}\right)\\
=&\left(\begin{matrix}
\frac{\sz_1^\mathsf{H}q\left(\frac{\|\hat{\s}_1\|}{\|\sz_1\|} \sz_1\right)}{\|\sz_1\|}\\\frac{\left\|\bB(\sz_1)^\mathsf{H}q\left(\frac{\|\hat{\s}_1\|}{\|\sz_1\|} \sz_1\right)\right\|}{\|\sz_2[2:N]\|}\,\sz_2[2:N]\end{matrix}\right),
\end{aligned}
$$ where the last equality comes from \eqref{eq8} and \eqref{v1}.
It follows from the above equality, \eqref{B}, and \eqref{hats1} that 
\begin{equation*}\label{eqv2_1}
\begin{aligned}
&\bR_1(\tilde{\bs}_1)\bR_2(\sz_2)\bR_2(\v_1)^\mathsf{H}\v_1\\
=\,&\bR_1(\hat{\bs}_1)\bR_2(\sz_2)\bR_2(\v_1)^\mathsf{H}\v_1\\
=\,&\left(\begin{matrix}\frac{\hat{\bs}_1}{\|\hat{\bs}_1\|}&\bB(\hat{\bs}_1)\end{matrix}\right)\left(\begin{matrix}
\frac{\sz_1^\mathsf{H}q\left(\frac{\|\hat{\s}_1\|}{\|\sz_1\|}\sz_1\right)}{\|\sz_1\|}\\\frac{\left\|\bB(\sz_1)^\mathsf{H}q\left(\frac{\|\hat{\s}_1\|}{\|\sz_1\|} \sz_1\right)\right\|}{\|\sz_2[2:N]\|}\,\sz_2[2:N]\end{matrix}\right)\\
=\,&\frac{\sz_1^\mathsf{H}q\left(\frac{\|\hat{\s}_1\|}{\|\sz_1\|}\, \sz_1\right)}{\|\sz_1\|\|\hat{\bs}_1\|}\, \hat{\bs}_1
+\frac{\left\|\bB(\sz_1)^\mathsf{H}q\left(\frac{\|\hat{\s}_1\|}{\|\sz_1\|}\, \sz_1\right)\right\|}{\|\sz_2[2:N]\|}\, \bB(\hat{\s}_1)\sz_2[2:N]\\
:=\,&{C_1}\hat{\s}_1+{C_2} \bB(\hat{\s}_1)\sz_2[2:N].
\end{aligned}
\end{equation*}
Finally, from the above equality, \eqref{B}, and \eqref{v2_2}, we have 
%we need to multiple $\bR_1(\sg_1)$ from the left of \eqref{eqv2_1}. Since $\sz_1$ $,\sz_2$, and $\sg_1$ are independent,
\begin{equation}\label{v2}
\begin{aligned}
\v_2&=\left(\begin{matrix}
\frac{\sg_1^\mathsf{H}}{\|\sg_1\|}\\\bB(\sg_1)^\mathsf{H}\end{matrix}\right)( {C_1}\bD \hat{\s}_1+{C_2} \bD \bB(\hat{\s}_1)\sz_2[2:N])\\
&=\left(\begin{matrix}\frac{\sg_1^\mathsf{H}\{{C_1}\bD \hat{\s}_1+{C_2} \bD \bB(\hat{\s}_1)\sz_2[2:N])\}}{\|\sg_1\|}\\ \bB(\sg_1)^\mathsf{H}\{ {C_1}\bD\hat{\s}_1+{C_2} \bD \bB(\hat{\s}_1)\sz_2[2:N]\}\end{matrix}\right).
\end{aligned}
\end{equation}
Plugging \eqref{v2} into \eqref{tildey}, we can get the desired model \eqref{Equiy}.

\section{Proof of Theorem \ref{asy}}\label{asymptotic}

In this section, we provide a detailed proof of Theorem \ref{asy}. {\color{black}In Section \ref{Sec:pre}, we first give some definitions and preliminary results of the asymptotic analysis. Section \ref{Sec:aux} gives two useful auxiliary results that are important  for the proof, and 
 Section \ref{Sec:main_proof} contains the main proof of Theorem \ref{asy}.}

\subsection{Preliminaries}\label{Sec:pre}

%To prove Lemmas  \ref{calculatelemma} and \ref{bounded2}, we first give the following important results and definition.

\begin{lemma}\label{as}
Let $\{X_N\}$ and  $\{Y_N\}$ be sequences of random variables. If 
$$X_N\xrightarrow{a.s.}X,\quad Y_N\xrightarrow{a.s.} Y,$$
%where $a, b$ are constants, 
then  
$$X_N+Y_N\xrightarrow{a.s.}X+Y,\quad X_N Y_N\xrightarrow{a.s.}XY, \quad$$ and  $$ \frac{X_N}{Y_N}\xrightarrow{a.s.}\frac{X}{Y}~(\text{if } Y_N, Y\neq 0).$$
\end{lemma}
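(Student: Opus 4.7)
The plan is to reduce all three claims to the pointwise arithmetic of convergent deterministic sequences. Define the events
\[
A=\{\omega:X_N(\omega)\to X(\omega)\},\qquad B=\{\omega:Y_N(\omega)\to Y(\omega)\}.
\]
By hypothesis $\mathbb{P}(A)=\mathbb{P}(B)=1$, so $\mathbb{P}(A\cap B)=1$. I would fix an arbitrary $\omega\in A\cap B$ and work with the two convergent scalar sequences $X_N(\omega)\to X(\omega)$ and $Y_N(\omega)\to Y(\omega)$. From here the standard limit theorems for real (or complex) sequences apply verbatim: the sum $X_N(\omega)+Y_N(\omega)$ converges to $X(\omega)+Y(\omega)$, and since every convergent sequence is bounded, the product $X_N(\omega)Y_N(\omega)$ converges to $X(\omega)Y(\omega)$ via the identity
\[
X_NY_N-XY=(X_N-X)Y_N+X(Y_N-Y).
\]
Since these pointwise statements hold on $A\cap B$, which has probability one, the first two assertions follow.

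For the quotient, let $C=\{\omega:Y(\omega)\neq 0\}\cap\{\omega:Y_N(\omega)\neq 0\text{ for all }N\}$; this set has probability one by assumption. On $A\cap B\cap C$, the scalar sequence $Y_N(\omega)$ converges to a nonzero limit, hence is eventually bounded away from $0$, and therefore $1/Y_N(\omega)\to 1/Y(\omega)$. Combining this with the product case already established gives $X_N(\omega)/Y_N(\omega)\to X(\omega)/Y(\omega)$ on a set of probability one, which yields the third claim.

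There is no real obstacle here; the only subtle point is handling the quotient, where one must explicitly intersect with the probability-one event that $Y_N$ and $Y$ are nonzero before invoking the scalar reciprocal rule. I would present the proof compactly by citing the continuous mapping theorem applied to the continuous maps $(x,y)\mapsto x+y$, $(x,y)\mapsto xy$ on $\mathbb{C}^2$, and $(x,y)\mapsto x/y$ on $\mathbb{C}\times(\mathbb{C}\setminus\{0\})$, so that the entire lemma becomes a one-line consequence of the fact that almost sure convergence is preserved under continuous transformations.
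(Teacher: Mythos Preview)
Your proof is correct. The paper does not actually prove this lemma; it simply states it as a standard preliminary fact in the appendix, so there is no approach to compare against. Your argument via the probability-one intersection $A\cap B$ (and $A\cap B\cap C$ for the quotient) together with the pointwise limit rules, or equivalently the continuous mapping theorem, is exactly the standard justification one would supply if asked.
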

\begin{lemma}[Kolmogorov’s strong law of large numbers \cite{probability}]\label{largenumber}
Assume that $X_1, X_2, \dots$ are independent with means $\mu_1, \mu_2, \dots$ and variances $\sigma_1^2, \sigma_2^2, \dots$ such that $\sum_N\frac{\sigma_N^2}{N^2} < \infty.$ Then
\begin{equation*}
\begin{aligned} \frac{X_1 + X_2 + \dots + X_N - (\mu_1 + \mu_2 + \dots + \mu_N)}{N} \xrightarrow{a.s.} 0. \end{aligned}
\end{equation*}
As a corollary, if $X_1,X_2,\dots$ are i.i.d. with mean $\mu$, then 
$$\frac{X_1+X_2+\cdots+X_N}{N}-\mu\xrightarrow{a.s.} 0.$$
\end{lemma}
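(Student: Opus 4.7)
The plan is to reduce the problem to the centered case, establish almost-sure convergence of a weighted series $\sum_n Y_n/n$ via Kolmogorov's maximal inequality (or equivalently an $L^2$-martingale argument), and then apply Kronecker's lemma to transfer the convergence of the series to the Cesàro-type convergence of the original partial sums. First I would set $Y_n = X_n - \mu_n$ so that the $Y_n$ are independent, mean-zero, with $\mathrm{Var}(Y_n)=\sigma_n^2$, and the desired conclusion becomes $\frac{1}{N}\sum_{n=1}^N Y_n \xrightarrow{a.s.} 0$.

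Next I would prove that the partial sums $M_N := \sum_{n=1}^N Y_n/n$ converge almost surely. The key inequality is Kolmogorov's maximal inequality: for any independent mean-zero random variables $Z_1,\dots,Z_m$ with finite variances,
\[
\mathbb{P}\!\left(\max_{1\le k\le m}\bigl|Z_1+\cdots+Z_k\bigr|\ge \varepsilon\right)\le \frac{1}{\varepsilon^2}\sum_{k=1}^m \mathrm{Var}(Z_k).
\]
Applying this to $Z_k = Y_{n+k}/(n+k)$ for $k=1,\dots,m$, taking $m\to\infty$ and then $n\to\infty$, the tail sum $\sum_{k>n}\sigma_k^2/k^2$ goes to zero by hypothesis, which shows that $\{M_N\}$ is almost surely Cauchy and hence converges almost surely. (An alternative, conceptually cleaner route is to note that $\{M_N\}$ is an $L^2$-bounded martingale since $\mathbb{E}[M_N^2]=\sum_{n\le N}\sigma_n^2/n^2$ is bounded, and to invoke the $L^2$ martingale convergence theorem.)

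Third, I would invoke Kronecker's lemma: if $\sum_n a_n$ is a convergent series and $b_n\uparrow\infty$, then $\frac{1}{b_N}\sum_{n=1}^N b_n a_n \to 0$. Setting $a_n = Y_n/n$ and $b_n=n$, on the probability-one event that $\sum_n Y_n/n$ converges, we obtain $\frac{1}{N}\sum_{n=1}^N Y_n \to 0$, which is exactly the claim. The i.i.d. corollary (in the finite-variance case, which is the setting of this lemma) is then immediate: if $X_n$ are i.i.d.\ with common mean $\mu$ and variance $\sigma^2<\infty$, then $\sum_n \sigma^2/n^2 <\infty$, so the main statement applies and yields $\frac{1}{N}\sum_{n=1}^N X_n - \mu \xrightarrow{a.s.} 0$.

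The main obstacle is the almost-sure convergence of the series $\sum_n Y_n/n$; everything else (centering and Kronecker's lemma) is essentially bookkeeping. The obstacle is handled by exploiting independence through Kolmogorov's maximal inequality, which converts a second-moment summability condition into a uniform tail bound on partial sums, from which the Cauchy criterion and hence almost-sure convergence follow. Since this step is classical and self-contained, I would cite it as a standard result rather than re-derive it in full detail.
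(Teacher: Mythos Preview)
Your proof is correct and follows the classical textbook argument (centering, Kolmogorov's maximal inequality to get almost-sure convergence of $\sum_n Y_n/n$, then Kronecker's lemma). The paper, however, does not prove this lemma at all: it is stated as a preliminary with a citation to a probability textbook, so there is no paper proof to compare against. Your write-up would make a fine standalone justification, but for the purposes of this paper a citation suffices.
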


{\color{black}
\begin{lemma}[{\cite[Theorem 3]{Ferguson}}]\label{HB}
Let $\{X_N\}$ be a sequence of random variables. If $\{X_N\}$ converges in distribution to $X$, then 
$$\mathbb{E}[g(X_N)]\to\mathbb{E}[g(X)]$$ for all bounded measurable functions $g$ such that $\mathbb{P}\{X\in C(g)\}=1$, where $C(g)=\{x\mid g \text{ is continuous at } x\}$ denotes the continuity set of $g$.
%Let $F$ and $F_1, F_2, \dots$ be cumulative distribution functions on the real line.  If $F_N$ converges weakly to $F$, then
%$${\displaystyle \int _{\mathbb {R} }g(x)\,dF_{N}(x)\quad {\longrightarrow}\quad \int _{\mathbb {R} }g(x)\,dF(x)}$$
%for each function $g: \R \rightarrow \R$ that is bounded and continuous almost everywhere (a.e.), where the integrals involved are Riemann–Stieltjes integrals. 
\end{lemma}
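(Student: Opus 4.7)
The statement is essentially the Portmanteau characterization of convergence in distribution for bounded measurable integrands whose discontinuity set is null under the limit law. My strategy would be to transfer the distributional convergence to almost sure convergence via Skorokhod's representation theorem, and then invoke the bounded convergence theorem on the coupled sequence.

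\textbf{Execution.} First, I would invoke Skorokhod's representation to obtain, on a common probability space, random variables $\tilde X_N \overset{d}{=} X_N$ and $\tilde X \overset{d}{=} X$ with $\tilde X_N \to \tilde X$ almost surely. Next, I would let $\Omega_1$ denote the intersection of $\{\tilde X_N \to \tilde X\}$ with $\{\tilde X \in C(g)\}$; the hypothesis $\mathbb{P}(X \in C(g)) = 1$ combined with the law-preserving nature of the Skorokhod construction gives $\mathbb{P}(\Omega_1) = 1$. On $\Omega_1$, continuity of $g$ at the particular limit $\tilde X(\omega)$ together with $\tilde X_N(\omega) \to \tilde X(\omega)$ forces $g(\tilde X_N(\omega)) \to g(\tilde X(\omega))$. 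Boundedness of $g$ supplies an integrable dominating constant, so the bounded convergence theorem yields $\mathbb{E}[g(\tilde X_N)] \to \mathbb{E}[g(\tilde X)]$, and equality in distribution finishes the argument since $\mathbb{E}[g(X_N)] = \mathbb{E}[g(\tilde X_N)]$ and $\mathbb{E}[g(X)] = \mathbb{E}[g(\tilde X)]$.

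\textbf{Main obstacle.} The delicate point is that measurability of $g$ alone is not enough: pointwise convergence $\tilde X_N(\omega) \to \tilde X(\omega)$ does not automatically transfer through $g$, and in fact $g$ may be discontinuous on an uncountable set. What one really needs is continuity of $g$ \emph{at the particular limit point} $\tilde X(\omega)$, which is precisely what $\mathbb{P}(X \in C(g)) = 1$ delivers almost surely. A secondary technicality is that Skorokhod's theorem requires the codomain to be Polish, but this is automatic for the real- and complex-valued random variables appearing in the paper. If one wished to avoid Skorokhod entirely, the alternative plan would be to sandwich $g$ between bounded continuous envelopes $g^- \leq g \leq g^+$ with $\mathbb{E}[g^+(X) - g^-(X)]$ arbitrarily small and then invoke the classical Portmanteau theorem on $g^\pm$; constructing such envelopes when $g$ is merely continuous a.e.\ with respect to the law of $X$ is, however, a nontrivial approximation-theoretic step, so the Skorokhod route remains the cleaner choice.
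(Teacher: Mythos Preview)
Your proof via Skorokhod's representation theorem is correct and is the standard clean argument for this result. However, the paper does not actually prove this lemma: it is quoted as a preliminary result and attributed directly to \cite[Theorem 3]{Ferguson} without any argument supplied, so there is no paper-side proof to compare against.

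It is worth noting, though, that the envelope-sandwich alternative you sketch at the end is not merely hypothetical in this context: the paper explicitly employs that very construction (the Lipschitz lower and upper envelopes $l_\epsilon(x)=\inf_y\{g(y)+\epsilon^{-1}|y-x|\}$ and $u_\epsilon(x)=\sup_y\{g(y)-\epsilon^{-1}|y-x|\}$, citing \cite[page 15]{Ferguson}) in the proof of Lemma~\ref{Lemma10}, where a related a.e.-continuity argument is needed but Skorokhod is unavailable because the convergence there is not purely distributional. So your ``secondary'' route is in fact closer in spirit to the techniques the paper actually deploys, and is presumably the argument Ferguson gives in the cited reference.
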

}

\begin{definition}[Empirical Spectral Distribution\cite{RMTbook}]\label{def:esd}
Consider an $N\times N$ Hermitian matrix $\bT_N$. Define its empirical spectral distribution (e.s.d.) $F^{\bT_N}$ to be the distribution function of the eigenvalues of $\bT_N$, i.e., for $x\in\R$, 
$$F^{\bT_N}(x)=\frac{1}{N}\sum_{j=1}^N1_{\{\lambda_j\leq x\}}(x),$$
where $\lambda_1,\dots,\lambda_N$ are the eigenvalues of $\bT_N$.
\end{definition}
\begin{comment}
\begin{lemma}[Trace Lemma]\cite{Haarbook}
Let $\bA_1,\bA_2,\cdots,$ with $\bA_N\in\C^{N\times N}$, be a series of matrices with uniformly bounded spectral norm. Let $\x_1,\x_2,\cdots,$ with $\x_N\in\C^N$, be random vectors with i.i.d. entries of zero mean, variance $\frac{1}{N}$, and eighth order moment of order $\mathcal{O}(1/N^4)$, independent of $\bA_N$. Then
$$\x_N^\mathsf{H}\bA_N\x_N-\frac{1}{N}\text{\normalfont{tr}}(\bA_N)\xrightarrow{a.s.} 0$$ 
as $N\rightarrow \infty$.\end{lemma}
\begin{lemma}
Let $\bA_1,\bA_2,\cdots,$ with $\bA_N\in\C^{N\times N}$, be a series of matrices with uniformly bounded spectral norm. Let $\x_1,\x_2,\cdots,$ and $\y_1,\y_2,\cdots$, two series of i.i.d. variables such that $\x_N\in\C^N$ and $\y_N\in\C^N$ have zero mean, variance $\frac{1}{N}$, and fourth order moment of order $\mathcal{O}(1/N^2)$, independent of $\bA_N$. Then
$$\x_N^\mathsf{H}\bA_N\y_N\xrightarrow{a.s.} 0$$ 
as $N\rightarrow \infty$.\end{lemma}
\end{comment}
\begin{lemma}[%Mar$\check{\text{c}}$enko-Pastur law 
{\cite[Section 3.2 and Section 7.1]{RMTbook}}]\label{MP} Consider a matrix $\bX\in\C^{K\times N}$ with i.i.d. entries following $\mathcal{C}\mathcal{N}\left(0,\frac{1}{N}\right)$. %such that $X_{11}\sim\mathcal{C}\mathcal{N}(0,1)$. has zero mean and unit variance.
As $K,N\rightarrow \infty$ with $\frac{K}{N}\rightarrow c\in (0,1)$, the following results hold:
\begin{itemize}
\item[(i)] the  e.s.d. of $\bX\bX^\mathsf{H}$ converges weakly and almost surely to a distribution function $F$ with density given by:
$$p(x)=\frac{1}{2\pi cx}\sqrt{(x-a)_+(b-x)_+},$$
where $a=(1-\sqrt{c})^2, b=(1+\sqrt{c})^2$, and $(x)_+=\max\{x,0\}$;
\item[(ii)] the largest eigenvalue of $\bX\bX^\mathsf{H}$, denoted by $\lambda_{\max}$, satisfies 
$$\lambda_{\max}\xrightarrow{a.s.}(1+\sqrt{c})^2.$$
\end{itemize}
\end{lemma}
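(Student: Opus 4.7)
The plan is to prove the two claims separately via the classical Stieltjes-transform method for (i) and a high-moment estimate for (ii). For part (i), I would work with the Stieltjes transform of the e.s.d.,
\[
m_N(z) \;=\; \int \frac{1}{x-z}\, dF^{\bX\bX^\mathsf{H}}(x) \;=\; \frac{1}{K}\operatorname{tr}\bigl(\bX\bX^\mathsf{H} - z \mathbf{I}_K\bigr)^{-1},\qquad z\in\mathbb{C}^+,
\]
and show that $m_N(z) \xrightarrow{a.s.} m(z)$ pointwise on $\mathbb{C}^+$, where $m(z)$ is the Stieltjes transform of the Marchenko--Pastur law. Since Stieltjes transforms determine measures and pointwise convergence on $\mathbb{C}^+$ implies weak convergence of the associated distributions, this will yield the desired convergence of $F^{\bX\bX^\mathsf{H}}$.

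The derivation of the limit $m(z)$ follows the standard resolvent/Schur-complement approach. Writing $\bX\bX^\mathsf{H} = \sum_{i=1}^N \mathbf{x}_i \mathbf{x}_i^\mathsf{H}$, where $\mathbf{x}_i\in\mathbb{C}^K$ is the $i$-th column of $\bX$ with i.i.d.\ $\mathcal{CN}(0,1/N)$ entries, I would apply the Sherman--Morrison identity to express $\operatorname{tr}(\bX\bX^\mathsf{H}-z\mathbf{I})^{-1}$ in terms of the leave-one-column-out resolvents, and then use the trace/quadratic-form concentration lemmas (of the type $\mathbf{x}_i^\mathsf{H} \bA \mathbf{x}_i - \frac{1}{N}\operatorname{tr}\bA \to 0$ a.s.\ for matrices $\bA$ of bounded spectral norm independent of $\mathbf{x}_i$) to collapse the random quadratic forms into deterministic traces. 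Almost-sure concentration of $m_N(z)$ around its expectation is obtained from a martingale-difference decomposition with respect to the column filtration together with an Azuma/McDiarmid-type bound, so combined with the collapse of the quadratic forms one obtains the self-consistent equation
\[
m(z) \;=\; \frac{1}{-z + \gamma^{-1}\frac{1}{1 + m(z)}}, \qquad \gamma = 1/c,
\]
whose unique solution in $\mathbb{C}^+$ is the Stieltjes transform of the Marchenko--Pastur law. The density $p(x)$ is then recovered via the Stieltjes inversion formula $p(x) = \frac{1}{\pi}\lim_{\varepsilon\downarrow 0}\operatorname{Im} m(x+j\varepsilon)$, yielding the stated form with edges $a=(1-\sqrt c)^2$ and $b=(1+\sqrt c)^2$.

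For part (ii), the plan is to upper bound $\lambda_{\max}$ by a high moment: for any positive integer $k$,
\[
\lambda_{\max}^{\,k} \;\le\; \operatorname{tr}\bigl((\bX\bX^\mathsf{H})^k\bigr) \;=\; K \int x^k \, dF^{\bX\bX^\mathsf{H}}(x).
\]
Using the combinatorial moment method (expanding the trace into sums over closed walks on a bipartite graph, controlling the contribution of non-tree walks by $\mathcal{O}(K^{-1})$, and matching the leading-order count to Narayana numbers, which are the moments of MP), one shows $\mathbb{E}\bigl[\frac{1}{K}\operatorname{tr}(\bX\bX^\mathsf{H})^k\bigr] \to \int x^k\, dF(x) \le b^{\,k}$ with sufficiently sharp error control to push $k = k_N \to \infty$ slowly. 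Combined with Markov's inequality and Borel--Cantelli, this gives $\limsup_N \lambda_{\max} \le b$ a.s.; the matching lower bound $\liminf_N \lambda_{\max} \ge b$ is immediate from the weak convergence in (i) together with the portmanteau theorem applied to any open neighborhood of the right edge.

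The main obstacle will be the largest-eigenvalue claim (ii): merely having the e.s.d.\ converge to MP does not preclude a vanishing fraction of outlier eigenvalues escaping above $b$, so the hard technical ingredient is controlling $\mathbb{E}[\operatorname{tr}(\bX\bX^\mathsf{H})^{k_N}]$ uniformly for growing $k_N$. A secondary subtlety is upgrading the concentration of $m_N(z)$ from convergence in probability to almost-sure convergence, which forces one to use a summable tail bound (e.g., Azuma applied to the martingale differences gives a sub-Gaussian tail that is summable for any fixed $z$, so Borel--Cantelli closes the argument on a countable dense subset of $\mathbb{C}^+$, extended by the normal-family/monotonicity properties of Stieltjes transforms). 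Since the full proof is classical and appears in \cite{RMTbook}, I would cite it for the detailed combinatorics while presenting the Stieltjes-transform outline above as the structural backbone.
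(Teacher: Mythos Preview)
The paper does not prove this lemma; it is stated with a direct citation to \cite[Section 3.2 and Section 7.1]{RMTbook} and used as a black-box input to the asymptotic analysis. There is therefore no ``paper's own proof'' to compare against beyond the citation itself.

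Your outline is a correct sketch of the classical textbook argument that one finds in the cited reference: the Stieltjes-transform/resolvent approach with leave-one-out rank-one perturbations and quadratic-form concentration for part (i), and the high-moment method (with the matching lower bound coming from the weak convergence of the e.s.d.) for part (ii). You have also correctly identified the genuine technical content, namely that weak convergence alone does not rule out outliers above $b$, so the upper bound on $\lambda_{\max}$ requires uniform control of growing moments. Since the paper treats this lemma purely as a citation, your write-up goes well beyond what the paper does; if you are matching the paper's level of detail, a bare citation would suffice.
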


%\begin{lemma}\label{bounded2}
%Consider $\bD$ in Lemma \ref{Lem:SVD_Haar}. Under Assumption \ref{Ass:distribution}, then, with probability one, $\|\bD\|$ is uniformly bounded, so is $\|\sigma(\bD)\|$ for any continuous function $\sigma(\cdot)$, {\color{black}i.e., there exists $M>0$, s.t. }
%$${\color{black}\|\bD\|\leq M, ~\|\sigma(\bD)\|\leq M, ~\forall N, K, \text{ with } {N}/{K}\to\gamma.}$$
%\end{lemma}
%\begin{proof}
%Note that $\|\bD\|^2=\lambda_{\max}\left(\H\H^{\mathsf{H}}\right)$. Lemma \ref{bounded2} follows directly from (ii) of Lemma \ref{MP}.
%\end{proof}

\vspace{5pt}
\subsection{Auxiliary Lemma}\label{Sec:aux}
This subsection introduces two auxiliary lemmas used in the main proof in Section \ref{Sec:main_proof}.

\begin{lemma}\label{calculatelemma}
Define $\widetilde{\bD}=\text{diag}(d_1,d_2,\dots,d_K)$, where $d_1,d_2,\dots,d_K$ are the nonzero  singular values of $\H$ (satisfying Assumption \ref{Ass:distribution}). Assume that $\sigma(\cdot)$ is a function {\color{black}that is continuous a.e. and bounded on any compact set of $(0,\infty)$}; $\sg_1\sim \mathcal{C}\mathcal{N}\left(\mathbf{0},\mathbf{I}_K\right)$, $\sg_2\sim \mathcal{C}\mathcal{N}\left(\mathbf{0},\mathbf{I}_K\right)$, and $\widetilde{\bD}$ are mutually independent.  Then as $N,K\to\infty$ and $\frac{N}{K}\rightarrow\gamma>1$,  it holds that %, where $d_1,d_2,\cdots, d_K$ are the singular values of the complex Gaussian ensemble $\H$. 
\begin{itemize}
\item[(i)]$\dfrac{\sg_1^\mathsf{H}\sigma(\widetilde{\bD})\sg_1}{K}\xrightarrow{a.s.}\mathbb{E}[\sigma({d})];$
\vspace{0.3cm}
\item[(ii)]$\dfrac{\sg_1^\mathsf{H}\sigma(\widetilde{\bD})\sg_2}{K}\xrightarrow{a.s.}0,$
\end{itemize}
where $d=\sqrt{\lambda}$ and $\lambda$ follows the Marchenko-Pastur distribution, whose density is given in \eqref{MPdistribution}.
%where $\lambda$ is a nonnegative random variable with $\lambda^2$ following 
\end{lemma}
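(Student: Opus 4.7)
The plan is to use the independence of the Gaussian vectors $\sg_1,\sg_2$ from $\widetilde{\bD}$: conditioning on the spectral data reduces both quadratic forms to (conditionally) sums of independent Gaussian-weighted terms, which I handle by Kolmogorov's SLLN (Lemma \ref{largenumber}); the residual purely spectral averaging is then treated via the Marchenko--Pastur convergence (Lemma \ref{MP}). A key preliminary observation is that when $\gamma>1$ the MP support $[a,b]$ satisfies $a=(1-1/\sqrt{\gamma})^2>0$, and Lemma \ref{MP}(ii) together with the standard lower-edge Bai--Yin result (for $c=1/\gamma<1$) ensures that almost surely, for all sufficiently large $K$, every singular value $d_i$ of $\H$ lies in a fixed compact interval $[a_0,b_0]\subset(0,\infty)$. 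The hypothesis that $\sigma$ is bounded on compact subsets of $(0,\infty)$ then yields $\sup_{i\le K}|\sigma(d_i)|\le C$ almost surely, for all large $K$.

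For part (i), I decompose
\begin{equation*}
\frac{\sg_1^\mathsf{H}\sigma(\widetilde{\bD})\sg_1}{K}=\underbrace{\frac{1}{K}\sum_{i=1}^K\sigma(d_i)\bigl(|g_{1,i}|^2-1\bigr)}_{T_1}+\underbrace{\frac{1}{K}\sum_{i=1}^K\sigma(d_i)}_{T_2}.
\end{equation*}
Conditional on $\widetilde{\bD}$, the summands of $T_1$ are independent and zero-mean, with variances $|\sigma(d_i)|^2\le C^2$ eventually, so $\sum_K C^2/K^2<\infty$ and Kolmogorov's SLLN (applied pointwise in $\widetilde{\bD}$ and then invoking Fubini) gives $T_1\xrightarrow{a.s.}0$. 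For $T_2=\int\sigma(\sqrt{x})\,dF^{\H\H^\mathsf{H}}(x)$, the almost sure eventual confinement of the spectrum in $[a_0^2,b_0^2]$ lets me replace $\sigma(\sqrt{\cdot})$ by a bounded function $\tilde{\sigma}(\sqrt{\cdot})$ that agrees with it on this set without changing the integral; Lemma \ref{HB} combined with Lemma \ref{MP}(i) then yields $T_2\xrightarrow{a.s.}\mathbb{E}[\sigma(d)]$, since the MP density is supported inside the truncation region.

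Part (ii) proceeds in the same spirit. Writing $\sg_1^\mathsf{H}\sigma(\widetilde{\bD})\sg_2=\sum_i\sigma(d_i)\bar{g}_{1,i}g_{2,i}$ and conditioning on $\widetilde{\bD}$, the summands are independent with mean zero and variance $|\sigma(d_i)|^2\le C^2$ eventually, so the same Kolmogorov SLLN argument yields $K^{-1}\sg_1^\mathsf{H}\sigma(\widetilde{\bD})\sg_2\xrightarrow{a.s.}0$. The main technical obstacle is reconciling the a.e.-continuity (and potential unboundedness near $0$ or $\infty$) of $\sigma$ with the weak-convergence statement of Lemma \ref{MP}; this is overcome precisely because $\gamma>1$ forces the MP support strictly inside $(0,\infty)$, so the empirical spectrum is a.s.\ eventually trapped in a compact set on which $\sigma$ is bounded, reducing matters to a bounded, a.e.-continuous test function to which Lemma \ref{HB} directly applies.
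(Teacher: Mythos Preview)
Your proposal is correct and follows essentially the same route as the paper: split the quadratic form into a Gaussian-fluctuation part handled by Kolmogorov's SLLN (Lemma~\ref{largenumber}) after uniformly bounding $|\sigma(d_i)|$ via the spectral-edge results, and a spectral-average part handled by truncating $\sigma(\sqrt{\cdot})$ to a bounded a.e.-continuous function and invoking Lemma~\ref{HB} with the MP convergence of Lemma~\ref{MP}(i). If anything, you are slightly more careful than the paper in explicitly invoking the lower-edge Bai--Yin bound to confine the spectrum away from zero, which the paper uses implicitly but does not record in Lemma~\ref{MP}.
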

\begin{proof}
{\color{black}First, from the definition of $\sigma(\cdot)$ and (ii) of Lemma \ref{MP}, we know that for sufficiently large $N$ and $K$, there exists a constant $M>0$ such that
$$\sup_{1\leq i\leq K}|\sigma(d_i)|\leq M$$ with probability one.} 
Then according to {\color{black} the strong law of large numbers in} Lemma \ref{largenumber},  we have
\begin{equation*}\label{eq1}
\begin{aligned}
 &\frac{\sg_1^\mathsf{H}\sigma(\widetilde{\bD})\sg_1}{K}-\frac{1}{K}\sum_{i=1}^K \sigma(d_i)
=\frac{1}{K}\sum_{i=1}^K \sigma(d_i)|\sg_{1}[i]|^2-\frac{1}{K}\sum_{i=1}^K \sigma(d_i)
\xrightarrow{a.s.}0,
\end{aligned}
\end{equation*}
and
 \begin{equation*}
 \frac{\sg_1^\mathsf{H}\sigma(\widetilde{\bD})\sg_2}{K}=\frac{1}{K}\sum_{i=1}^K \sigma(d_i)\sg_{1}[i]^{\dagger}\sg_{2}[i]\xrightarrow{a.s.}0,\end{equation*}
 which immediately gives (ii) of Lemma \ref{calculatelemma}. Next we continue to prove (i) of Lemma \ref{calculatelemma}. Note that
 \begin{equation*}\label{eqd}
\frac{1}{K}\sum_{i=1}^K \sigma(d_i)=\frac{1}{K}\sum_{i=1}^K\sigma(\sqrt{\lambda_i}),
\end{equation*}
where $\lambda_1,\lambda_2,\dots,\lambda_K$ are the eigenvalues of $\H\H^\mathsf{H}$. Let {\color{black}$X_K$ be a random variable following the e.s.d. of $\H\H^\mathsf{H}$, then%, i.e., $$F^{\H\H^\mathsf{H}}(x)=\frac{1}{K}\sum_{i=1}^K1_{\{\lambda_i\leq x\}}(x),$$then
 \begin{equation*}\label{eq2}
 \frac{1}{K}\sum_{i=1}^K \sigma(\sqrt{\lambda_i})=\mathbb{E}\left[\sigma(\sqrt{X_K})\right].
 \end{equation*}}
Define
$$
g(x)=\left\{
\begin{aligned} \sigma&(\sqrt{x}),~~~~\text{if }x\in[0,(1+\sqrt{c})^2+1];\\
&0,~~\qquad\qquad\qquad\text{otherwise},
\end{aligned}\right.$$where $c=\frac{1}{\gamma}$. 
{\color{black}Then $g(\cdot)$ is continuous a.e. and bounded}. It follows from (ii) of Lemma  \ref{MP} that with probability one,  the largest eigenvalue of $\H\H^\mathsf{H}$ is bounded by $(1+\sqrt{c})^2+1$ for sufficiently large $K$, which implies that
%$$\int_{\R^+}\sigma(\sqrt{x})d F^{\H\H^\mathsf{H}}(x)-\int_{\R} g(x)d F^{\H\H^\mathsf{H}} (x)\xrightarrow{a.s.} 0.$$
{\color{black}$$\mathbb{E}\left[\sigma(\sqrt{X_K})\right]\rightarrow\mathbb{E}\left[g(X_K)\right].$$
Let $\lambda$ be a random variable following the Marchenko-Pastur distribution. Then we have  $\mathbb{P}\{\lambda\in C(g)\}=1$ since $\lambda$ is a continuous random variable and $g$ is continuous a.e., where the definition of $C(g)$ is given in Lemma \ref{HB}. 
Therefore, according to (i) of Lemma \ref{MP}, we can apply Lemma \ref{HB} to $\{X_K\}$ and $\lambda$ to obtain 
%\begin{equation*}\label{eq3}
%\int_{\R} g(x)d F^{\H\H^\mathsf{H}} (x)\xrightarrow{a.s.} \int_{\R} g(x) dF(x)= \mathbb{E}\left[\sigma(\sqrt{\lambda})\right],
%\end{equation*} 
\begin{equation*}\label{eq3}
\mathbb{E}[g(X_K)]\rightarrow\mathbb{E}[g(\lambda)]= \mathbb{E}\left[\sigma(\sqrt{\lambda})\right].
\end{equation*} } Combining the above discussions, we get the desired result
$$\frac{\sg_1^\mathsf{H}\sigma(\widetilde{\bD})\sg_1}{K}\xlongrightarrow{a.s.} \mathbb{E}\left[\sigma(\sqrt{\lambda})\right]=\mathbb{E}\left[\sigma(d)\right].$$ The proof is completed.
\end{proof}

\vspace{5pt}

%The proofs of Lemmas \ref{calculatelemma} and \ref{bounded2} are given in Appendix \ref{prooflemma}.

\begin{lemma}\label{Lemma10}
Assume that $\sz\sim\mathcal{C}\mathcal{N}(\mathbf{0},\mathbf{I}_N)$, $\alpha\xrightarrow{a.s.} \bar{\alpha}$, and $q(\cdot)$ satisfies Assumption \ref{Ass:q}. Then, 
$$\frac{\sz^\mathsf{H}q(\alpha\sz)}{N}\xrightarrow{a.s.}\mathbb{E}\left[\sZ^\dagger q(\bar{\alpha} \sZ)\right],~~\text{where }\sZ\sim\mathcal{C}\mathcal{N}\left({0},{1}\right).$$
\end{lemma}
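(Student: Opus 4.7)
I would split
\begin{equation*}
\frac{\sz^\mathsf{H}q(\alpha\sz)}{N} \;=\; \frac{\sz^\mathsf{H}q(\bar{\alpha}\sz)}{N} \;+\; \frac{\sz^\mathsf{H}[q(\alpha\sz)-q(\bar{\alpha}\sz)]}{N} \;=:\; A_N + B_N
\end{equation*}
and treat the two pieces separately. The summands in $A_N$ are i.i.d.\ with variance bounded by $\|q\|_\infty^2\,\mathbb{E}|\sZ|^2<\infty$ (since $q$ is bounded by Assumption~\ref{Ass:q}), so Kolmogorov's strong law of large numbers (Lemma~\ref{largenumber}) gives $A_N\xrightarrow{a.s.}\mathbb{E}[\sZ^\dagger q(\bar{\alpha}\sZ)]$ directly.

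To show $B_N\xrightarrow{a.s.}0$, I would use an approximation argument. Fix $\eta>0$ with $\bar{\alpha}-\eta>0$ and a reference level $\beta>\bar{\alpha}+\eta$. Because $q$ is bounded and continuous a.e., Lusin's theorem (equivalently, density of bounded continuous functions in $L^2$ of any absolutely continuous probability measure on $\mathbb{C}$) yields, for any $\epsilon>0$, a bounded continuous $\tilde q:\mathbb{C}\to\mathbb{C}$ with $\|\tilde q\|_\infty\le\|q\|_\infty$ and $\mathbb{E}[|q(\beta\sZ)-\tilde q(\beta\sZ)|^2]<\epsilon$. Substituting the telescoping identity $q(\alpha z)-q(\bar{\alpha}z)=[q(\alpha z)-\tilde q(\alpha z)]+[\tilde q(\alpha z)-\tilde q(\bar{\alpha}z)]+[\tilde q(\bar{\alpha}z)-q(\bar{\alpha}z)]$ into $B_N$ produces three pieces. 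The middle piece vanishes a.s.: truncate to $|\sz[i]|\le R$ and use uniform continuity of $\tilde q$ on compacts together with $\alpha\xrightarrow{a.s.}\bar{\alpha}$, and control the tail $|\sz[i]|>R$ via the SLLN for $|\sz[i]|$, which can be made arbitrarily small by taking $R$ large.

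\textbf{Main obstacle.} The two boundary pieces reduce, via the SLLN and Cauchy--Schwarz, to bounding $\sqrt{\mathbb{E}[|q(\alpha'\sZ)-\tilde q(\alpha'\sZ)|^2]}$ for both $\alpha'=\bar{\alpha}$ and the random $\alpha'=\alpha$. The delicate step is the piece at $\alpha$, because one needs this $L^2$ error to be small \emph{uniformly} for $\alpha$ in a neighborhood of $\bar{\alpha}$. This is handled by a direct comparison of the complex Gaussian densities $p_\alpha(w)=\frac{1}{\pi\alpha^2}\exp(-|w|^2/\alpha^2)$: for $\alpha\le\beta$, the ratio $p_\alpha/p_\beta$ is bounded by $\beta^2/\alpha^2$ uniformly in $w\in\mathbb{C}$, so after a change of variables one obtains $\mathbb{E}[|q(\alpha\sZ)-\tilde q(\alpha\sZ)|^2]\le (\beta^2/(\bar{\alpha}-\eta)^2)\,\epsilon$ uniformly for $\alpha\in[\bar{\alpha}-\eta,\bar{\alpha}+\eta]$, and the same bound with $\alpha$ replaced by $\bar{\alpha}$. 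Combining the three pieces and sending first $R\to\infty$ and then $\epsilon\to0$ gives $B_N\xrightarrow{a.s.}0$, completing the proof.
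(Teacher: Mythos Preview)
Your decomposition $A_N+B_N$, the handling of $A_N$, the middle (continuous) telescoped piece of $B_N$, and the boundary piece at the fixed $\bar\alpha$ are all fine. The gap is in the boundary piece at the \emph{random} $\alpha$,
\[
\frac{1}{N}\sum_{i=1}^N \sz[i]^\dagger\bigl[q(\alpha\,\sz[i])-\tilde q(\alpha\,\sz[i])\bigr].
\]
After Cauchy--Schwarz you must control $\frac{1}{N}\sum_i |q(\alpha\,\sz[i])-\tilde q(\alpha\,\sz[i])|^2$. Your density-ratio argument shows that the \emph{expectation} $\mathbb{E}\bigl[|q(\alpha'\sZ)-\tilde q(\alpha'\sZ)|^2\bigr]$ is at most $(\beta^2/(\bar\alpha-\eta)^2)\epsilon$ uniformly for deterministic $\alpha'\in[\bar\alpha-\eta,\bar\alpha+\eta]$. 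But passing from the empirical average to this expectation is exactly where the argument breaks: the summands $|q(\alpha\,\sz[i])-\tilde q(\alpha\,\sz[i])|^2$ are not i.i.d., since they all share the same random $\alpha$ (which in the application depends on $\|\sz\|$), so the SLLN does not apply. Writing $h=|q-\tilde q|^2$, you are left needing $\limsup_N\frac{1}{N}\sum_i h(\alpha\,\sz[i])$ to be small for $h$ bounded, a.e.-continuous, and $\alpha\xrightarrow{a.s.}\bar\alpha$---which is precisely the content of the lemma applied to $h$, so the reasoning is circular. A uniform SLLN over $\alpha'\in[\bar\alpha-\eta,\bar\alpha+\eta]$ would close the gap, but that is a Glivenko--Cantelli statement that is not automatic for a.e.-continuous classes and would itself require proof.

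The paper avoids this by using a \emph{sandwich} rather than a telescope: it builds Lipschitz envelopes $l_\epsilon(x)=\inf_y\{g(y)+|y-x|/\epsilon\}$ and $u_\epsilon(x)=\sup_y\{g(y)-|y-x|/\epsilon\}$ of $g=\RR(q(\cdot))$ (and the other three real/imaginary combinations), so that $l_\epsilon\le g\le u_\epsilon$ pointwise. The Lipschitz property gives a \emph{pathwise} bound
\[
\left|\frac{1}{N}\sum_i L_\epsilon(\sz[i],\alpha)-\frac{1}{N}\sum_i L_\epsilon(\sz[i],\bar\alpha)\right|\le \frac{|\alpha-\bar\alpha|}{\epsilon}\cdot\frac{1}{N}\sum_i|\sz[i]|^2\xrightarrow{a.s.}0,
\]
valid for every realization; the comparison between $\alpha$ and $\bar\alpha$ is performed only on the Lipschitz surrogate and never on the discontinuous $q$, and $\epsilon\downarrow0$ is taken last via dominated convergence. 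The structural point is that the sandwich keeps $q$ out of any term that has to be moved from $\alpha$ to $\bar\alpha$; your telescope does not.
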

\begin{proof}
Note that 
$$
\begin{aligned}
\frac{\sz^\mathsf{H}q(\alpha \sz)}{N}
=&\frac{1}{N}\sum_{i=1}^Nz_i^\dagger q(\alpha z_i)\\
=&\frac{1}{N}\sum_{i=1}^N\RR(z_i)\RR(q(\alpha z_i))+\frac{1}{N}\sum_{i=1}^N\I(z_i)\I(q(\alpha z_i))\\
~&+j\frac{1}{N}\sum_{i=1}^N\RR(z_i)\I(q(\alpha z_i))-j\frac{1}{N}\sum_{i=1}^N\I(z_i)\RR(q(\alpha z_i)).
\end{aligned}
$$
To show the almost sure convergence, it suffices to show that each of the above four terms converges almost surely to a constant. Next, we will  prove that 
$$\frac{1}{N}\sum_{i=1}^N\RR(z_i)\RR(q(\alpha z_i))\xrightarrow{a.s.}\mathbb{E}\left[\RR(\sZ)\RR(q(\bar{\alpha} \sZ))\right],$$where $\sZ\sim\mathcal{C}\mathcal{N}(0,1),$ and the convergence of the other three terms can be proved using similar arguments.
Let $g(x)=\RR(q(x))$. For any given $\epsilon>0$, we construct the following lower and upper bounds of $g$:
$$l_\epsilon(x)=\inf_y\left\{g(y)+\frac{1}{\epsilon}|y-x|\right\},$$
$$u_\epsilon(x)=\sup_y\left\{g(y)-\frac{1}{\epsilon}|y-x|\right\}.$$
According to \cite[page 15]{Ferguson}, $l_\epsilon$ and $u_\epsilon$ satisfy
\begin{enumerate}
\item[(i)] $l_\epsilon(x)\leq g(x)\leq u_\epsilon(x), \forall\, x\in\C$.
\item[(ii)] $l_\epsilon$ and $u_\epsilon$ are Lipschitz continuous with Lipschitz constant $\frac{1}{\epsilon}$, i.e., $|l_\epsilon(x_1)-l_\epsilon(x_2)|\leq \frac{1}{\epsilon}|x_1-x_2|$ and $|u_\epsilon(x_1)-u_\epsilon(x_2)|\leq \frac{1}{\epsilon}|x_1-x_2|$.
\item[(iii)] $l_\epsilon(x)$ and  $u_\epsilon(x)$ are bounded, i.e., $\exists\, B>0$ not depending on $\epsilon$ such that $|l_\epsilon(x)|\leq B$ and $|u_\epsilon(x)|\leq B$, $\forall\, x\in\mathbb{C}$.
\item[(iv)] If $x$ is a continuous point of $g(\cdot)$, then $\displaystyle\lim_{\epsilon\downarrow 0}l_\epsilon(x)=g(x)=\lim_{\epsilon\downarrow 0} u_\epsilon(x)$.   
\end{enumerate}
Denote $G(x,\alpha)=\RR(x)\RR(q(\alpha x))=\RR(x)g(\alpha x)$. Based on the above upper and lower bounds of $g(x)$, we can upper and lower bound $G(x,\alpha)$ as 
$$G(x,\alpha)\leq \RR(x)_+u_\epsilon(\alpha x)+\RR(x)_-l_\epsilon(\alpha x)\triangleq U_\epsilon(x,\alpha)$$
and 
$$G(x,\alpha)\geq \RR(x)_+l_\epsilon(\alpha x)+\RR(x)_-u_\epsilon(\alpha x)\triangleq L_\epsilon(x,\alpha),$$
where $x_+=\max\{x,0\}$ and $x_-=\min\{x,0\}$.
It follows that for any $\epsilon>0$,
\begin{equation}\label{boundg}
\frac{1}{N}\sum_{i=1}^N L_\epsilon(z_i,\alpha)\leq \frac{1}{N}\sum_{i=1}^N G(z_i,\alpha)\leq \frac{1}{N}\sum_{i=1}^N U_\epsilon(z_i,\alpha).
\end{equation}
To show that $\frac{1}{N}\sum_{i=1}^N G(z_i,\alpha)\xrightarrow{a.s.}\mathbb{E}[G(\sZ,\bar{\alpha})]$, we will first prove that for a given $\epsilon>0$, 
\begin{subequations}\label{condition1}
\begin{align}
\frac{1}{N}\sum_{i=1}^N L_\epsilon(z_i,\alpha)&\xrightarrow{a.s.}\mathbb{E}[L_\epsilon(\sZ,\bar{\alpha})],\\
\frac{1}{N}\sum_{i=1}^N U_\epsilon(z_i,\alpha)&\xrightarrow{a.s.}\mathbb{E}[U_\epsilon(\sZ,\bar{\alpha})],
\end{align}
\end{subequations}
which, together with \eqref{boundg}, implies 
\begin{subequations}\label{condition2}
\begin{equation}\label{condition1_2}
\liminf_{N\to\infty} \frac{1}{N}\sum_{i=1}^N G(z_i,\alpha)\geq \mathbb{E}[L_\epsilon(\sZ,\bar{\alpha})]~~ a.s.
\end{equation} 
\begin{equation}\label{condition1_3}
\limsup_{N\to\infty}\frac{1}{N}\sum_{i=1}^N G(z_i,\alpha)\leq \mathbb{E}[U_\epsilon(\sZ,\bar{\alpha})]~~ a.s.\end{equation}
\end{subequations}
Then we will show that
\begin{equation}\label{epsilon0}
\lim_{\epsilon\downarrow 0}\mathbb{E}[L_\epsilon(\sZ,\bar{\alpha})]=\lim_{\epsilon\downarrow 0}\mathbb{E}[U_\epsilon(\sZ,\bar{\alpha})]=\mathbb{E}[G(\sZ,\bar{\alpha})].
\end{equation}
Letting $\epsilon\downarrow0$ in \eqref{condition2} and using \eqref{epsilon0} give the desired result:
\[
\frac{1}{N}\sum_{i=1}^N G(z_i,\alpha)\xrightarrow{a.s.}\mathbb{E}[G(\sZ,\bar{\alpha})].
\]

It remains to prove \eqref{condition1} and \eqref{epsilon0}. We will only provide the proof of \eqref{condition1} and \eqref{epsilon0} for the lower bound $L_\epsilon$ {\red since} the upper bound can be proved in exactly the same way. {\red First,}
$$
\begin{aligned}
&\left|\frac{1}{N}\sum_{i=1}^N L_\epsilon(z_i,\alpha)-\mathbb{E}[L_\epsilon(\sZ,\bar{\alpha})]\right|\\
\leq\,& \left|\frac{1}{N}\sum_{i=1}^N L_\epsilon(z_i,\alpha)-\frac{1}{N}\sum_{i=1}^N L_\epsilon(z_i,\bar{\alpha})\right|+\left|\frac{1}{N}\sum_{i=1}^N L_\epsilon(z_i,\bar{\alpha})-\mathbb{E}[L_\epsilon(\sZ,\bar{\alpha})]\right|.
\end{aligned}
$$
It follows immediately from the strong law of large numbers (i.e., Lemma \ref{largenumber}) that the second term tends to zero almost surely. For the first term, we have
$$
\begin{aligned}
\left|\frac{1}{N}\sum_{i=1}^N L_\epsilon(z_i,\alpha)-\frac{1}{N}\sum_{i=1}^N L_\epsilon(z_i,\bar{\alpha})\right|
\leq&\frac{1}{N}\sum_{i=1}^N\left|L_\epsilon(z_i,\alpha)-L_\epsilon(z_i,\bar{\alpha})\right|\\
\leq &\frac{1}{N}\sum_{i=1}^N |\RR(z_i)|\,\frac{|z_i(\alpha-\bar{\alpha})|}{\epsilon}\\
\leq &\frac{|\alpha-\bar{\alpha}|}{\epsilon}\,\frac{1}{N}\sum_{i=1}^N|z_i|^2\xrightarrow{a.s.} 0,
\end{aligned}$$
where the second inequality holds due to Lipschitz continuity (i.e., property (ii))) of $l_\epsilon(x)$ and $u_\epsilon(x)$, and the almost convergence is due to the assumption $\alpha \xrightarrow{a.s.} \bar \alpha,$ Lemma \ref{largenumber}, and Lemma \ref{as}.
This proves \eqref{condition1}. Since $l_\epsilon$ and $u_\epsilon$ are bounded (i.e., property (iii))), we have
$$|L_\epsilon(x,\bar{\alpha})|\leq B|\RR(x)|.$$
According to the dominated convergence theorem \cite{probability}, 
$$\lim_{\epsilon\downarrow 0}\mathbb{E}[ L_\epsilon(\sZ,\bar{\alpha})]=\mathbb{E}\left[\lim_{\epsilon\downarrow 0} L_\epsilon(\sZ,\bar{\alpha})\right].$$
Property (iv) of $l_\epsilon$ implies that
$\lim_{\epsilon\downarrow 0} L_\epsilon(x,\bar{\alpha})=G(x,\bar{\alpha}),$~if $\bar{\alpha} x$  is a continuous point of $g(x)$, i.e., $\lim_{\epsilon\downarrow 0} L_\epsilon(x,\bar{\alpha})=G(x,\bar{\alpha})$ almost everywhere. Therefore, 
$$\lim_{\epsilon\downarrow 0}\mathbb{E}[ L_\epsilon(\sZ,\bar{\alpha})]=\mathbb{E}\left[\lim_{\epsilon\downarrow 0} L_\epsilon(\sZ,\bar{\alpha})\right]=\mathbb{E}[G(\sZ,\bar{\alpha})],$$ which completes the proof.

\end{proof}

\subsection{Main proof of Theorem \ref{asy}}\label{Sec:main_proof}
To prove Theorem \ref{asy}, it suffices to prove (see Lemma \ref{as})
\[
T_\bs\xrightarrow{a.s.}\overline{T}_\bs\quad\text{and}\quad T_\sg\xrightarrow{a.s.}\overline{T}_\sg,
\]
where
$$\begin{aligned}T_\s=&\frac{\sg_1^\mathsf{H}\{C_1\bD\hat{\bs}_1+C_2\bD \bB(\hat{\bs}_1)\sz_2[2:N]\}}{\|\sg_1\|\|\s\|}-T_\sg\,\frac{(\bR(\bs)^{-1}\sg_2)[1]}{\|\bs\|},\\
T_\sg=&\frac{\|\bB(\sg_1)^\mathsf{H}\{C_1\bD \hat{\bs}_1+C_2\bD \bB(\hat{\bs}_1)\sz_2[2:N]\}\|}{\|(\bR(\bs)^{-1}\sg_2)[2:K]\|},
\end{aligned}$$
and
\begin{equation*}
\begin{split}
\overline{T}_\bs&=\overline{C}_1\,\mathbb{E}[{d}\, f({d})],\\
\overline{T}_\sg&=\sqrt{\sigma_s^2\,|\overline{C}_1|^2\, \text{\normalfont{var}}[{d}\, f({d})]+\overline{C}_2^2}.
\end{split} 
\end{equation*}
In the above {\red equations}, 
\begin{equation*}\label{Eqn:C1C2_repeat}
\begin{aligned}
C_1&=\frac{\sz_1^\mathsf{H}q\left(\frac{\|\hat{\s}_1\|}{\|\sz_1\|}\,\sz_1\right)}{\|\hat{\bs}_1\|\|\sz_1\|},~C_2=\frac{\left\|\bB(\sz_1)^\mathsf{H}q\left(\frac{\|\hat{\s}_1\|}{\|\sz_1\|}\,\sz_1\right)\right\|}{\|\sz_2[2:N]\|},\\
\overline{C}_1&= \frac{\mathbb{E}[\sZ^\dagger q(\bar{\alpha} \sZ)]}{\bar{\alpha}},\,~~~\overline{C}_2=\sqrt{\mathbb{E}[|q(\bar{\alpha} \sZ)|^2]-|\mathbb{E}[\sZ^\dagger q(\bar{\alpha} \sZ)]|^2}.
\end{aligned}
\end{equation*}
Further, $\hat{\bs}_1=\frac{\|\s\|}{\|\sg_1\|} f(\bD)^\mathsf{T}\sg_1$ and 
$\bar{\alpha}=\sqrt{\frac{\sigma_s^2\,\mathbb{E}[f^2(d)]}{\gamma}}$.

%\begin{table*}
%\begin{tabular}{|c|c|c|c|}
%$T_\s$&$\frac{\sg_1^\mathsf{H}\{C_1\cdot\bD\hat{\bs}_1+C_2\cdot\bD \bB(\hat{\bs}_1)\sz_2[2:N]\}}{\|\sg_1\|\|\s\|}-T_\sg\cdot\frac{(\bR(\bs)^{-1}\sg_2)[1]}{\|\bs\|}$&$\overline{C}_1\cdot\mathbb{E}[{d}\cdot f({d})]$&$\overline{T}_\bs$\\\
%$T_\sg$&$\frac{\|\bB(\sg_1)^\mathsf{H}\{C_1\cdot\bD \hat{\bs}_1+C_2\cdot\bD \bB(\hat{\bs}_1)\sz_2[2:N]\}\|}{\|(\bR(\bs)^{-1}\sg_2)[2:K]\|}$&$=\sqrt{\sigma_s^2\cdot|\overline{C}_1|^2\cdot \text{\normalfont{var}}[{d}\cdot f({d})]+\overline{C}_2^2}$&$\overline{T}_\sg$\\
%$C_1$&$\frac{\sz_1^\mathsf{H}q\left(\frac{\|\hat{\s}_1\|}{\|\sz_1\|}\cdot\sz_1\right)}{\|\hat{\bs}_1\|\|\sz_1\|}$&$\frac{\sz_1^\mathsf{H}q\left(\frac{\|\hat{\s}_1\|}{\|\sz_1\|}\cdot\sz_1\right)}{\|\hat{\bs}_1\|\|\sz_1\|}$&$\overline{C}_1$, 
%\end{tabular}
%\end{table*}
%with $\overline{C}_1= \frac{\mathbb{E}[\sZ^\dagger q(\bar{\alpha}\cdot \sZ)]}{\bar{\alpha}},$
%$\overline{C}_2=\sqrt{\mathbb{E}[|q(\bar{\alpha}\cdot \sZ)|^2]-|\mathbb{E}[\sZ^\dagger q(\bar{\alpha}\cdot \sZ)]|^2},$ and 
%$\bar{\alpha}=\sqrt{\frac{\sigma_s^2\cdot\mathbb{E}[f^2(d)]}{\gamma}}$.

% In Section \ref{appendix:convergence}, we prove the convergence of $T_\bs$ and $T_\sg$. Some auxiliary results used in the proof are given in Section \ref{Sec:aux}. 
%\subsection{Proof of the Convergence of $T_\bs$ and $T_\sg$}\label{appendix:convergence}

We start with analyzing the limiting values of ${C}_1$ and ${C}_2$. From {\color{black} the strong law of large numbers in } Lemma \ref{largenumber}, we have 
\begin{equation}\label{eq4}
\frac{\|\sz_1\|^2}{N}\xrightarrow{a.s.} 1,\quad\frac{\|\sg_1\|^2}{K}\xrightarrow{a.s.} 1,~\frac{\|\s\|^2}{K}\xrightarrow{a.s.} \sigma_s^2.
\end{equation}
Furthermore, applying Lemma \ref{calculatelemma} in Section \ref{Sec:aux} with $\sigma(x)=f^2(x)$ yields
\begin{equation}\label{eq5}
\frac{\|f(\bD)^\mathsf{T}\sg_1\|^2}{K}=\frac{\sg_1^\mathsf{H}f(\bD)f(\bD)^\mathsf{T}\sg_1}{K}\xrightarrow{a.s.} \mathbb{E}[f^2({d})].
\end{equation}
It follows immediately from  \eqref{eq4}, \eqref{eq5}, and Lemma \ref{as} that 
\begin{equation}\label{alpha_1}
\begin{aligned}
\alpha:=\frac{\|\hat{\s}_1\|}{\|\sz_1\|}=\frac{\|\s\|}{\|\sg_1\|}\frac{\|f(\bD)^\mathsf{T}\sg_1\|}{\|\sz_1\|}\xrightarrow{a.s.}\sqrt{\frac{\mathbb{E}[f^2(d)]\, \sigma_s^2}{\gamma}}=\bar{\alpha}.
\end{aligned}
\end{equation}
Note that $C_1$ can be expressed as 
$$C_1=\frac{\sz_1^\mathsf{H}q\left(\frac{\|\hat{\s}_1\|}{\|\sz_1\|}\sz_1\right)}{\|\hat{\bs}_1\|\|\sz_1\|}=\frac{\sz_1^\mathsf{H}q(\alpha \sz_1)}{\alpha \|\sz_1\|^2}.$$
{\color{black} According to} Lemma \ref{Lemma10} in Section \ref{Sec:aux} {\color{black}and noting that $\alpha\xrightarrow{a.s.} \bar{\alpha}$,} we have 
\begin{equation}\label{eq6}
\frac{\sz_1^\mathsf{H}q(\alpha \sz_1)}{N}\xrightarrow{a.s.}\mathbb{E}[\sZ^\dagger q(\bar{\alpha} \sZ)],
\end{equation}
where $\sZ\sim\mathcal{C}\mathcal{N}(0,1)$. %(a rigorous proof is provided at the end of this section in Lemma \ref{Lemma10}). 
This, together with \eqref{eq4}, \eqref{alpha_1}, and Lemma \ref{as}, proves the convergence of $C_1$:
$$C_1=\frac{\sz_1^\mathsf{H}q(\alpha \sz_1)}{\alpha \|\sz_1\|^2}\xrightarrow{a.s.}\frac{\mathbb{E}[\sZ^\dagger q(\bar{\alpha} \sZ)]}{\bar{\alpha}}=\overline{C}_1.$$
To show the convergence of 
$C_2=\frac{\|\bB(\sz_1)^\mathsf{H}q(\alpha \sz_1)\|}{\|\sz_2[2:N]\|}$, we {\color{black} first} note that
\begin{equation}\label{Eqn:C2_temp}
\begin{split}
&\frac{\left\|\bB(\sz_1)^\mathsf{H}q\left(\alpha\sz_1\right)\right\|^2}{N}\\
=\,&\frac{\left\|\left(
\begin{smallmatrix} \frac{\sz_1^\mathsf{H}}{\|\sz_1\|}\\\bB(\sz_1)^\mathsf{H}\end{smallmatrix}\right)q\left(\alpha\sz_1\right)\right\|^2-\left|\frac{\sz_1^\mathsf{H}q\left(\alpha\sz_1\right)}{\|\sz_1\|}\right|^2}{N}\\
=\,&\frac{\|q(\alpha\sz_1)\|^2}{N}-\left|\frac{\sz_1^\mathsf{H}q(\alpha\sz_1)}{N}\right|^2\frac{N}{\|\sz_1\|^2},
\end{split}
\end{equation}
where the second equality holds because  $\bR(\sz_1)^\mathsf{H}=\left(\begin{smallmatrix}\frac{\sz_1^\mathsf{H}}{\|\sz_1\|}\\\bB(\sz_1)^\mathsf{H}\\\end{smallmatrix}\right)$ and $\|\cdot\|$ is rotationally invariant.
Similar to Lemma \ref{Lemma10}, we can show that
\begin{equation*}\label{eq7}
\frac{\|q(\alpha\sz_1)\|^2}{N} \xrightarrow{a.s.} \mathbb{E}[|q(\bar{\alpha} \sZ)|^2].
\end{equation*}
Combining this with \eqref{eq4}, \eqref{eq6}, and \eqref{Eqn:C2_temp}, and noticing that $\frac{\|\sz_2[2:N]\|^2}{N}\xrightarrow{a.s.}1$, we have
\[
\begin{split}
C_2\xrightarrow{a.s.}\sqrt{\mathbb{E}[|q(\bar{\alpha} \sZ)|^2]-|\mathbb{E}[\sZ^\dagger q(\bar{\alpha} \sZ)]|^2}=\overline{C}_2.
\end{split}
\]

Next, we analyze the convergence of $T_\s$ and $T_\sg$.  We begin with $T_\s$: %Using the convergence of $C_1$ and $C_2$, we have 
$$
\begin{aligned}
T_\bs=\,&\frac{\sg_1^\mathsf{H}\{C_1\bD\hat{\bs}_1+C_2\bD \bB(\hat{\bs}_1)\sz_2[2:N]\}}{\|\sg_1\|\|\s\|}-T_\sg\,\frac{(\bR(\s)^{-1}\sg_2)[1]}{\|\s\|}\\
=\,&C_1\,\frac{\sg_1^\mathsf{H}\bD f(\bD)^\mathsf{T}\sg_1}{\|\sg_1\|^2}+C_2\,\frac{\sg_1^\mathsf{H}\bD \bB(\hat{\bs}_1)\sz_2[2:N]}{\|\sg_1\|\|\s\|}-T_\sg\,\frac{(\bR(\s)^{-1}\sg_2)[1]}{\|\s\|}\\
%=&\overline{C}_1\cdot\frac{\sg_1^\mathsf{H}\bD f(\bD)^\mathsf{T}\sg_1}{\|\sg_1\|^2}+\overline{C}_2\cdot\frac{\sg_1^\mathsf{H}\bD \bB(\hat{\bs}_1)\sz_2[2:N]}{\|\sg_1\|\|\s\|}-0\\
:=\,&T_{11}+T_{12}+T_{13},
\end{aligned}$$
where the second equality uses the definition of $\hat{\s}_1$ in \eqref{Equiy_2}.
In what follows, we will show
\begin{subequations}
\begin{align}
T_{11} & \xrightarrow{a.s.} \overline{C}_1\,\mathbb{E}\left[{d}\, f({d})\right] \label{t11},\\
T_{12}& \xrightarrow{a.s.}0,\label{t12}\\
T_{13}& \xrightarrow{a.s.}0.\label{t13}
\end{align}
\end{subequations}
Substituting $\sigma(x)=xf(x)$ into Lemma \ref{calculatelemma}, we have   $$\dfrac{\sg_1^\mathsf{H}\bD f(\bD)^\mathsf{T}\sg_1}{K}\xrightarrow{a.s.} \mathbb{E}\left[{d}\, f({d})\right],$$ which, together with $\frac{\|\sg_1\|^2}{K}\xrightarrow{a.s.} 1$ and $C_1\xrightarrow{a.s.}\overline{C}_1$, implies \eqref{t11}. 
We next show that $T_{12}$ vanishes asymptotically. Recalling the definition $\bR(\hat{\bs}_1)=\left(\frac{\hat{\bs}_1}{\|\hat{\bs}_1\|} ~~\bB(\hat{\bs}_1)\right)$, we then have 
\begin{equation}\label{Rs1z2}
\bB(\hat{\s}_1)\sz_2[2:N]=\bR(\hat{\s}_1)\sz_2-\sz_2[1]\,\frac{\hat{\s}_1}{\|\hat{\s}_1\|}.
\end{equation}
Using \eqref{Rs1z2} and the definition of $\hat{\s}_1$ in \eqref{Equiy_2}, we can upper bound  $T_{12}$ by
\begin{equation*}
\begin{aligned}
|T_{12}|&={C}_2\,\left|\frac{\sg_1^\mathsf{H}\bD \bB(\hat{\bs}_1)\sz_2[2:N]}{\|\sg_1\|\|\s\|}\right|\\ &\leq {C}_2\,\left|\frac{\sg_1^\mathsf{H}\bD\bR(\hat{\s}_1)\sz_2}{\|\sg_1\|\|\s\|}\right|+{C}_2\,\left|\frac{ \sz_2[1]\,\sg_1^\mathsf{H}\bD\frac{\hat{\bs}_1}{\|\hat{\bs}_1\|}}{\|\sg_1\|\|\s\|}\right|\\
&={C}_2\,\left|\frac{\sg_1^\mathsf{H}\bD\bR(\hat{\s}_1)\sz_2}{\|\sg_1\|\|\s\|}\right|+{C}_2\,\left|\sz_2[1]\,\frac{\sg_1^\mathsf{H}\bD f(\bD)^\mathsf{T}\sg_1}{\|f(\bD)^\mathsf{T}\sg_1\|\|\sg_1\|\|\s\|}\right|\\
&\leq{C}_2 \left|\frac{\sg_1^\mathsf{H}\bD\bR(\hat{\s}_1)\sz_2}{\|\sg_1\|\|\s\|}\right|+{C}_2\left|\frac{\sz_2[1]}{\|\s\|}\,\|\bD\|\right|.
\end{aligned}
\end{equation*}
 Since both $\hat{\s}_1$ and  $\sg_1$ are independent of $\sz_2$, $\bR(\hat{\s}_1)\sz_2\overset{d}{=}\sz_2$ and $\bR(\hat{\s}_1)\sz_2$ is independent of $\sg_1$. It follows immediately from \eqref{eq4} and  Lemmas \ref{as} and \ref{calculatelemma} that
$$\frac{\sg_1^\mathsf{H}\bD\bR(\hat{\s}_1)\sz_2}{\|\sg_1\|\|\s\|}\xrightarrow{a.s.} 0.$$
In addition, since $\frac{\sz_2[1]}{\|\s\|}\xrightarrow{a.s.} 0$ and $\|\bD\|\xrightarrow{a.s.} 1+\sqrt{1/\gamma}$ {\color{black}(see (ii) of Lemma \ref{MP})}, we have $\left|\frac{\sz_2[1]}{\|\s\|}\,\|\bD\|\right|\xrightarrow{a.s.} 0.$ Noting that $C_2\xrightarrow{a.s.}\overline{C}_2$, we further have \eqref{t12}.
Finally, since $\frac{\bR(\s)^{-1}\sg_2[1]}{\|\s\|}\xrightarrow{a.s.}0$ and $T_\sg$ converges almost surely to a constant (which will be shown in the sequel), we have \eqref{t13}.
Combining the above discussions, we get $$T_\bs=T_{11}+T_{12}+T_{13}\xrightarrow{a.s.}\overline{C}_1\,\mathbb{E}\left[{d}\, f({d})\right].$$

We next show the almost sure convergence of $T_\sg$ to a constant.  Note that $\bR(\bs)^{-1}\sg_2\overset{d}{=}\sg_2$ due to the independence between  $\s$ and $\sg_2$. Hence the following holds for the denominator of $T_\sg$:
\begin{equation*}\label{eqn:Tg1}
\frac{\|\left(\bR(\bs)^{-1}\sg_2\right)[2:K]\|}{\sqrt{K}}\xrightarrow{a.s.}1.
\end{equation*}
For the numerator, using \eqref{Rs1z2} and the definition of $\hat{\s}_1$ in \eqref{Equiy_2},  we  have 
\begin{equation*}\label{eqn:Tg2}
\begin{aligned}
&\left|\frac{\|\bB(\sg_1)^\mathsf{H}\{{C}_1\bD\hat{\bs}_1+{C}_2\bD \bB(\hat{\bs}_1)\sz_2[2:N]\}\|}{\sqrt{K}}\right.
-\left.\frac{\|\bB(\sg_1)^\mathsf{H}\{{C}_1\bD\hat{\bs}_1+{C}_2\bD\bR(\hat{\s}_1)\sz_2\}\|}{\sqrt{K}}\right|\\
\leq \,&{C}_2\,\frac{\| \sz_2[1]\,\bB(\sg_1)^\mathsf{H}\bD\frac{\hat{\bs}_1}{\|\hat{\bs}_1\|}\|}{\sqrt{K}}\\
=\,&{C}_2 \,\frac{|\sz_2[1]|\, \|\bB(\sg_1)^\mathsf{H}\bD f(\bD)^\mathsf{T}\sg_1\|}{\sqrt{K}\,\|f(\bD)^\mathsf{T}\sg_1\|}\\
\leq\,& {C}_2\, \frac{|\sz_2[1]|\|\bB(\sg_1)\|\|\bD\|}{\sqrt{K}}\xrightarrow{a.s.} 0,
\end{aligned}
\end{equation*}
where the almost sure convergence uses the facts that $\|\bD\|\xrightarrow{a.s.}1+\sqrt{1/\gamma}$, $\|\bB(\sg_1)\|\leq 1$, and $C_2\xrightarrow{a.s.} \overline{C}_2$.
Furthermore, 
 \begin{equation}\label{T2}
\begin{aligned}
&\dfrac{\|\bB(\sg_1)^\mathsf{H}\{{C}_1\bD\hat{\bs}_1+{C}_2\bD\bR(\hat{\s}_1)\sz_2\}\|^2}{K}\\
\overset{d}{=}\,&\dfrac{\|\bB(\sg_1)^\mathsf{H}\{{C}_1\bD\hat{\bs}_1+{C}_2\bD\sz_2\}\|^2}{K}\\
=\,&\dfrac{\left\|\bR(\sg_1)^\mathsf{H}\{{C}_1\bD\hat{\bs}_1+{C}_2\bD\sz_2\}\right\|^2}{K}-\dfrac{\left\|\frac{\sg_1^\mathsf{H}}{\|\sg_1\|}\{{C}_1\bD\hat{\bs}_1+{C}_2\bD\sz_2\}\right\|^2}{K}\\
=\,&\frac{\|{C}_1\frac{\|\s\|}{\|\sg_1\|}\bD f(\bD)^\mathsf{T}\sg_1+{C}_2 \bD\sz_2\|^2}{K}-\frac{1}{K}\,\left|\frac{\sg_1^\mathsf{H}\{{C}_1\frac{\|\s\|}{\|\sg_1\|}\bD f(\bD)^\mathsf{T}\sg_1+{C}_2 \bD\sz_2\}}{\|\sg_1\|}\right|^2\\
:=\,&T_{21}-T_{22},
\end{aligned}
\end{equation}
where the first equality holds since $\sz_2$ is independent of $\sg_1,\bD,$ and $\hat{\s}_1$; the second equality is due to {\color{black} the definition of $\bB(\cdot)$ in} \eqref{B}; the third equality uses the definition of $\hat{\s}_1$ in \eqref{Equiy_2} and the rotational invariance of $\|\cdot\|$. The first term $T_{21}$ in \eqref{T2} can be expressed as  
$$
\begin{aligned}
T_{21}=&|{C}_1|^2\frac{\|\s\|^2}{\|\sg_1\|^2}\frac{\sg_1^\mathsf{H}f(\bD)\bD^\mathsf{T}\bD f(\bD)^\mathsf{T}\sg_1}{K}
+{C}_2^2\, \frac{\sz_2^\mathsf{H}\bD^\mathsf{T}\bD\sz_2}{K}+2\frac{\|\s\|}{\|\sg_1\|}\frac{\mathcal{R}\left(C_1C_2\sz_2^\mathsf{H}\bD^\mathsf{T}\bD f(\bD)^\mathsf{T}\sg_1\right)}{K}.\\
\end{aligned}
$$
Applying Lemma \ref{calculatelemma} with $\sigma(x)=x^2 f^2(x),$ $\sigma(x)=x^2$, and $\sigma(x)=x^2 f(x)$, and using the almost sure convergence of $C_1$ and $C_2$, we have
\begin{equation*}\label{T21}
T_{21}\xrightarrow{a.s.}\sigma_s^2\, |\overline{C}_1|^2\,\mathbb{E}[d^2 f^2({d})]+\overline{C}_2^2\, \mathbb{E}[d^2].\end{equation*}
Similarly, for the second term $T_{22}$ in \eqref{T2}, we have 
\begin{equation*}\label{T22}
\begin{aligned}
T_{22}&=\left|{C}_1\frac{\sqrt{K}\|\s\|}{\|\sg_1\|^2}\,\frac{\sg_1^\mathsf{H}\bD f(\bD)^\mathsf{T}\sg_1}{K}\right.+\left.{C}_2\frac{\sqrt{K}}{\|\sg_1\|}\,\frac{\sg_1^\mathsf{H}\bD \sz_2}{K}\right|^2\\
&\xrightarrow{a.s.} \sigma_s^2\,|\overline{C}_1|^2\,\mathbb{E}^2[{d}\, f({d})].
\end{aligned}
\end{equation*}
Combining the above discussions and noting that $\mathbb{E}[d^2]=1$ \cite{tulino2004random},  we can conclude that
$$T_\sg\xrightarrow{a.s.}\sqrt{\sigma_s^2\,|\overline{C}_1|^2\,\text{var}[d\, f({d})]+\overline{C}_2^2},$$
which completes our proof.

%=================
\section{Proof of Theorem \ref{corollary:sinrsep}}\label{proof:corollary1}

In this section, we give the proof of Theorem  \ref{corollary:sinrsep}. We first prove the convergence of $\widehat{\text{SINR}}_k$ and then show the convergence of $\widehat{\text{SEP}}_k(\beta)$.
\subsection{Proof of Convergence of $\widehat{\text{SINR}}_k$}
Note that $\mathbb{E}[|s_k|^2]=\sigma_s^2$ is a constant (see Assumption \ref{Ass:distribution}).  If we can show that $\mathbb{E}[|\hat{y}_k|^2]\rightarrow\mathbb{E}[|\bar{y}_k|^2]$ and $\mathbb{E}[s_k^\dagger\hat{y}_k]\rightarrow\mathbb{E}[s_k^\dagger\bar{y}_k]$, then according to the definition of the SINR in \eqref{def:sinr}, we have 
$$\widehat{\text{SINR}}_k\rightarrow\overline{\text{SINR}}.$$
In the following, we will only prove the convergence of $\mathbb{E}[|\hat{y}_k|^2]$, and the convergence of $\mathbb{E}[s_k^\dagger\hat{y}_k]$ can be shown similarly. 
%According to the double expectation formula,  we can express $\mathbb{E}[\hat{y}_k]$ as 
%$$\mathbb{E}[\hat{y}_k]=\mathbb{E}_{s_k}[\mathbb{E}[\hat{y}_k|s_k]]=\frac{1}{M}\sum_{m=1}^M\mathbb{E}[\hat{y}_k|s_k=s^{(m)}],$$
%where the last equality holds since $s_k$ is i.i.d. drawn from $\mathcal{S}_M:=\{s^{(1)},\dots, s^{(m)}\}$.
%Similarly, $$\mathbb{E}[\bar{y}]=\mathbb{E}_{\bar{s}}[\mathbb{E}[\bar{y}|\bar{s}]]=\frac{1}{M}\sum_{m=1}^M\mathbb{E}[\bar{y}|\bar{s}=s^{(m)}].$$ Next we will show that for all $m\in\{1,2,\dots, M\}$,   $\mathbb{E}[{y}_k|{s}_k=s^{(m)}]\xrightarrow{a.s.}\mathbb{E}[\bar{y}|\bar{s}=s^{(m)}]$, which further gives $\mathbb{E}[\hat{y}_k]\xrightarrow{a.s.}\mathbb{E}[\bar{y}]$. 

Firstly, it follows from Theorem \ref{asy} that $\hat{y}_k\xrightarrow{a.s.}\bar{y}_k$, and thus   $$|\hat{y}_k|^2\xrightarrow{a.s.}|\bar{y}_k|^2.$$ To show the convergence of expectation, we only need to prove that 
$\{|\hat{y}_k|^2\}$ is uniformly integrable \cite{probability}. {\color{black}Note that for a sequence of random variables $\{X_N, N\geq 1\}$,  the boundedness of $\mathbb{E}[|X_N|^2]$  can imply uniform integrability of $\{X_N\}$ (see \cite[Section 9.5]{probability} for details about uniform integrability of random variables).  Therefore, it suffices to prove that the sequence 
$\{\mathbb{E}[|y_k|^4\}$ is bounded. According to \eqref{Equiy}, we have }
\begin{equation}\label{convergesnr1}
\begin{aligned}
|\hat{y}_k|^4&=\left|\eta T_\bs  s_k+\eta T_\sg \sg_2[k]+n_k\right|^4\\
&\leq 27|\eta T_\bs s_k|^4+27|\eta T_\sg\sg_2[k]|^4+27|n_k|^4\\
&\leq \frac{27}{2}(\eta^8|T_\bs|^8+|s_k|^8+\eta^8|T_\sg|^8+|\sg_2[k]|^8+2|n_k|^4),\\
\end{aligned}
\end{equation}
where the first inequality holds since $|a+b+c|^2\leq 3|a|^2+3|b|^2+3|c|^2$.
From Theorem \ref{asy} and Lemma \ref{as}, we have 
$$|T_\s|^8\xrightarrow{a.s.} |\overline{T}_\s|^8\quad\text{and}\quad |{T}_\sg|^8\xrightarrow{a.s.}|\overline{T}_\sg|^8,$$
where $\overline{T}_\s$ and $\overline{T}_\sg$ are both constants, and thus
\begin{equation}\label{convergesnr2}
\mathbb{E}[|T_\s|^8]\rightarrow|\overline{T}_\s|^8\quad\text{and}\quad \mathbb{E}[|\overline{T}_\sg|^8]\rightarrow|\overline{T}_\sg|^8.
\end{equation}
Taking expectations over both the left-hand and the right-hand sides of \eqref{convergesnr1} and using \eqref{convergesnr2}, we can conclude that 
\begin{equation*}\label{convergesnr3}
\sup_k\mathbb{E}[|\hat{y}_k|^4]<+\infty,
\end{equation*}
%Therefore, it follows from \eqref{convergesnr3}  that $\{|\hat{y}_k|^2\}$ is uniformly integrable
which completes the proof.
 
\subsection{Proof of Convergence of $\widehat{\text{SEP}}_k(\beta)$}
Given a constellation symbol $s$, we use $\mathcal{D}_s$ to denote its decision region, i.e., the region within which $s$ will be recovered. With this notation, $\widehat{\text{\normalfont{SEP}}}_k(\beta)$ can be expressed as
\begin{equation}\label{Eqn:SEP_expression}
\begin{aligned}
\widehat{\text{\normalfont{SEP}}}_k(\beta)=&\,1-\mathbb{P}\left(\beta\hat{y}_k\in \mathcal{D}_{s_k}\right)\\
=&\,1-\sum_{m=1}^M\mathbb{P}\left(s_k=s^{(m)}\right)\mathbb{P}\left(\beta\hat{y}_k\in \mathcal{D}_{s_k}\mid s_k=s^{(m)}\right)\\
=&\,1-\frac{1}{M}\sum_{m=1}^M\mathbb{P}\left(\beta\hat{y}_k \in \mathcal{D}_{s_{k}}\mid s_k=s^{(m)}\right),\\
%=&\frac{1}{M}\sum_{m=1}^M\mathbb{P}\left(g(\hat{r}_k|s^{(m)})>0\mid s_k=s^{(m)}\right),
\end{aligned}
\end{equation}
where the third equality holds since $s_k$ is uniformly drawn from $\mathcal{S}_M$.
Similarly, 
$$\overline{\text{\normalfont{SEP}}}_k(\beta)=1-\frac{1}{M}\sum_{m=1}^M\mathbb{P}\left(\beta\bar{y}_k\in \mathcal{D}_{s_k}\mid s_k=s^{(m)}\right).
$$ 
From Theorem \ref{asy}, we have  $(s_k,\hat{y}_k)\xrightarrow{a.s.}(s_k,\bar{y}_k)$. Hence, the joint distribution of $(s_k,\hat{y}_k)$ converges weakly to that of $(s_k,\bar{y}_k)$.
 If we can further show 
\begin{equation}\label{sepprove1}\mathbb{P}\left(\beta\bar{y}_k\in\delta \mathcal{D}_{s_k}\mid s_k=s^{(m)}\right)=0,
\end{equation} where $\delta \mathcal{D}_{s_k}$ denotes the boundary of $\mathcal{D}_{s_k}$, then according to  \cite[Lemma 2.2]{statistics},
$$ \mathbb{P}\left(\beta\hat{y}_k \in \mathcal{D}_{s_{k}}\mid s_k=s^{(m)}\right)\rightarrow\mathbb{P}\left(\beta\bar{y}_k \in \mathcal{D}_{s_k}\mid s_k=s^{(m)}\right),$$ and hence $\widehat{\text{\normalfont{SEP}}}_k(\beta)\rightarrow \overline{\text{\normalfont{SEP}}}_k(\beta)=\overline{\text{SEP}}(\beta)$. 

We next show \eqref{sepprove1}. When nearest-neighbor decoding is employed as assumed in this paper, the decision region of $s$ can be expressed as\footnote{A minor problem is that the decision regions given here are all open sets and cannot cover the whole complex space. This is not an essential problem: if $r$ lies on the boundary of two decision regions, we can just randomly choose one of the corresponding constellation point as the recovered symbol.}
$$\mathcal{D}_s=\left\{r\mid |r-s|^2< \left|r-s^{(i)}\right|^2,~ \forall \, s^{(i)}\in\mathcal{S}_M,~s^{(i)}\neq s\right\},$$
or equivalently,
\begin{equation}\label{eqn:Ds}
\begin{aligned}
\mathcal{D}_s=\left\{r\mid2\RR((s^{(i)}-s)^\dagger r)+|s|^2-|s^{(i)}|^2<0, ~\forall\, s^{(i)}\in\mathcal{S}_M,~s^{(i)}\neq s\right\}.
\end{aligned}\end{equation}
It follows that the boundary of $\mathcal{D}_s$ can be expressed as  
\begin{equation*}\label{boundary}\delta \mathcal{D}_s=\left\{r\mid\max_{i, s^{(i)}\neq s}\left\{2\RR((s^{(i)}-s)^\dagger r)+|s|^2-|s^{(i)}|^2\right\}= 0\right\}.
\end{equation*}
Recall that $\beta\bar{y}_k=\beta(\eta\overline{T}_\s\, s+\eta\overline{T}_\sg\, \sg_2[k]+n_k)$. Then $$\begin{aligned}
&\mathbb{P}\left(\beta\bar{y}_k\in\delta \mathcal{D}_{s_k}\mid s_k=s^{(m)}\right)\\
=\,&\mathbb{P}\left(\beta\left(\eta\overline{T}_\s\, s^{(m)}+\eta\overline{T}_\sg\, \sg_2[k]+n_k\right)\in\delta \mathcal{D}_{s^{(m)}}\right)\\
=\,&\mathbb{P}\left(\max_{i\neq m}\left\{\RR\left(a_{m,i}^\dagger(\eta\overline{T}_\sg\, \sg_2[k]+n_k)\right)+C_{m,i}\right\}=0\right),
\end{aligned}
$$
where $a_{m,i}=2\beta(s^{(i)}-s^{(m)})$ and $C_{m,i}=2\RR(\eta\overline{T}_\s a_{m,i}^\dagger  s^{(m)})+|s^{(m)}|^2-|s^{(i)}|^2$ are both constants. 
The last probability can further be upper bounded as $$
\begin{aligned}
&\mathbb{P}\left(\max_{i\neq m}\left\{\RR\left(a_{m,i}^\dagger(\eta\overline{T}_\sg\, \sg_2[k]+n_k)\right)+C_{m,i}\right\}=0\right)
\leq \sum_{i\neq m}\mathbb{P}\left(\RR\left(a_{m,i}^\dagger(\eta\overline{T}_\sg\, \sg_2[k]+n_k)\right)+C_{m,i}=0\right).
\end{aligned}$$
Since $\eta\overline{T}_\sg\, \sg_2[k]+n_k$ is Gaussian,   each term of the right-hand side of the above inequality is $0$,  which further implies that $\mathbb{P}\left(\beta\bar{y}_k\in\delta D_{s_k}\mid s_k=s^{(m)}\right)=0$. This completes our proof. 

%When $\s$ is drawn from $M$-PSK constellation, the SEP can be expressed as 
%$$
%\begin{aligned}
%\widehat{\text{SEP}}_k&=\mathbb{P}\left(-\frac{\pi}{M}<\arg \hat{\y}[k]-\arg\s[k]\leq\frac{\pi}{M}\right)
%\end{aligned}$$
%and 
%$$
%\begin{aligned}
%\overline{\text{SEP}}_k
%&=\mathbb{P}\left(-\frac{\pi}{M}<\arg \bar{\y}[k]-\arg\s[k]\leq\frac{\pi}{M}\right).
%\end{aligned}$$
%Notice that $\arg(\cdot)$ is a continuous function on $\C\backslash[0,+\infty)$ and $\mathbb{P}\left(\bar{\y}[k]\in[0,+\infty)\right)=0$, which, together with the fact that $\hat{\y}[k]\xrightarrow{a.s.}\bar{\y}[k]$, implies that $$\arg\hat{\y}[k]\xrightarrow{a.s.}\arg\bar{\y}[k],$$ and further 
%$$\arg\hat{\y}[k]-\arg\s[k]\xrightarrow{a.s.}\arg\bar{\y}[k]-\arg\s[k].$$
%The desired result can then be obtained using the fact that  almost sure convergence implies convergence in distribution.

\section{Equivalence of maximizing the Asymptotic SINR and {\color{black}minimizing} the Asymptotic SEP}\label{App:SINR_SEP}
For the scalar asymptotic model \eqref{scalarmodel} with precoding factor $\beta$ in \eqref{factor}, the received signal is 
\begin{equation}\label{eqn:r=s+n}
\beta\bar{y}=s+\frac{\overline{T}_\s^\dagger}{\eta|\overline{T}_\s|^2}(\eta\overline{T}_\sg\, \sg+n)\triangleq s+\bar{n},
\end{equation}
where $\bar{n}\sim\mathcal{C}\mathcal{N}\left(0,\frac{\eta^2\overline{T}_\sg^2+\sigma^2}{\eta^2|\overline{T}_\s|^2}\right)$ since $\sg\sim\mathcal{C}\mathcal{N}(0,1)$ and $n\sim\mathcal{C}\mathcal{N}(0,\sigma^2)$ are independent.
Following \eqref{Eqn:SEP_expression} in Appendix \ref{proof:corollary1}, we can express the asymptotic SEP as
$$\begin{aligned}
\overline{\text{\normalfont{SEP}}}=1-\frac{1}{M}\sum_{m=1}^M\mathbb{P}\left(\beta\bar{y}\in D_{s}\mid s=s^{(m)}\right).
\end{aligned}
$$
Each probability in the above summation can further be expressed as 
$$
\begin{aligned}
\mathbb{P}\left(\beta\bar{y}\in D_{s}\mid s=s^{(m)}\right)&=\mathbb{P}\left(\bar{n}\in D_{s^{(m)}}-s^{(m)}\right)\\
&=\mathbb{P}\left(\sZ\in \frac{\overline{\text{SINR}}}{\sigma_s^2} \left(D_{s^{(m)}}-s^{(m)}\right)\right),
%=&\mathbb{P}\left(2\RR\left(\frac{(s^{(i)}-s^{(m)})^\dagger \bar{n}}{|s^{(i)}-s^{(m)}|}\right)\leq \left|s^{(i)}-s^{(m)}\right|,~ \forall i\neq m\right)\\
%=&\mathbb{P}\left(2\RR(\bar{n})<\min_{i\neq m} \left|s^{(i)}-s^{(m)}\right|\right)\\
%=&1-Q\left(\sqrt{\frac{\overline{\text{SINR}}}{2\sigma_s^2}}\cdot \min_{i\neq m} \left|s^{(i)}-s^{(m)}\right|\right),
\end{aligned}$$
where we have rewritten $\bar{n}$ as $\bar{n}= \frac{{{\sigma_s^2 }}}{\overline{\text{SINR}}}\, \sZ$ with $\sZ\sim\mathcal{C}\mathcal{N}(0,1)$.
From \eqref{eqn:Ds}, it is easy to check that $D_{s^{(m)}}-s^{(m)}$ is a polyhedron containing $0$, and thus the above probability is increasing  in  $\overline{\text{SINR}}$, i.e., $\overline{\text{SEP}}$ is a decreasing function of $\overline{\text{SINR}}$.
%where the last equality holds since  $2\RR(\bar{n})\sim\mathcal{N}\left(0,\frac{2(\overline{T}_\sg^2+\sigma^2)}{\overline{T}_s^2}\right)$, i.e., $2\RR(\bar{n})\sim\mathcal{N}\left(0,\frac{2\sigma_s^2}{\overline{\text{SINR}}}\right).$
\section{Proof of Theorem \ref{optimalprecoder}}\label{proof:Th4}
{\color{black}In this section, we give the proof of Theorem \ref{optimalprecoder}, i.e., we prove that $(\bar{\alpha}^*, \eta^*, f^*)$ given in \eqref{Eqn:theorem_all} is  the optimal solution to the following problem:
\begin{equation}\label{prob:maxsinr2}
\begin{aligned}
\zeta^*:=\sup_{f,\eta>0,\bar{\alpha}>0}~&\frac{\mathbb{E}^2[d\, f(d)]}{\text{var}[d\, f(d)]+\phi(\bar{\alpha},\eta)\, \frac{\mathbb{E}[f^2(d)]}{\gamma} }\\
\text{s.t. }~~~~
&\eta^2\,\mathbb{E}[|q(\bar{\alpha} \sZ)|^2]\leq P_T,\\
%&\bar{\alpha}=\sqrt{\frac{\sigma_s^2\cdot \mathbb{E}[f^2(d)]}{\gamma}},\\
&\mathbb{E}[f^2(d)]=\frac{\gamma}{\sigma_s^2}\, \bar{\alpha}^2.\\
%&\bar{\alpha}\leq P_I,
\end{aligned}
\end{equation} }

We first note that $\phi^\ast=\infty$ (see the definition in \eqref{alpha}) is a pathological case where the SINR in \eqref{prob:maxsinr2} is identically zero\footnote{This is not the case for quantization functions used in practice.} and any $(\bar{\alpha},\eta,f)$ is optimal. This happens, e.g., when $q(\cdot)$ is a constant function. In the rest of the proof, we assume $\phi^\ast<\infty$. %It is also straightforward to show that $\phi^\ast\neq0$ and $\bar{\alpha}^\ast\neq0$, and thus $\tau^\ast$, $\eta^\ast$ and $f^\ast$ are all well defined.

It is convenient to work with the inverse of the objective function in \eqref{prob:maxsinr2} and consider the following equivalent problem:
\begin{equation}\label{prob3}
\begin{aligned}
\Phi^*:=\inf_{\eta>0,\bar{\alpha}>0,f}\ &\frac{\mathbb{E}[d^2 f^2(d)]+\frac{\phi(\bar{\alpha},\eta)}{\gamma}\,\mathbb{E}[f^2(d)]}{\mathbb{E}^2[{d}\, f({d})]}-1\\[3pt]
\text{s.t. }\quad
&\eta^2\,\mathbb{E}[|q(\bar{\alpha} \sZ)|^2]\leq P_T,\\[3pt]
&\mathbb{E}[f^2(d)]=\frac{\gamma}{\sigma_s^2}\, \bar{\alpha}^2.
\end{aligned}
\end{equation}
We first ignore the constraints. By the Cauchy-Swarchz inequality, we can upper bound the denominator of the objective function of \eqref{prob3} by
$$\mathbb{E}^2[d\, f(d)]\leq \mathbb{E}\left[\left(d^2+\frac{\phi(\bar{\alpha},\eta)}{\gamma}\right) f^2(d)\right]\mathbb{E}\left[\frac{d^2}{d^2+\frac{\phi(\bar{\alpha},\eta)}{\gamma}}\right],$$
and thus 
\begin{equation*}\label{Eqn:Cauchy}
\begin{split}
\frac{\mathbb{E}[d^2  f^2(d)]+\frac{\phi(\bar{\alpha},\eta)}{\gamma}\,\mathbb{E}[f^2(d)]}{\mathbb{E}^2[{d}\, f({d})]} &=\frac{\mathbb{E}\left[\left(d^2+\frac{\phi(\bar{\alpha},\eta)}{\gamma}\right) f^2(d)\right]}{\mathbb{E}^2[d\, f(d)]}\\
&\ge \frac{1}{\mathbb{E}\left[\frac{d^2}{d^2+\phi(\bar{\alpha},\eta)/\gamma}\right]},
\end{split}
\end{equation*}
where the inequality holds with {\red equality}  when
\[
f(x):=\frac{x}{\tau \left(x^2+\frac{\phi(\bar{\alpha},\eta)}{\gamma}\right)},\quad\forall\,\tau\neq0.
\]
By the above inequality, the constrained infimum in \eqref{prob3} is further lower bounded by
\begin{equation}\label{Eqn:lower_bound}
\begin{split}
\Phi^*\ge\inf_{\eta>0, \bar{\alpha}> 0}\quad & \frac{1}{\mathbb{E}\left[\frac{d^2}{d^2+\phi(\bar{\alpha},\eta)/\gamma}\right]}-1\\[3pt]
\text{s.t. }\quad~~ & \eta^2\,\mathbb{E}[|q(\bar{\alpha} \sZ)|^2]\leq P_T,\\[3pt]
&\mathbb{E}[f^2(d)]=\frac{\gamma}{\sigma_s^2}\, \bar{\alpha}^2.
\end{split}
\end{equation}
Note that the objective function of \eqref{Eqn:lower_bound} is independent of $f(\cdot)$, and thus the second constraint can be removed. In addition, $\mathbb{E}[d^2/(d^2+x)]$ is decreasing in $x\in[0,\infty)$ and it is straightforward to check that $\phi(\bar{\alpha},\eta)\ge 0$ (see \eqref{Eqn:phi_def}).
Hence, the optimization problem in the right-hand side of \eqref{Eqn:lower_bound} is equivalent  to
\begin{equation}\label{Eqn:lower_bound2}
\begin{split}
\inf_{\eta>0, \bar{\alpha}> 0}\quad &\phi(\bar{\alpha},\eta)\\[3pt]
\text{s.t. }\quad~~ & \eta^2\,\mathbb{E}[|q(\bar{\alpha} \sZ)|^2]\leq P_T.\\[3pt]
\end{split}
\end{equation}
%Let $\bar{\alpha}^{*}_{\text{low}}$ be any minimizer of \eqref{Eqn:lower_bound}. 
Since $\phi(\bar{\alpha},\eta)$ is decreasing in $\eta$ (see \eqref{Eqn:phi_def}), the power constraint in \eqref{Eqn:lower_bound2} must be satisfied with equality at the infimum, namely, 
 \begin{equation}\label{equality: eta}
 \eta=\sqrt{\frac{P_T}{\mathbb{E}[|q(\bar{\alpha} \sZ)|^2]}}.
 \end{equation}
Substituting \eqref{equality: eta} into \eqref{Eqn:lower_bound2} yields the optimization problem in \eqref{alpha}, whose optimal value is given by $\phi^*$. Hence, 
\[
\zeta^*=\frac{1}{\Phi^\ast}\le\frac{1}{1-\mathbb{E}\left[\frac{d^2}{d^2+\frac{\phi^{*}}{\gamma}}\right]}-1.
\]
The proof is complete by further verifying that  $(\bar{\alpha}^\ast,\eta^\ast,f^\ast)$ is feasible {\red for} \eqref{prob3} and $\Phi^*$ is attained at $(\bar{\alpha}^\ast,\eta^\ast,f^\ast)$.
%
%the infimum on the right hand side of \eqref{Eqn:lower_bound} is exactly
%  $$\displaystyle\frac{1}{\mathbb{E}\left[\frac{d^2}{d^2+\phi^*/\gamma}\right]}-1=\frac{1}{\zeta_0},$$ i.e.,
%the optimal value of problem \eqref{prob:maxsinr} is upper bounded by $\zeta_0$.
%  
%  It is straightforward to verify that $(\bar{\alpha}_M^*,\eta(\bar{\alpha}_M^*), f_M)$ is feasible to problem \eqref{prob:maxsinr} with an additional constraint $\bar{\alpha}\leq M$, where the scaling $\tau(\bar{\alpha}_M^*)$ given in \eqref{tau} guarantees that the last constraint in \eqref{prob:maxsinr} is satisfied, 
% and  the objective value is $$\zeta_M=\left({1-\mathbb{E}\left[\frac{d^2}{d^2+\frac{\phi_M}{\gamma}}\right]}\right)^{-1}-1.$$
%In addition, similar to the above proof, we can show that the optimal value of problem \eqref{prob:maxsinr} with an extra constraint $\bar{\alpha}\leq M$ is upper bounded by $\zeta_M$. This  implies the optimality of $(\bar{\alpha}_M^*,\eta(\bar{\alpha}_M^*), f_M)$. 
%Due to the definition of $\phi_M$, we immediately have $\phi_M\rightarrow \phi^*$. Then according to the dominated convergence theorem,  we can conclude that 
% $$\left({1-\mathbb{E}\left[\frac{d^2}{d^2+\frac{\phi_M}{\gamma}}\right]}\right)^{-1}\rightarrow \left({1-\mathbb{E}\left[\frac{d^2}{d^2+\frac{\phi^*}{\gamma}}\right]}\right)^{-1},$$
% i.e., $\zeta_M\rightarrow\zeta_0$. Since $\zeta_0$ is an upper bound of the optimal value of problem \eqref{prob:maxsinr}, we have $\zeta^*=\zeta_0$.
%
\section*{Acknowledgment}
The authors would like to thank Professor Yue M. Lu from Harvard University {\color{black}and the anonymous reviewers for their} helpful comments on the paper. 
\bibliographystyle{IEEEtran}
\bibliography{IEEEabrv,reference_dce}
\end{document}